\newtheorem{theorem}{Theorem}[section]
\newtheorem{lemma}{Lemma}
\newtheorem{corollary}{Corollary}
\newtheorem{proposition}{Proposition}
\newtheorem{definition}{Definition}
\newtheorem{remark}{Remark}
\newtheorem{acknowledgements}{Acknowledgements}
\newcommand{\C}{\mathbb{C}ov}
\newcommand{\E}{\mathbb{E}}
\newcommand{\V}{\mathbb{V}}
\newcommand{\R}{\mathbb{R}}
\newcommand{\p}{\partial}
\newcommand{\y}{\mathbf{y}}
\newcommand{\Y}{\mathbf{Y}}
\newcommand{\Var}{\mathbb{V}ar}
\renewcommand{\P}{f}
\author{Salima El Kolei}
\date{}
\title{ Parametric estimation of  hidden stochastic model by contrast minimization and deconvolution: application to the Stochastic Volatility Model}
\begin{document}
\footnotetext[1]{
This is a reprint of the original article published by Springer-Verlag Berlin Heidelberg Edition in Metrika, 2013, Journal 184 article 430, \url{http://link.springer.com/article/10.1007}. DOI: \url{10.1007/s00184-013-0430-3}. This reprint differs from the original in pagination and typographic details.}
\footnotetext[2]{Received: 7 March 2012}
\footnotetext[3]{Affiliation: S. El Kolei,
              Laboratoire de Math\'{e}matiques J.A. Dieudonn\'{e}\\
              UMR n 7351 CNRS UNSA\\
              Universit\'{e} de Nice - Sophia Antipolis\\
              06108 Nice Cedex 02 France \\
              Tel.: +33-04-92-07-62-56\\
 \href{salima@unice.fr}{salima@unice.fr}
}

\maketitle

\begin{abstract}

We study a new parametric approach for particular hidden stochastic models such as the Stochastic Volatility model. This method is based on contrast minimization and deconvolution. \\
After proving consistency and asymptotic normality of the estimation leading to asymptotic confidence intervals, we provide a thorough numerical study, which compares most of the classical methods that are used in practice (Quasi Maximum Likelihood estimator, Simulated Expectation Maximization Likelihood estimator and Bayesian estimators). We prove that our estimator clearly outperforms the Maximum Likelihood Estimator in term of computing time, but also most of the other methods. We also show that this contrast method is the most robust with respect to non Gaussianity of the error and also does not need any tuning parameter.\\

\textbf{Keywords:} Contrast function, Deconvolution, Parametric inference, Stochastic volatility.
 
\end{abstract}

\section{Introduction}

This paper is concerned with the particular \emph{hidden stochastic model}:

\begin{equation}\label{mod1}
\left\lbrace\begin{array}{ll}
Y_i=X_{i}+\varepsilon_{i}\\
X_{i+1}=b_{\phi_0}(X_{i})+\eta_{i+1},
\end{array}
\right.
\end{equation}
where $(\varepsilon_{i})_{i\geq 1}$ and $(\eta_{i})_{i\geq 1}$ are two independent sequences of independent and identically distributed (i.i.d) centered random variables with variance $\sigma^{2}_{\epsilon}$ and $\sigma^{2}_{0}$. It is assumed that the variance $\sigma^{2}_{\epsilon}$ is known. The terminology \emph{hidden} comes from the unobservable character of the process $(X_i)_{i\geq 1}$ since the only available observations are $Y_{1},\cdots, Y_n$.\\ 
The dynamics of the process $X_i$ is described by a measurable function $b_{\phi_0}$ which depends on an unknown parameter $\phi_0$ and by a sequence of i.i.d centered random variables with unknown variance $\sigma_0^2$. We denote by $\theta_0$ the vector of parameters governing the process $X_i$ and suppose that the model is correctly specified: that is, $\theta_0$ belongs to the parameter space $\Theta \subset \R^{r}$, with $r \in \mathbb{N}^{*}$.\\

 Inference in hidden Markov models is a real challenge and has been studied by many authors (see \cite{moulines}, \cite{DoFr01}, \cite{MR1747395}). K.C. Chanda provided in \cite{KCD95} an asymptotically normal estimator for the vector of parameters $\theta_0$ by using modified Yule Walker equation but it assumes that the function $b_{\phi_0}$ is linear in $\phi_0$ and $X_i$, so the model (\ref{mod1}) is reduced to an autoregressive model with measurement error.\\
 Recently, in \cite{Dou11}, the authors propose an efficient estimator of the vector of parameters $\theta_0$ for nonlinear function $b_{\phi_0}$. They prove that their \emph{Maximisation Likelihood Estimator} (MLE) is consistent and asymptotically normal. The main difficulty with their approach comes from the unobservable character of the process $X_i$ making the calculus of the exact likelihood intractable in practice: the likelihood is only available in the form of a multiple integral, so exact likelihood methods require simulations and have therefore an intensive computational cost. In many case, the MLE has to be approximated. A popular approach to approximate the MLE consists in using Monte Carlo Markov Chain (MCMC) simulation techniques. Thanks to the development of these methods, the MLE has known a huge progress and Bayesian estimations have received more attention (see \cite{SaRg93}). Another method for performing the MLE consists in using the Expectation-Maximization (EM) algorithm proposed by Dempster et al. in 1977 (see \cite{dempster}). Nevertheless, since $X_i$ is unobservable, this method requires to introduce a MCMC in the Expectation step. Although these methods are used in practice, they are expensive from a computational point of view. \\
 Some authors have proposed Sequential Monte Carlo algorithms (SMC) known as Particles Filtering methods which allow to approximate the likelikood. The computational cost is reduced by a recursive construction. We refer to the book of \cite{DoFr01} and \cite{moulines} for a complete review of these methods.\\
Particle Markov Chain Monte Carlo (PMCMC) is another method for estimating the model (\ref{mod1}). This method combines Particles filtering methods and MCMC methods to estimate the vector of parameters $\theta_0$. From a computational point of view, this approach is expensive and we refer the reader to \cite{MR2758115} for more details. In \cite{peters}, they propose an adaptive PMCMC method to estimate ecological hidden stochastic models.\\

We propose here an approach based on M-estimation: It consists in the optimisation of a well-chosen contrast function (see \cite{van} chapter p.41 for a partial review) and deconvolution strategy. The deconvolution problem is encountered in many statistical situations where the observations are collected with random errors. In this approach, a method for estimating the parameter $\phi_0$ has been proposed by F. Comte and M. Taupin (see \cite{comte3}). Their procedure of estimation is based on a modified least squared minimization. In the same perspective, J. Dedecker, A. Samson and M-L. Taupin in \cite{RLA11} propose also the same procedure of estimation based on a weighted least squared estimation: Their assumptions on the process $X_i$ are less restrictive than those proposed by F. Comte and M. Taupin and they provide consistent estimation of the parameter $\phi_0$ with a parametric rate of convergence in a very general framework. Their general estimator is based on the introduction of a kernel deconvolution density and depends on the choice of a weight function.\\

 The approach proposed here is different: it is not based on a weighted least squared estimation so that the choice of the weight function is not encountered in this paper. Moreover, it allows to estimate both the parameters $\phi_0$ and $\sigma^{2}_0$. Our principle of estimation relies on the Nadaraya-Watson strategy and is proposed by F. Comte et al. in \cite{CfLc11} in a non parametric case to estimate the function $b_{\phi}$ as a ratio of an estimate of $l_{\theta}=b_{\phi}f_{\theta}$ and an estimate of $f_{\theta}$, where $f_{\theta}$ represents the invariant density of the $X_i$. We propose to adapt their approach in a parametric context and suppose that the form of the stationary density $f_{\theta_0}$ is known up to some unknown parameter $\theta_0$. Our work is purely parametric but we go further in this direction by proposing an analytical expression of the asymptotic variance matrix $\Sigma(\hat{\theta}_{n})$ which allows to construct confidence interval. Furthermore, this approach is much less greedy from a computational point of view than the MLE and its implementation is straighforward.\\

\textbf{Applications:} Applications include epidemiology, meterology, neuroscience, ecology (see \cite{MR2850220}) and finance (see \cite{johannes}). For example, our approach can be applied to the five ecological state space models described in \cite{peters}.  Although the scope of our method is general, we have chosen to focus on the so-called discrete time Stochastic Volatility model (SV) introduced by Taylor in 1982 (see \cite{Ta82}). We also investigate the behavior of our method on the simpler autoregressive process AR(1) with measurement noise which has been widely studied and on which our method can be more easily understood and compared with other ones. Our procedure allows to estimate the parameters of a large class of discrete Stochastic Volatility models (ARCH-E model, Vasicek model, Merton model..), which is a real challenge in financial application.\\

\emph{ (i) Gaussian Autoregressive AR(1) with measurement noise:} It has the following form:
\begin{equation}\label{gaussintro}
\left\lbrace\begin{array}{ll}
Y_{i+1}=X_{i+1}+\varepsilon_{i+1}\\
X_{i+1}=\phi_0 X_{i}+\eta_{i+1},
\end{array}
\right.
\end{equation}	
where $\varepsilon_{i+1}$ and  $\eta_{i+1}$ are two centered Gaussian random variables with variance $\sigma_{\epsilon}^{2}$ assumed to be known and $\sigma^{2}_{0}$ assumed to be unknown. Additionally, we assume that $|\phi_0|<1$ which implies the stationary and ergodic property of the process $X_i$ (see \cite{Do94}).\\

\emph{(ii) SV model:} It is directly connected to the type of diffusion process used in asset-pricing theory ( see \cite{Melino}):
\begin{equation}\label{svintro}
\left\lbrace\begin{array}{ll}
R_{i+1}=\exp\left(\frac{X_{i+1}}{2}\right)\xi_{i+1},\\
X_{i+1}=\phi_{0}X_{i}+\eta_{i+1}, 
\end{array}
\right.
\end{equation}
where $\xi_{i+1}$ and  $\eta_{i+1} $ are two centered Gaussian random variables with variance $\sigma_{\xi}^{2}$ assumed to be known and equal to one and $\sigma^{2}_{0}$ assumed to be unknown. The variables $R_{i+1}$ represent the returns and $X_{i+1}$ is the log-volatility process.\\
By applying a log-transformation $Y_{i+1}=\log(R^{2}_{i+1})-\E[\log(\xi^{2}_{i+1})]$ and $\varepsilon_{i+1}=\log(\xi^{2}_{i+1})-\E[\log(\xi^{2}_{i+1})]$, the SV model  is a particular version of (\ref{mod1}).  We assume that $|\phi_0|<1$ and we refer the reader to \cite{JT00} for the mixing properties of stochastic volatility models.\\ 

Most of the computational problems stem from the assumptions that the innovation of the returns are Gaussian which translates into a logarithmic chi-square distribution when the model (\ref{svintro}) is transformed in a linear state space model. Many authors have ignored it in their implementation and many authors use some mixture of Gaussian to approximate the log-chi-square density. For example, in the Quasi-Maximum Likelihood (QML) method implemented by Jacquier, Polson and Rossi in \cite{JaPo94} and in the Simulated Expectation-Maximization Likelihood estimator proposed (SIEMLE) by Kim, Shephard, and Chib in \cite{shep} they used a mixture of Gaussian distribution to approximate the log-chi-square distribution. Harvey \cite{harvey} used the Kalman filter to estimate the likelihood of the transform state space model, hence the model was also assumed to be Gaussian.\\

\textbf{Organization of the paper:} The first purpose of the paper is to present our estimator and its statistical properties in Section \ref{Procedure}: Under weak assumptions, we show that it is a consistent and asymptotically normal estimator.\\ 
The second purpose of this paper consists in comparing our contrast estimator with different estimations: the QML, the SIEMLE and Bayesian estimators. Section \ref{Application} contains the numerical study: In Section \ref{implementation} we give the parameter estimates and the comparison with others ones for simulation data and Section \ref{Application_real_data} contains the study on real data. We compare our contrast estimator with other ones on the SP$\&$500 and FTSE index.  From a computational point of view, we show that the implementation of our estimator is straightforward and it is faster than the SIEMLE (see Table [\ref{computing_siem}] in Section \ref{comp-time}). Besides, we show that our estimator outperforms the QML and Bayesian estimators.\\

\textbf{Notations:} We denote by: $u^{*}(t)=\int_{}^{}e^{itx}u(x)dx$ the Fourier transform of the function $u(x)$ and $\left \langle u,v \right \rangle=\int_{}^{}u(x)\overline{v}(x)dx$ with $v\overline{v}=|v|^{2}$. We write $||u||_{2}=\left(\int_{}^{} |u(x)|^{2} dx\right)^{1/2}$ the norm of $u(x)$ on the space of functions $\mathbb{L}^{2}(\R)$. By property of the Fourier transform, we have $(u^{*})^{*}(x)=2\pi u(-x)$ and $\left\langle u_1,u_2\right\rangle=\frac{1}{2\pi}\left\langle u^{*}_1,u^{*}_2\right\rangle$. The vector of the partial derivatives of $f$ with respect to (w.r.t) $\theta$ is denoted by $\nabla_{\theta}f$ and the Hessian matrix of $f$ w.r.t $\theta$ is denoted by $\nabla^{2}_{\theta}f$. The Euclidean norm matrix, that is, the square root of the sum of the squares of all its elements will be written by $\left\|A\right\|$. We denote by $\Y_i$ the pair $(Y_i,Y_{i+1})$ and $\y_{i}=(y_{i},y_{i+1})$ is a given realisation of $\Y_i$. \\
In the following, $\mathbb{P}, \E, \Var$ and $\C$ denote respectively the probability $\mathbb{P}_{\theta_0}$, the expected value $\E_{\theta_0}$, the variance $\Var_{\theta_0}$ and the covariance $\C_{\theta_0}$ when the true parameter is $\theta_0$. Additionally, we write $\mathbf{P}_n$ (\emph{resp.} $\mathbf{P}$) the empirical expectation (\emph{resp.} theoretical), that is: for any stochastic variable $X$: $\mathbf{P}_{n}(X) = \frac{1}{n}\sum_{i=1}^{n} X_i$ (\emph{resp.} $\mathbf{P}(X)=\E[X]$).\\
$\newline$

\subsection{Procedure: Contrast estimator}\label{Procedure}

Hereafter, we propose explicit estimators of the parameter $\theta_0$. This estimator called the contrast estimator is based on minimization of suitable functions of the observations usually called ``contrasts functions". We refer the reader to \cite{van} for a general account on this notion. Furthermore, in this part, we use the contrast function proposed by \cite{CfLcRy10}, that is: 
 
\begin{equation}\label{contraste}
\mathbf{P}_{n}m_{\theta}=\frac{1}{n}\sum_{i=1}^{n}m_{\theta}(\Y_i),
\end{equation} with $n$ the number of observations and:
\begin{equation*}
m_{\theta}(\y_{i}): (\theta, \y_{i})\in (\Theta \times \mathbb{R}^{2})\mapsto m_{\theta}(\y_{i})=||l_{\theta}||^{2}_{2}-2y_{i+1}u^{*}_{l_{\theta}}(y_{i}),
\end{equation*}
where the function $l_{\theta}$ and $u_{v}$ are given by:

\begin{equation}
l_{\theta}(x)=b_{\phi}(x)f_{\theta}(x) \text{ and }\quad u_{v}(x)=\frac{1}{2\pi}\frac{v^{*}(-x)}{f_{\varepsilon}^{*}(x)}
\end{equation}

with $f_{\theta}$ the invariant density of $X_i$.\\

\textbf{Some assumptions}. As our procedure relies on the Fourier deconvolution strategy, in order to construct our estimator, we assume that the density of the noise $\varepsilon_i$, denoted by $f_\varepsilon$, is fully known and  belongs to $\mathbb{L}_2(\mathbb{R})$, and for all $x \in \mathbb{R}$ $f^{*}_{\varepsilon}(x)\neq 0$. Furthermore, we assume that the function $l_{\theta}$ belongs to $\mathbb{L}_1(\mathbb{R})\cap \mathbb{L}_2(\mathbb{R})$.  The function $u_{l_{\theta}}$ must be integrable.\\
For the statistical study, the key assumption is that the process $(X_i)_{i \geq 1}$ is stationary and ergodic  (see \cite{JT00} for a definition).\\

\begin{remark}
In this paper we consider the situation in which the observation noise variance is known. This assumption which is not in general the case in practice is necessary for the identifiability of the model and so is a standard assumption for state-space models given in (\ref{mod1}).\\
There is some restrictions on the distribution of the observation and process errors in the Nadaraya-Watson approach. It is known that the rate of convergence for estimating the function $l_{\theta}$ is related to the rate of decrease of $f^{*}_{\varepsilon}$. In particular, the smoother $f_{\varepsilon}$, the slower the rate of convergence for estimating is (The Gaussian, log-chi squared or Cauchy distributions are super-smooth. The Laplace distribution is ordinary smooth). This rate of convergence can be improved by assuming some additional regularity conditions on $l_{\theta}$.  There is a good discussion about this subject in \cite{CfLcRy10} and \cite{RLA11}.

\end{remark}

\textbf{The procedure} Let us explain the choice of the contrast function and how the strategy of deconvolution works. Obviously, as the model (\ref{mod1}) shows, the $\Y_i$ are not i.i.d. However, by assumption, they are stationary ergodic, so the convergence of $\mathbf{P}_{n}m_{\theta}$ to $\mathbf{P}m_{\theta}=\E\left[m_{\theta}(\Y_{1})\right]$ as $n$ tends to the infinity is provided by the Ergodic Theorem. Moreover, the limit $\E\left[m_{\theta}(\Y_{1})\right]$ of the contrast function can be explicitly computed: 

\begin{equation*}
\E\left[m_{\theta}(\Y_{1})\right]=\left\|l_{\theta}\right\|_{2}^{2}-2\E\left[Y_{2}u_{l_{\theta}}^{*}(Y_{1})\right].
\end{equation*} By Eq.(\ref{mod1}) and under the independence assumptions of the noise $(\varepsilon_2)$ and $(\eta_2)$, we have:

\begin{equation*}
\E\left[ Y_{2}  u^*_{l_{\theta}}(Y_{1})\right] = \E\left[ b_{\phi_{0}}(X_1)  u^*_{l_{\theta}}(Y_{1})\right].
\end{equation*} Using Fubini's Theorem and Eq.(\ref{mod1}), we obtain:

\begin{eqnarray}
\E\left[ b_{\phi_{0}}(X_1)  u^*_{l_{\theta}}(Y_{1})\right] &=& \E\left[b_{\phi_{0}}(X_1) \int e^{iY_{1}z} u_{l_{\theta}}(z) dz \right]\nonumber\\
&=&\E\left[b_{\phi_{0}}(X_1) \int \frac{1}{2\pi}\frac{1}{f_{\varepsilon}^*(z)}e^{iY_{1}z} (l_{\theta}(-z))^*dz  \right]\nonumber\\
&=&\frac{1}{2\pi} \int \E\left[b_{\phi_{0}}(X_1)e^{i(X_1+\varepsilon_1)z} \right] \frac{1}{f_{\varepsilon}^*(z)} (l_{\theta}(-z))^* dz \nonumber\\
&=&\frac{1}{2\pi} \int \frac{\E\left[ e^{i\varepsilon_1z}\right]}{f_{\varepsilon}^*(z)} \E\left[b_{\phi_{0}}(X_1)e^{iX_1z}\right] (l_{\theta}(-z))^* dz\nonumber\\
&=&\frac{1}{2\pi} \E\left[ b_{\phi_{0}}(X_1) \int e^{iX_1z} (l_{\theta}(-z))^* dz \right]\nonumber\\
&=&\frac{1}{2\pi} \E\left[ b_{\phi_{0}}(X_1) \left((l_{\theta}(-X_1))^{*}\right)^*\right]\nonumber\\
&=& \E\left[ b_{\phi_{0}}(X_1)l_{\theta}(X_1)\right].\nonumber\\
&=&\int b_{\phi_{0}}(x)f_{\theta_{0}}(x)b_{\phi}(x)f_{\theta}(x)dx \nonumber\\
&=&\left\langle l_{\theta}, l_{\theta_{0}} \right\rangle. \label{decon_strategy}
\end{eqnarray} 

Then, 

\begin{eqnarray}\label{pmtheta}
\E\left[m_{\theta}(\Y_{1})\right]&=&\left\|l_{\theta}\right\|_{2}^{2}-2\left\langle l_{\theta}, l_{\theta_{0}} \right\rangle,\\
&=&\left\|l_{\theta}-l_{\theta_0}\right\|_{2}^{2}-\left\|l_{\theta_{0}}\right\|_{2}^{2}. 
\end{eqnarray} 
Under the uniqueness assumption \textbf{(CT)} given just later this quantity is minimal when $\theta$=$\theta_{0}$. Hence, the associated minimum-contrast estimators $\widehat{\theta}_n$ is defined as any solution of:  

\begin{equation}\label{min}
\widehat{\theta}_n=\arg\min_{\theta \in \Theta}\mathbf{P}_{n}m_{\theta}.
\end{equation}

\begin{remark} 

One can see in the deconvolution strategy described in Eq.(\ref{decon_strategy}) that temporal correlation in the observation or latent process errors can be authorized. The procedure still be applicable but the covariance matrix $\Omega_{j-1}(\theta_0)$ for the CLT has not an analytic expression in this case since the use of the Fourier deconvolution approach does not work.
\\

We refer the reader to \cite{Do94} for the proof that if $X_i$ is an ergodic process then the process $Y_i$, which is the sum of an ergodic process with an i.i.d. noise, is again stationary ergodic. Furthermore, by the definition of an ergodic process, if $Y_i$ is an ergodic process then the couple $\Y_i=(Y_i, Y_{i+1})$ inherits the property (see \cite{JT00}).\\
\end{remark}

\subsection{Asymptotic properties of the Contrast estimator}

Our proof holds under the following assumptions. For the reader convenience, we denote by \textbf{(E)} (\emph{resp.} \textbf{(C)} and \textbf{(T)}) the assumptions which serve us for the existence (\emph{resp.} Consistency and Central Limit Theorem). If the same assumption is needed for two results, for example for the existence and the consistency, it is denoted by \textbf{(EC)}.\\

\begin{flushleft}
\textbf{(ECT):} The parameter space $\Theta$ is a compact subset of $\mathbb{R}^{r}$ and $\theta_0$ is an element of the interior of $\Theta$.\\
\textbf{(C):} (Local dominance): $\E\left[\sup_{\theta \in \Theta}\left|Y_{2}u^{*}_{l_{\theta}}(Y_{1})\right|\right]<\infty$.\\
\textbf{(CT):} The application $\theta \mapsto \mathbf{P}m_{\theta}$ admits an unique minimum and its Hessian matrix denoted by $V_{\theta}$ is non-singular in $\theta_0$.\\ 
\textbf{(T):} (Regularity): We assume that the function $l_{\theta}$ is twice continuously differentiable w.r.t $\theta \in \Theta$ for any $x$ and measurable w.r.t $x$ for all $\theta$ in $\Theta$. Additionally, each coordinate of $\nabla_{\theta}l_{\theta}$ and each coordinate of $\nabla^{2}_{\theta}l_{\theta}$ belong to $\mathbb{L}_1(\mathbb{R})\cap \mathbb{L}_2(\mathbb{R})$  and each coordinate of $u_{\nabla_{\theta}l_{\theta}}$ and $u_{\nabla_{\theta}^{2}l_{\theta}}$ have to be integrable as well.\\
\qquad (Moment condition): For some $\delta >0$ and for $j \in \left\{1,\cdots,r\right\}$: 
$$\E\left[\left|Y_{2}u^*_{\frac{\p l_{\theta}}{\p \theta_j}}(Y_{1})\right|^{2+\delta}\right]<\infty.$$\\
\qquad (Hessian Local dominance): For some neighbourhood $\mathcal{U}$ of $\theta_0$ and for $j,k \in \left\{1,\cdots,r\right\}$:
$$\E\left[\sup_{\theta \in \mathcal{U}}\left|Y_{2}u^{*}_{\frac{\p^2 l_{\theta}}{\p \theta_j\p \theta_k}}(Y_{1})\right|\right]<\infty.$$
\end{flushleft}

Let us introduce the matrix:

\begin{equation*}
\Sigma(\theta)=V_{\theta}^{-1} \Omega(\theta) V_{\theta}^{-1'} \text{ with } \Omega(\theta)=\Omega_{0}(\theta)+2 \sum_{j=2}^{+\infty} \Omega_{j-1}(\theta),
\end{equation*}
where $\Omega_{0}(\theta)=\Var\left(\nabla_{\theta}m_{\theta}(\mathbf{Y_{1}})\right)$ and $\Omega_{j-1}(\theta)=\C\left(\nabla_{\theta}m_{\theta}(\mathbf{Y_{1}}),\nabla_{\theta}m_{\theta}(\mathbf{Y_{j}})\right)$

\begin{theorem}\label{MR}
Under our assumptions, let $\widehat{\theta}_{n}$ be the minimum-contrast estimator defined by (\ref{min}). Then:
\begin{equation*}
\widehat{\theta}_{n} \longrightarrow \theta_{0}  \qquad \text{ in probability }\text{ as }n \rightarrow \infty.  
\end{equation*}

Moreover, if $\Y_i$ is geometrically ergodic (see Definition \ref{def_geo} in Appendix \ref{appen}), then:
\begin{equation*}
\sqrt{n}(\widehat{\theta}_{n}-\theta_{0})\rightarrow \mathcal{N}\left(0,\Sigma(\theta_{0})\right)   \qquad  \text{ in law} \text{ as }n \rightarrow \infty. 
\end{equation*}
\end{theorem}

 The following corollary gives an expression of the matrix $\Omega(\theta_0)$ and $V_{\theta_0}$ of Theorem \ref{MR} for the practical implementation:

\begin{corollary}\label{lele}
Under our assumptions, the matrix $\Omega(\theta_0)$ is given by:

$$\Omega(\theta_0) = \Omega_{0}(\theta_0)+ 2\sum_{j=2}^{+\infty} \Omega_{j-1}(\theta_0),$$
where:

\begin{equation*}
\Omega_{0}(\theta_0)= 4\E \left[Y^{2}_{2}\left(u^{*}_{\nabla_{\theta}l_{\theta}}(Y_{1})\right)^{2}\right] -4\E\left[b_{\phi_{0}}(X_1) \left(\nabla_{\theta}l_{\theta}(X_1) \right)\right]\E\left[b_{\phi_{0}}(X_1) \left(\nabla_{\theta}l_{\theta}(X_1) \right)\right] ',\\
\end{equation*}
and, the covariance terms are given by:

\begin{equation*}
\Omega_{j-1}(\theta_0)=4\left[\tilde{C}_{j-1}-\E\left[b_{\phi_{0}}(X_1)\left(\nabla_{\theta}l_{\theta}(X_1)\right)\right]\E\left[b_{\phi_{0}}(X_1)\left(\nabla_{\theta}l_{\theta}(X_1)\right)\right] '\right],
\end{equation*}
where $\tilde{C}_{j-1}=\E\left[b_{\phi_{0}}(X_1)\left(\nabla_{\theta}l_{\theta}(X_1)\right)\left(b_{\phi_{0}}(X_j)\nabla_{\theta}l_{\theta}(X_j)\right)'\right] $ and the differential $\nabla_{\theta}l_{\theta}$ is taken at point $\theta=\theta_0$.\\

Furthermore, the Hessian matrix $V_{\theta_0}$ is given by:

\begin{eqnarray*}
\left(\left[V_{\theta_0}\right]_{j,k}\right)_{1\leq j,k\leq r}&=&2\left( \left\langle \frac{\p l_{\theta}}{\p  \theta_k}, \frac{\p l_{\theta}}{\p \theta_j}\right\rangle \right)_{j,k} \text { at point $\theta=\theta_0$.} 
\end{eqnarray*}
\end{corollary}

Let us now state the strategy of the proof, the full proof is given in Appendix \ref{proof_result}. Clearly, the proof of Theorem \ref{MR} relies on M-estimators properties and on the deconvolution strategy. The existence of our estimator follows from regularity properties of the function $l_{\theta}$ and compactness argument of the parameter space, it is explained in Appendix \ref{EoE}. The key of the proof consists in proving the asymptotic properties of our estimator. This is done by splitting the proof into two parts: we first give the consistency result in Appendix \ref{CoE} and then give the asymptotic normality in Appendix \ref{ANoE}. Let us introduce the principal arguments:\\
	
The main idea for proving the consistency of a M-estimator comes from the following observation: if $\mathbf{P}_{n}m_{\theta}$ converges to $\mathbf{P}m_{\theta}$ in probability, and if the true parameter solves the limit minimization problem, then, the limit of the argminimum $\widehat{\theta}_n$ is $\theta_0$. By using an argument of uniform convergence in probability and by compactness of the parameter space, we show that the argminimum of the limit is the limit of the argminimum. A standard method to prove the uniform convergence is to use \emph{the Uniform Law of Large Numbers} (see Lemma \ref{ULLN} in Appendix \ref{appen}). Combining these arguments with the dominance argument \textbf{(C)} give the consistency of our estimator, and then, the first part of Theorem \ref{MR}.\\

The asymptotic normality follows essentially from Central Limit Theorem for a mixing process (see \cite{Ga04}). Thanks to the consistency, the proof is based on a moment condition of the Jacobian vector of the function $m_{\theta}(\y)$ and on a local dominance condition of its Hessian matrix. 
 To refer to likelihood results, one can see these assumptions as a moment condition of the score function and a local dominance condition of the Hessian.\\

\section{ Applications}\label{Application}

\subsection{ Contrast estimator for the Gaussian AR(1) model with measurement noise:}\label{arr}

Consider the following autoregressive process AR(1) with measurement noise:
  
\begin{equation}\label{cac}
\left\lbrace\begin{array}{ll}
Y_i=X_{i}+\varepsilon_{i}\\
X_{i+1}=\phi_0 X_{i}+\eta_{i+1},
\end{array}
\right.
\end{equation}

The noises $\varepsilon_i$ and $\eta_i$ are supposed to be centered Gaussian randoms with variance respectively $\sigma^2_{\varepsilon}$ and $\sigma^2_0$.  We assume that $\sigma^2_{\varepsilon}$ is known. Here, the unknown vector of parameters is $\theta_0=(\phi_0,\sigma^2_0)$ and for stationary and ergodic properties of the process $X_i$, we assume that the parameter $\phi_0$ satisfies $|\phi_0|<1$ (see \cite{Do94}). The functions $b_{\phi}$ and $l_{\theta}$ are defined by:

\begin{eqnarray*}
&& b_{\phi}(x):\ (x,\theta) \in (\R\times \Theta) \mapsto b_{\phi}(x)=\phi x,\\ 
&& l_{\theta}(x):\ (x,\theta) \in (\R\times \Theta) \mapsto l_{\theta}(x)=b_{\phi}(x)f_{\theta}(x)=\frac{\phi}{\sqrt{2\pi\gamma^{2}}} x \exp\left(-\frac{1}{2\gamma^{2}}x^{2}\right),
\end{eqnarray*}
where $\gamma^{2}=\frac{\sigma^{2}}{1-\phi^{2}}$. The vector of parameter $\theta$ belongs to the compact subset $\Theta$ given by $\Theta = [-1+r; 1-r]\times[\sigma^{2}_{min}; \sigma^{2}_{max}]$ with $\sigma^{2}_{min}\geq \sigma_{\varepsilon}^2+\overline{r}$ where $r$,  $\overline{r}$, $\sigma^{2}_{min}$ and $\sigma^{2}_{max}$ are positive real constants. We consider this subset since by stationary of $X_i$, the parameter $|\phi|<1$ and by construction the function $u^{*}_{l_{\theta}}$ is well defined for $\sigma^{2}> \sigma_{\varepsilon}^2(1-\phi^2)$ with $\phi \in [-1+r; 1-r]$ which is implied by $\sigma^{2}>\sigma_{\varepsilon}^2$. The contrast estimator defined in (\ref{Procedure}) has the following form:

\begin{equation}\label{contraste_application}
\widehat{\theta}_n= \arg \min_{\theta \in \Theta}\left\{ \frac{\phi^{2}\gamma}{4\sqrt{\pi}}-\sqrt{\frac{2}{\pi}}\frac{\phi \gamma^{2}}{n(\gamma^{2}-\sigma^{2}_{\epsilon})^{3/2}}\sum_{j=1}^{n} Y_{j+1} Y_{j}\exp\left(-\frac{1}{2}   \frac{Y^{2}_{j}}{(\gamma^{2}-\sigma^{2}_{\epsilon})}\right)\right\}
\end{equation} 

with $n$ the number of observations. Theorem \ref{MR} applies for $\theta_0=(0.7, 0.3)$ and the corresponding result for the  Gaussian AR(1) model is given in Appendix \ref{AppGauss}. As we already mentioned, Corollary \ref{lele} allows to compute confidence intervals: For all $i=1,2$:

\begin{equation*}
\mathbb{P}\left(\hat{\theta}_{n,i}-z_{1-\alpha/2}\sqrt{\frac{\mathbf{e}_{i}'\Sigma (\hat{\theta}_{n})\mathbf{e}_{i}}{n}}\leq \theta_{0,i}\leq \hat{\theta}_{n,i}+z_{1-\alpha/2}\sqrt{\frac{\mathbf{e}_{i}'\Sigma(\hat{\theta}_{n})\mathbf{e}_{i}}{n}}\right)\rightarrow 1-\alpha, 
\end{equation*}
as $n \rightarrow \infty$ where $z_{1-\alpha/2}$ is the $1-\alpha/2$ quantile of the Gaussian law, $\theta_{0,i}$ is the $i^{th}$ coordinate of $\theta_0$ and $\mathbf{e}_{i}$ is the $i^{th}$ coordinate of the vector of the canonical basis of $\R^2$. The covariance matrix $\Sigma(\hat{\theta}_{n})$ is computed in Lemma \ref{hessienne_application} in Appendix \ref{EFCM}.\\

\subsection{ Contrast estimator for the SV model:}\label{svv}

We consider the following SV model:

\begin{equation}\label{svintro}
\left\lbrace\begin{array}{ll}
R_{i+1}=\exp\left(\frac{X_{i+1}}{2}\right)\xi_{i+1},\\
X_{i+1}=\phi_{0}X_{i}+\eta_{i+1}, 
\end{array}
\right.
\end{equation}

The noises $\xi_{i+1}$ and  $\eta_{i+1} $ are two centered Gaussian random variables with standard variance $\sigma_{\xi}^{2}$ assumed to be known and $\sigma^{2}_{0}$. We assume that $|\phi_0|<1$ and we refer the reader to \cite{JT00} for the mixing properties of this model.\\
 
By applying a log-transformation $Y_{i+1}=\log(R^{2}_{i+1})-\E[\log(\xi^{2}_{i+1})]$ and $\varepsilon_{i+1}=\log(\xi^{2}_{i+1})-\E[\log(\xi^{2}_{i+1})]$, the log-transform SV model is given by:
 
 \begin{equation}\label{svappl}
\left\lbrace\begin{array}{ll}
Y_{i+1}=X_{i+1}+\varepsilon_{i+1}\\
X_{i+1}=\phi_{0}X_{i}+\eta_{i+1}, 
\end{array}
\right.
\end{equation}
 
The Fourier transform of the noise $\varepsilon_{i+1}$ is given by:

$$f^{*}_{\varepsilon}(x)=\frac{1}{\sqrt{\pi}}2^{ix}\Gamma(\frac{1}{2}+ix)e^{-i\mathcal{E}x}$$ where $\mathcal{E}=\E[\log(\xi^{2}_{i+1})]=-1.27$ and $\Var[\log(\xi^{2}_{i+1})]$= $\sigma^{2}_{\varepsilon}=\frac{\pi^2}{2}$. Here, $\Gamma$ represents the gamma function given by:

\begin{equation*}
\Gamma: u\rightarrow \int_{0}^{+\infty}t^{u-1}e^{-t}dt \qquad \forall u \in \mathbb{C} \text { such that } \mathcal{R}_{e}(u)>0.
\end{equation*}

The vector of parameters $\theta=(\phi,\sigma^2)$ belongs to the compact subset $\Theta$ given by $[-1+r; 1-r]\times[ \sigma^{2}_{min} ;  \sigma^{2}_{max}]$ with $r$, $\sigma^{2}_{min}$ and $\sigma^{2}_{max}$ positive real constants. \\

Our contrast estimator (\ref{Procedure}) is given by:

\begin{equation}\label{contraste_application2}
\widehat{\theta}_n= \arg \min_{\theta \in \Theta}\left\{ \frac{\phi^{2}\gamma}{4\sqrt{\pi}}-\frac{2}{n}\sum_{i=1}^{n}Y_{i+1}u^{*}_{l_{\theta}}(Y_i)\right\},
\end{equation} with $u_{l_{\theta}}(y)=\frac{1}{2\sqrt{\pi}}\left(\frac{-i\phi y\gamma^2\exp\left(\frac{-y^2}{2}\gamma^2\right)}{\exp\left(-i\mathcal{E}y\right)2^{i y}\Gamma\left(\frac{1}{2}+i y\right)}\right)$.\\

Theorem \ref{MR} applies for $\theta_0=(0.7, 0.3)$ and by Slutsky's Lemma we also obtain confidence intervals. We refer the reader to Appendix \ref{AppSV} for the proof.

\subsection{Comparison with the others methods}

\subsubsection{QML Estimator}\label{qml}

For the SV model, the QML estimator, proposed independently by Harvey et al.(1994) (see \cite{harvey}) is based on the log-transform model given in (\ref{svappl}). Making as if the $\varepsilon_i$ were Gaussian in the log-transform of the SV model, the Kalman filter \cite{ka} can be applied in order to obtain the quasi likelihood function of $Y_{1:n}=(Y_{1},\cdots, Y_n)$ where $n$ is the sample data length. For the AR(1) and  the log-transform of the SV model, the log-likelihood $l(\theta)$ is given by: 

\begin{equation*}
l(\theta)=\log f_{\theta}(Y_{1:n})=-\frac{n}{2}\log(2\pi)-\frac{1}{2}\sum_{i=1}^{n}\log F_i-\frac{1}{2}\sum_{i=1}^{n}\frac{\nu_{i}^{2}}{F_i},
\end{equation*} where $\nu_{i}$ is the one-step ahead prediction error for $Y_i$, and $F_i$ is the corresponding mean square error. More precisely, the two quantities are given by:

\begin{eqnarray*}
\nu_i=(Y_i-\hat{Y}_i^{-}) \text{ and } F_i=\Var_{\theta}[\nu_i]=P_i^{-}+\sigma^{2}_{\varepsilon},
\end{eqnarray*} where $\hat{Y}_i^{-}=\E_{\theta}[Y_i| Y_{1:i-1}]$ is the one-step ahead prediction for $Y_i$ and $P_i^{-}=\Var_{\theta}[(X_i-\hat{X}_{i}^{-})^{2}]$ is the one-step ahead error variance for $X_i$.\\

Hence, the associated estimator of $\theta_0$ is defined as a solution of:
\begin{equation*}
\hat{\theta}_n=\arg\max_{\theta \in \Theta}l(\theta).
\end{equation*}

Note that this procedure can be inefficient: the method does not rely on the exact likelihood of the $Z_{1:n}$ and approximating the true log-chi-square density by a normal density  can be rather inappropriate (see Figure [\ref{approxim}] below).

\begin{figure}[H]
\includegraphics[width=129mm, height=60mm]{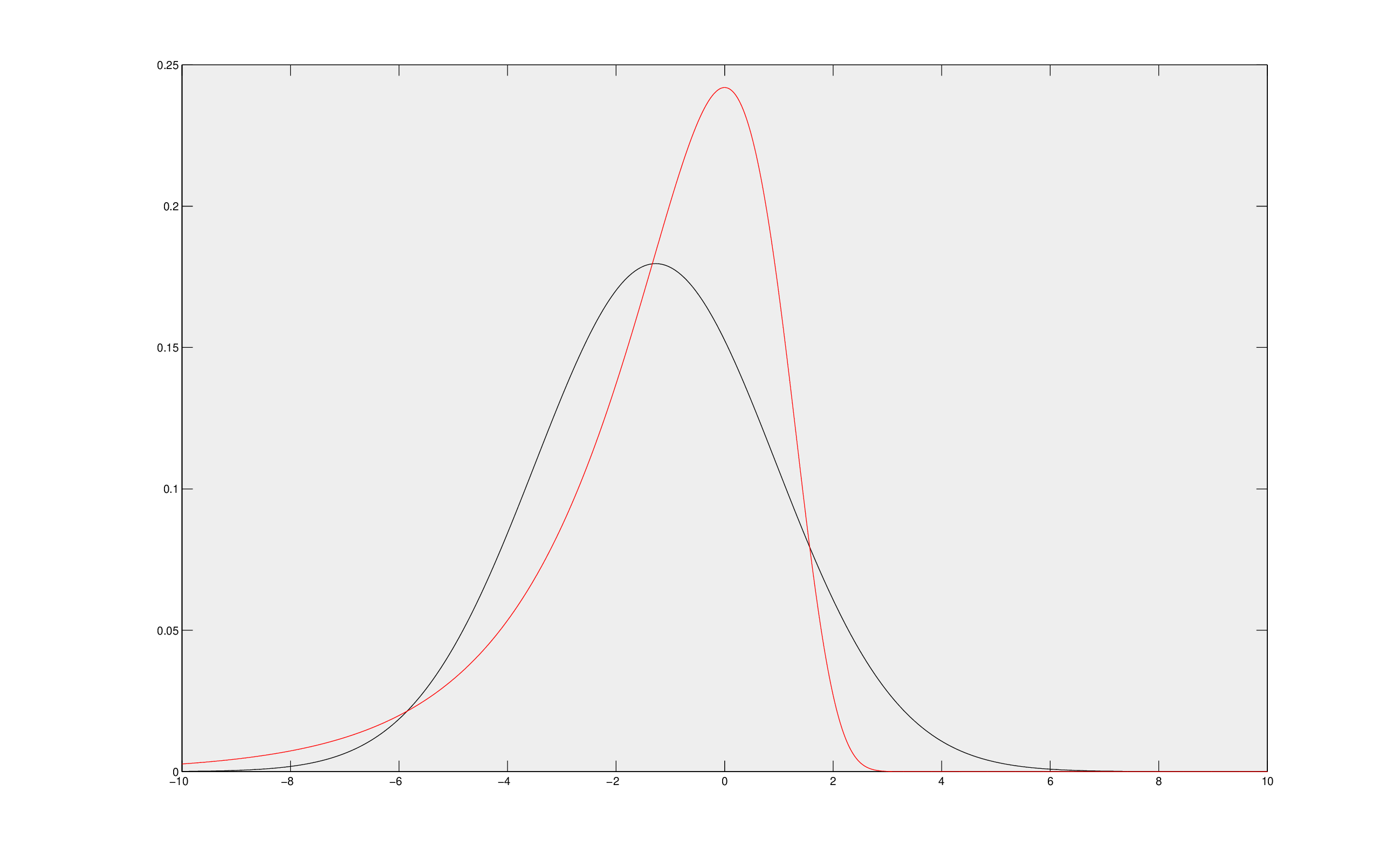}
\caption{\normalsize  Approximation of the $\log$-chi-square density (Red) by a Gaussian density with mean $\mathcal{E}=-1.27$ and variance $\sigma^{2}_{\varepsilon}=\frac{\pi^2}{2}$ (Black). }
\label{approxim} 
\end{figure}

\subsubsection{Particle filters estimators: Bootstrap, APF and KSAPF }\label{bay}

For the particle filters, the vector of parameters $\theta=(\phi,\sigma^2)$ is supposed random obeying the prior distribution assumed to be known. We propose to use the Kitagawa and al.'s approach (see \cite{DoFr01} chapter 10 p.189) in which the parameters are supposed time-varying: $\theta_{i+1}=\theta_{i}+\mathcal{G}_{i+1}$ where $\mathcal{G}_{i+1}$ is a centered Gaussian random with a variance matrix $Q$ supposed to be known. Now, we consider the augmented state vector $\tilde{X}_{i+1}=(X_{i+1}, \theta_{i+1})'$ where $X_{i+1}$ is the hidden state variable and $\theta_{i+1}$ the unknown vector of parameters. In this paragraph, we use the terminology of the particle filtering method, that is: we call particle a random variable. The sequential particle estimation of the vector $\tilde{X}_{i+1}$ consists in a combined estimation of $X_{i+1}$ and $\theta_{i+1}$. For initialisation the distribution of $X_1$ \footnote{ To avoid confusions between the true value $\theta_0$ and the initial value $\theta_1$ in the Bayesian algorithms, we start the algorithms with $i=1$.} conditionally to $\theta_1$ is given by the stationary density $f_{\theta_1}$.  \\

For the comparison with our contrast estimator (\ref{Procedure}), we use the three methods: the Bootstrap filter, the Auxiliary Particle filter (APF) and the Kernel Smoothing Auxiliary Particle filter (KSAPF). We refer the reader
to \cite{DoFr01}, \cite{PiSh07} and \cite{liu} for a complete revue of these methods.

\begin{remark} Let us underline some particularity of  the combined state and parameters estimation: For the Bootstrap and APF estimator, an important issue concerns the choice of the parameter variance $Q$ since the parameter is itself unobservable. If one can choose an optimal variance $Q$ the APF estimator could be a very good estimator since with arbitrary variance the results are acceptable (see Table [\ref{compar2}]). In practice, Q is chosen by an empirical optimization. The KSAPF is an enhanced version of the APF and depends on a smooth factor $0<h<1$ (see \cite{liu}). Therefore, the choice of $h$ is another problem in practice. \\

A common approach to estimate the vector of parameters is to maximize the likelihood. Nevertheless, for state space models, the main difficulty with the Maximum Likelihood Estimator (MLE) \index{MLE,  Maximum Likelihood Estimator} comes from the unobservable character of the state $x_t$ making the calculus of the likelihood untractable in practice: the likelihood is only available in the form of a multiple integral, so exact likelihood methods require simulations and have therefore an intensive computational cost. In many cases, the MLE has to be approximated. A popular approach to approximate it consists in using MCMC simulation techniques (see \cite{SaRg93} and  \cite{cape}). Another approach to approximate the likelihood consists in using particles filtering algorithms. Recently, in \cite{MR2649602} the authors propose an approach of Integrated Nested Laplace Approximations to obtain approximations of the likelihood.\\
In \cite{chopin2} the authors propose a sequential $SMC^{2}$ algorithm which allows an efficient approximation of the complete distribution $p(x_{0:t}, \theta \vert y_{1:t})$. Their approach is an extension of the Iterated Batch Importance Sampling (IBIS) proposed in \cite{MR1929161}. In \cite{MR2758115} the authors develop a general algorithm which is a MCMC algorithm that uses the particles filter to approximate the intractable density $p_{\theta}(y_{1:n})$ combined with a MCMC step that samples from $p(\theta \vert y_{1:n})$. They show that their PMCMC algorithm admits as stationary density the distribution of interest $p(x_{0:t}, \theta \vert y_{1:t})$. There exist others methods and we refer the reader to \cite{MR2416438}, \cite{singh} for more details. \\
\end{remark}

\subsection{A simulation study}\label{implementation}

 For the AR(1) and SV model, we sample the trajectory of the $X_i$ with the parameters $\phi_0=0.7$ and $\sigma_0^{2}=0.3$. Conditionally to the trajectory, we sample the variables $Y_i$  for $i=1\cdots n$ where $n$ represents the number of observations. We take $n=1000$ and $\sigma^{2}_{\varepsilon}=0.1$ for the two models. This means that we consider the following model:
 
\begin{equation*}
\left\lbrace\begin{array}{ll}
R_{i+1}=\exp\left(\frac{X_{i+1}}{2}\right)\xi_{i+1}^{\beta},\\
X_{i+1}=\phi_{0}X_{i}+\eta_{i+1}, 
\end{array}
\right.
\end{equation*}
 
with $\beta=\frac{1}{\sqrt{5}\pi}$. In this case, the Fourier transform of $\varepsilon_{i+1}$ is given by: $f^{*}_{\varepsilon}(y)=\exp\left(-i\tilde{\mathcal{E}}y\right)\frac{2^{i\beta y}}{\sqrt{\pi}}\Gamma\left(\frac{1}{2}+i\beta y\right)$ with $\tilde{\mathcal{E}}=\beta\mathcal{E}$(see Appendix \ref{AppSV}).\\

For the three methods, we take a number of particles $M$ equal to $5000$.  Note that for the Bayesian procedure (Bootstrap, APF and KSAPF), we need a prior on $\theta$, and this only at the first step. The prior for $\theta_1$ is taken to be the Uniform law and conditionally to $\theta_1$ the distribution of $X_1$ is the stationary law:

\begin{equation*}
\left\lbrace\begin{array}{ll}
p(\theta_1)=\mathcal{U}(0.5, 0.9)\times \mathcal{U}(0.1, 0.4)\\
f_{\theta_1}(X_1)=\mathcal{N}\left(0, \frac{\sigma^{2}_{1}}{1-\phi_1^2}\right)
\end{array}
\right.
\end{equation*}

We take $h=0.1$ for the KSAPF and $Q=\begin{pmatrix}
0.6.10^{-6} & 0\\
0 & 0.1.10^{-6} 
\end{pmatrix}$ for the APF and Bootstrap filter. 	

\begin{remark}
Note that, in practice,  there is no constraint on the parameters for the  contrast function contrary to the particle filters where we take the stationary law for $p_{\theta}(X_0)$ and the Uniform law around the true parameters. Hence, we bias favourably the particle filters.  
\end {remark}	 

\subsection{Numerical Results}

In the numerical section we compare the different estimations: the QML estimator defined in Section \ref{qml}, the Bayesian estimators defined in Section \ref{bay} and our contrast estimator defined in Section \ref{Procedure}. For the comparison of the computing time, we also compare our contrast estimator with the SIEMLE proposed by Kim, Shepard and Chib (see Appendix \ref{siem} and \cite{shep}  for more general details).

\subsubsection{Computing time}\label{comp-time}

From a theoretical point of view, the MLE is asymptotically efficient. However, in practice since the states $(X_{1}\cdots, X_{n})$ are unobservable and the SV model is non Gaussian, the likelihood is untractable. We have to use numerical methods to approximate it. In this section, we illustrate the SIEMLE which consists in approximating the likelihood and applying the Expectation-Maximisation algorithm introduced by Dempster \cite{dempster} to find the parameter $\theta$.\\
To illustrate the SIEMLE for the SV model, we run an estimator with a number of observations $n$ equal to $1000$.  Although the estimation is good the computing time is very long compared with the others methods (see Tables [\ref{time_gaussian}] and [\ref{computing_siem}]). This result illustrates the numerical complexity of the SIEMLE (see Appendix \ref{siem}). Therefore, in the following, we only compare our contrast estimator with the QML and Bayesian estimators. The results are illustrated by Figure [\ref{time_gaussian}]. We can see that our contrast estimator is the fastest for the Gaussian AR(1) model. The QML is the most rapid for the SV model since it assumes that the measurement errors are Gaussian but we show in Figures [\ref{boxplot_compa1}], [\ref{boxplot_compa2}] and [\ref{compar2}] that it is a biased estimator with large mean square error. For our algorithm, for the Gaussian AR(1)  model, the function $u^{*}_{l_{\theta}}$ has an explicit expression but for the SV model, the function $u^{*}_{l_{\theta}}$ is approximated numerically since the Fourier transform of the function $u_{l_{\theta}}$ has not an explicit form. This explains why our algorithm is slower on the SV model than on the Gaussian AR(1) model.\footnote{We use a quadrature method implemented in Matlab to approximate the Fourier transform of $u_{l_{\theta}}(y)$. One can also use the FFT method and we expect that the contrast estimator will be more rapid in this case.}  In spite of this approximation, our contrast estimator is fast and its implementation is straightforward.

\begin{table}[H]
\caption{\normalsize  \label{time_gaussian} Comparison of the computing time (CPU in seconds) and the MSE with respect to the number of observations $n=200$ up to $1500$ for the Gaussian AR(1) and the SV models. The number of particles in Bayesian estimations is $M=5000$ particles and the number of estimators is $N=100$ for the MSE (see Eq.(\ref{MSE})).}
\begin{center}
\begin{tabular}{lllllll}
\hline\noalign{\smallskip}
 & n & SV &  &  & AR(1) &   \\
\noalign{\smallskip}\hline\noalign{\smallskip}
&  & CPU & MSE & CPU & MSE  \\
\noalign{\smallskip}\hline\noalign{\smallskip}
Contrast & 200 & 4.2695  &  0.0425 & 0.032146 &0.0411\\
& 300 &  5.1015  &  0.0453 &  0.022588 &0.0398 \\
& 400 & 7.0502   & 0.0239 & 0.028062 & 0.0374\\
   & 500 & 6.9109   & 0.0175 & 0.026517 & 0.0306\\
   & 750 & 11.8555  &  0.0117 & 0.031353 & 0.0218\\
   & 1000 & 20.4074  &  0.0078 & 0.056931 & 0.0133\\
   & 1500 & 29.3910  &  0.0061 & 0.08432 & 0.0091 \\
\noalign{\smallskip}\hline
\noalign{\smallskip}\hline
Bootstrap filter & 200 & 41.4780  &  0.0275 & 85.65 & 0.0225\\
 & 300 &    57.5201  &  0.0261 & 103.7212 & 0.0211\\
 & 400 &  67.9421   & 0.0248 & 155.0456 & 0.0199\\
 & 500 & 107.9450  &  0.0228 & 169.5578 & 0.0187\\
 & 750 & 138.0307  &  0.0186 & 241.1891 & 0.0154\\
 & 1000 & 192.2166 &  0.0174 & 318.5656 & 0.0133\\
 & 1500 & 158.3680   & 0.0166 & 388.7098 & 0.0122\\
 \noalign{\smallskip}\hline
\noalign{\smallskip}\hline
APF  & 200 &  19.4471  &  0.0209 &  49.6784 & 0.0138 \\
 & 300 &    39.2457   & 0.0182 &   69.3421 & 0.0125 \\
  & 400 & 46.9590   & 0.0123 &   86.9111 & 0.0118\\
 & 500 &  54.5811  &  0.0189 & 108.9087 & 0.0112\\
 & 750 &  91.5288   & 0.0171 & 166.3432 & 0.0100\\
 & 1000 & 105.1695  &  0.0163 & 189.5432 & 0.0087\\
  & 1500 &  122.1278  &  0.0159 & 326.7654 & 0.0074\\
\noalign{\smallskip}\hline
\noalign{\smallskip}\hline
KSAPF & 200 & 32.8328   & 0.0131 & 55.039200 & 0.0121\\
  & 300 & 47.4919    & 0.0129 & 90.691115 & 0.0116\\
  & 400 & 58.3216 & 0.0118 & 107.767974 & 0.110\\
  & 500 & 66.3554   & 0.0114 & 127.565273 & 0.102\\
  & 750 & 76.4818  &  0.0103 & 173.311428 & 0.0086 \\
  & 1000 & 93.8846   &  0.0093 & 246.09729 & 0.0073\\
  & 1500 &  151.7971   &  0.0084 & 376.8976 & 0.0068\\
\noalign{\smallskip}\hline
\noalign{\smallskip}\hline
QML & 200 &  0.0268 & 0.172 & 0.0283 & 0.0444 \\
    & 300 & 0.0201 & 0.164 & 0.0312 & 0.0331\\
    & 400 & 0.0532 & 0.153 & 0.0386 & 0.0336\\
    & 500 & 0.0675 & 0.146 & 0.0476 & 0.0327 \\
   & 750 &  0.1046 & 0.132 & 0.0631 & 0.0311\\
    & 1000 & 0.0702 & 0.118 & 0.0712 & 0.0278 \\
    & 1500 & 0.2148 & 0.110  & 0.0854 & 0.0253\\
\noalign{\smallskip}\hline
\end{tabular}
\end{center}
\end{table}

\newpage
\begin{table}[H]
\caption{\label{computing_siem}\normalsize SIEMLE estimation for the SV model. The number of observations is $n=1000$ and the number of sweeps for the Gibbs sampler is $\tilde{M}=100$ (see Appendix \ref{siem}). }
\begin{center}
\begin{tabular}{llllll}
\hline\noalign{\smallskip}
$\phi_0$ & $\sigma^{2}_0$ & $\hat{\phi}_n$ & $\hat{\sigma}^{2}_n$ & CPU (sec)   \\
\noalign{\smallskip}\hline\noalign{\smallskip}
 0.7 & 0.3 & 0.667 & 0.2892& 74300  \\
\noalign{\smallskip}\hline
\end{tabular}
\end{center}
\end{table}

\subsubsection{Parameter estimates}\label{co}

For the AR(1) Gaussian model, we run $N=1000$ estimates for each method (QML, APF, KSAPF and Bootsrap filter) and $N=500$ for the SV model. The number of observations $n$ is equal to $1000$ for the two models.\\
In order to compare with others the performance of our estimator, we compute for each method the Mean Square Error (MSE) defined by:

\begin{equation}\label{MSE}
MSE=\frac{1}{N}\left(\sum_{j=1}^{N}(\hat{\phi}_{j}-\phi_0)^{2}+(\hat{\sigma}^{2}_{j}-\sigma^{2}_0)^{2}\right),
\end{equation}

We  illustrate by boxplots the different estimates (see Figures [\ref{boxplot_compa1}] and [\ref{boxplot_compa2}]). We also illustrate in Figure [\ref{compar2}] the MSE for each estimator computed by equation(\ref{MSE}). We can see that, for the parameter $\phi_0$, the QML estimator is better for the Gaussian AR(1) model than for the SV model (see Figure [\ref{boxplot_compa1}]). Indeed, the Gaussianity assumption is wrong for the SV model. Moreover, the estimate of $\sigma^{2}_0$ by QML is very bad for the two models (see Figure [\ref{boxplot_compa2}]) and its corresponding boxplots have the largest dispersion meaning that the QML method is not very stable. The Bootstrap, APF and KSAPF have also a large dispersion of their boxplots, in particular for the parameter $\phi_0$ (see Figure [\ref{boxplot_compa1}]). Besides, the Booststrap filter is less efficient than the APF and KSAPF. 
For the Gaussian and SV model, the boxplots of our contrast estimator show  that our estimator is the most stable with respect to $\phi_0$ and we obtain similar results for $\sigma^{2}_0$. The MSE is better for the SV model and the smallest for our contrast estimator.

\newpage

\begin{figure}[H]
\includegraphics[width=174mm, height=90mm]{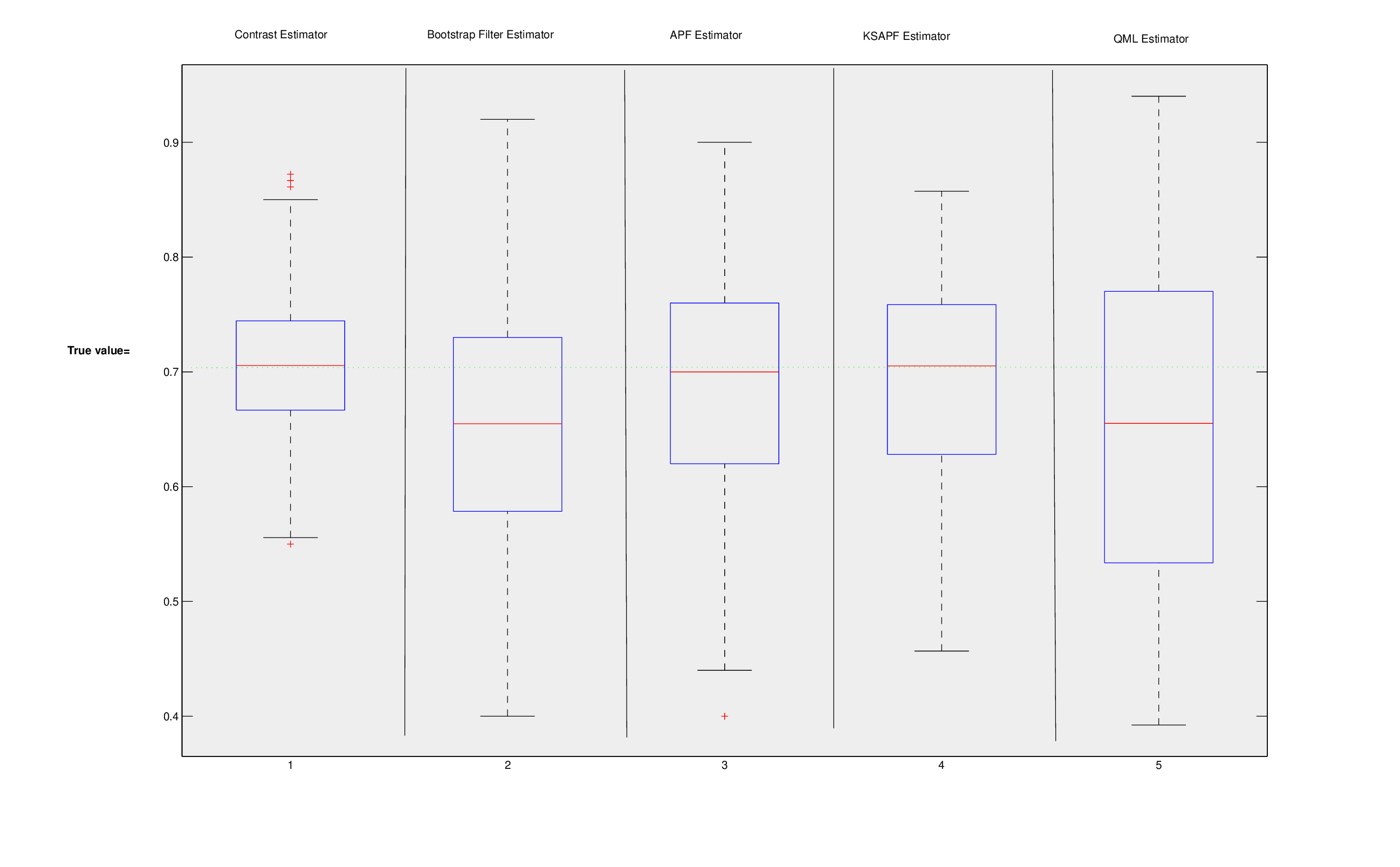}\\
\includegraphics[width=174mm, height=90mm]{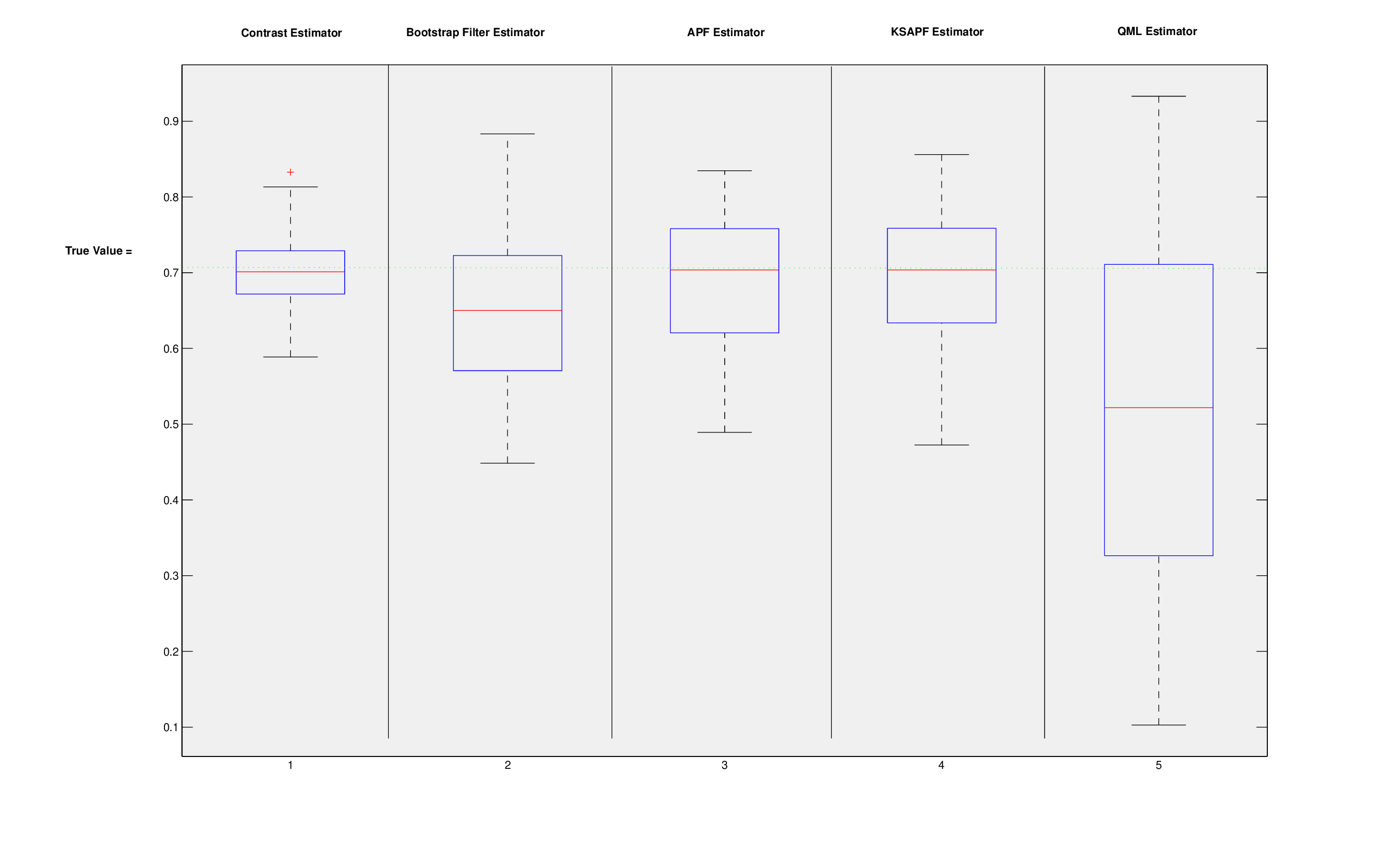}
\caption{\normalsize Boxplot of $\phi$. True value: $\phi_0=0.7$. Top Panel: Gaussian AR(1) model. Bottom Panel: SV model.}
\label{boxplot_compa1}      
\end{figure}

\begin{figure}[H]
\includegraphics[width=174mm, height=90mm]{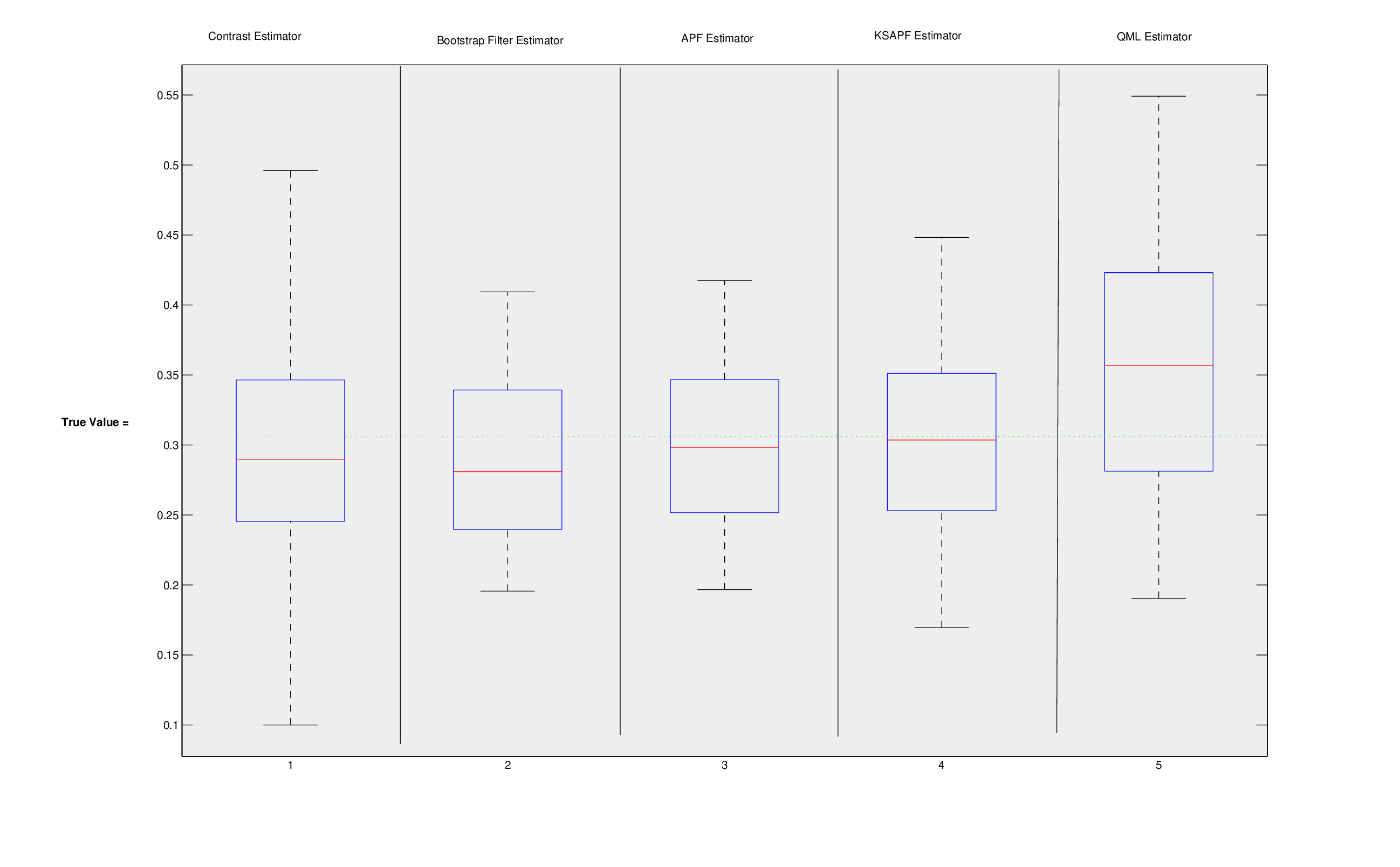}\\
\includegraphics[width=174mm, height=90mm]{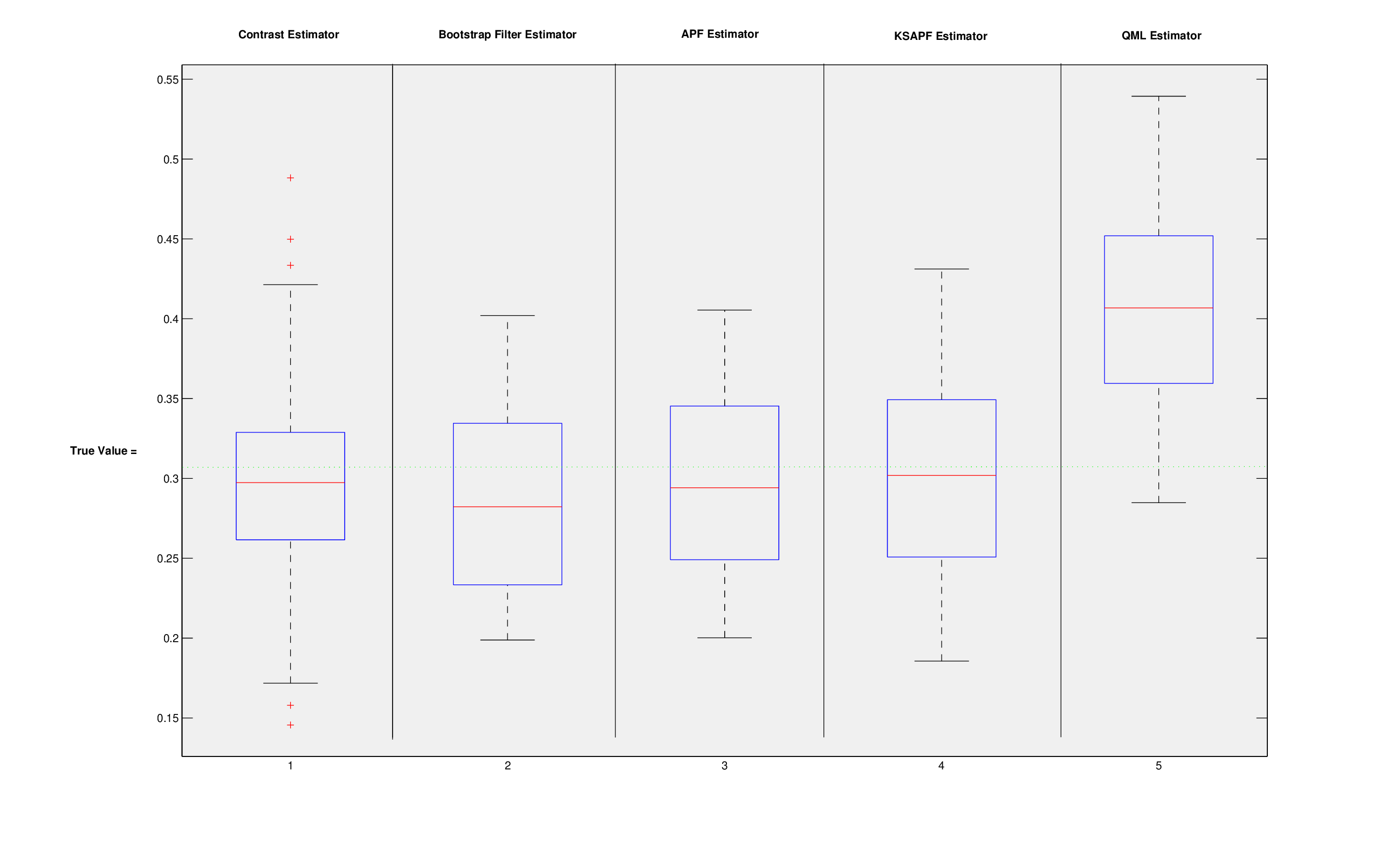}
\caption{\normalsize Boxplot of $\sigma^2$. True value: $\sigma^{2}_0=0.3$. Left: Gaussian AR(1) model. Right: SV model.}
\label{boxplot_compa2}          
\end{figure}

\begin{figure}[H]
\includegraphics[width=129mm, height=40mm]{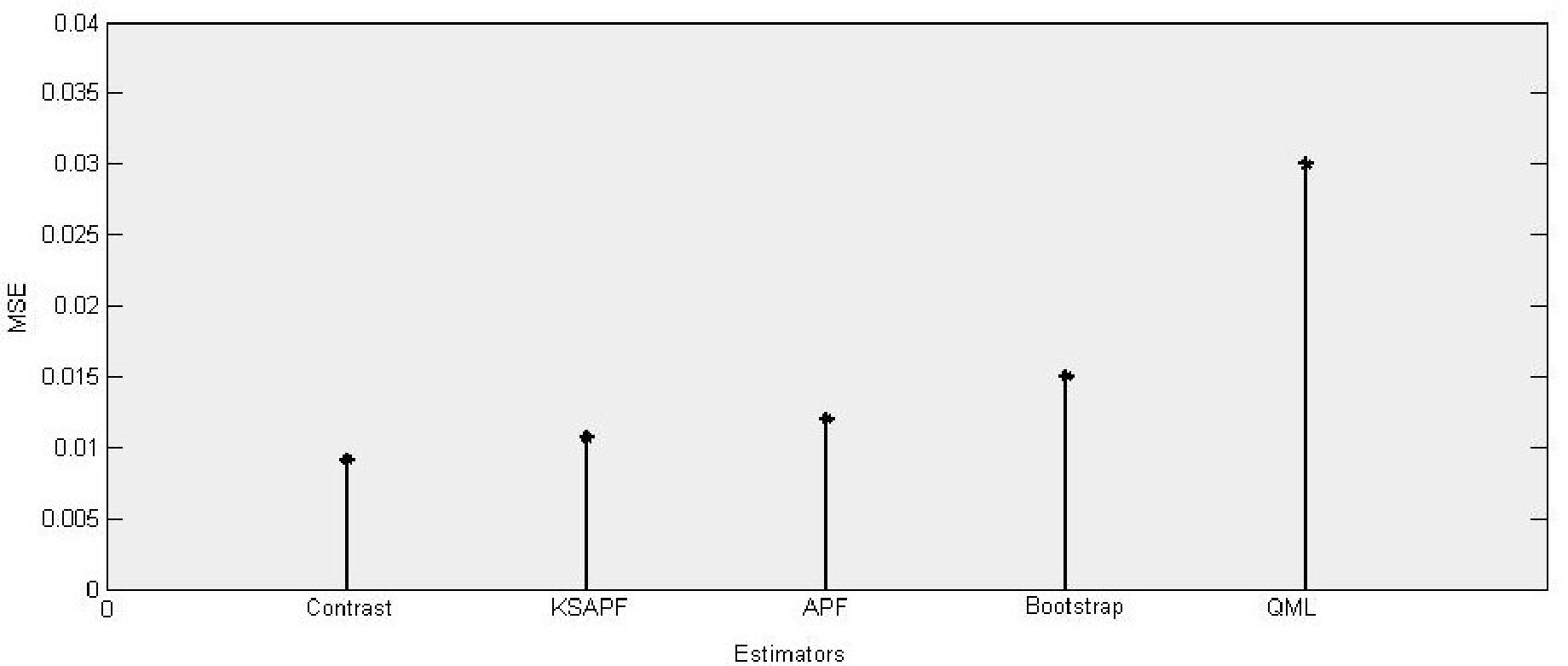}\\
\includegraphics[width=129mm, height=40mm]{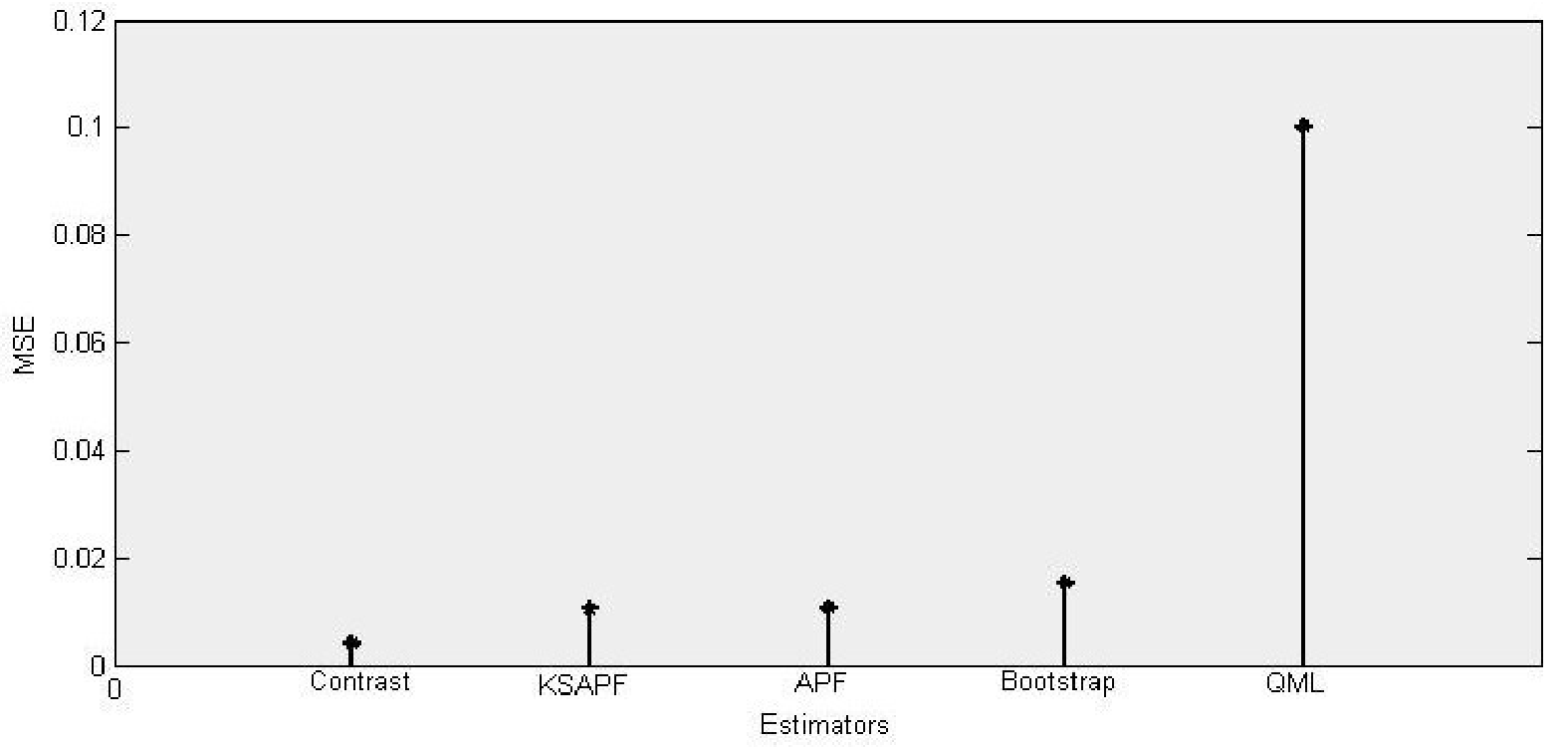}
\caption{\normalsize MSE computed by Eq.(\ref{MSE}). Top Panel: Gaussian AR(1) model. Bottom Panel: SV model.}  
\label{compar2} 
\end{figure}

\subsubsection{Confidence Interval of the contrast estimator}

To illustrate the statistical properties of our contrast estimator, we  compute for each model the confidence intervals computed with the confidence level $1-\alpha$ equal to $0.95$ for $N=1$ estimator and the coverages for $N=1000$ with respect to the number of observations. The coverage corresponds to the number of times for which the true parameter $\theta_{0,i}, i=1,2$ belongs to the confidence interval. The results are illustrated by the Figures [\ref{Covera1}]-[\ref{ic1}] and [\ref{ic2}]: for the Gaussian and SV models, the coverage converges to $95\%$ for a small number of observations.  As expected, the confidence interval decreases with the number of observations,. Note that of course a MLE confidence interval would be smaller since the MLE is efficient but the corresponding computing time would be huge. 

\begin{figure}[H]
\includegraphics[width=129mm, height=35mm]{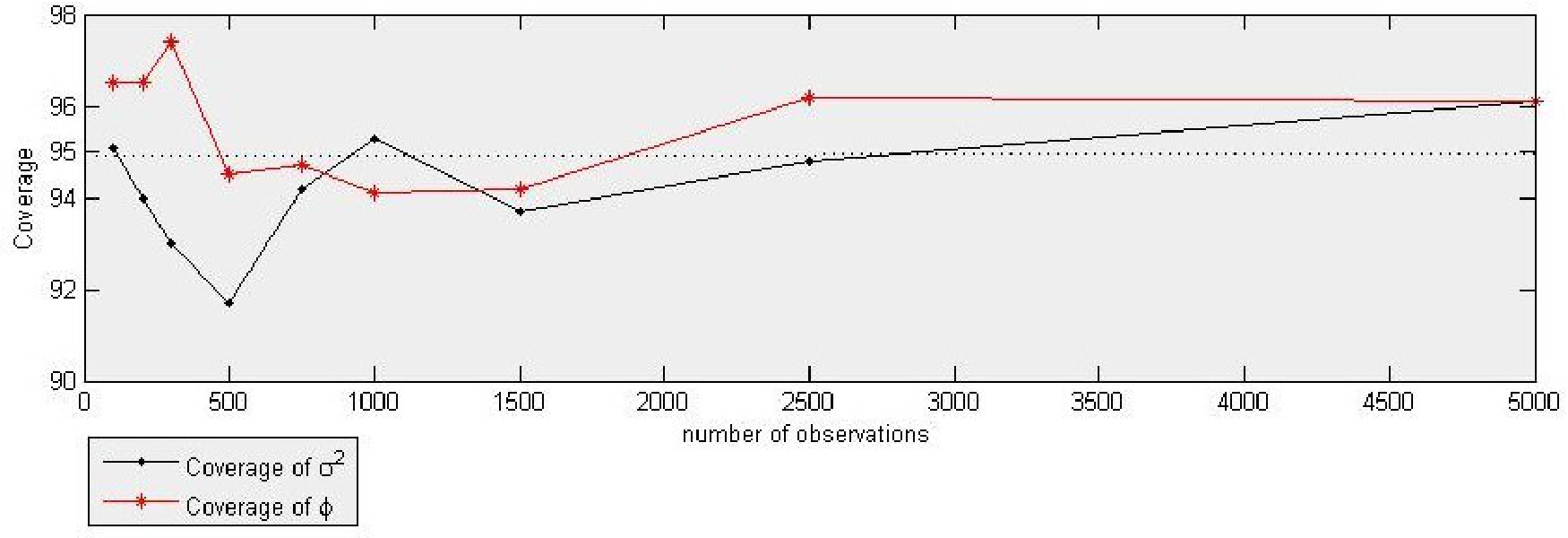}\\
\includegraphics[width=129mm, height=35mm]{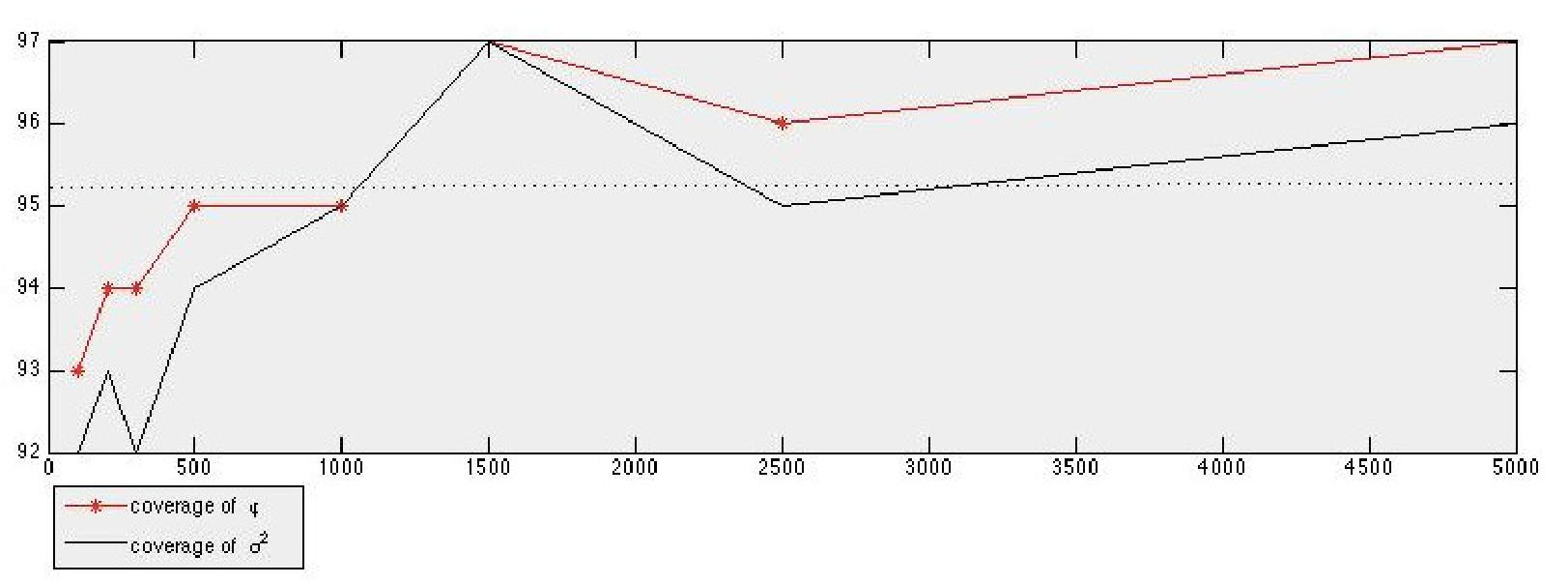}
\caption{\normalsize Coverage with respect to the number of observations $n=100$ up to $5000$ for $N=1000$ estimators . Top Panel: Gaussian AR(1) model. Bottom Panel: SV model.}
\label{Covera1}       
\end{figure}

\begin{figure}[H]
\includegraphics[width=129mm, height=35mm]{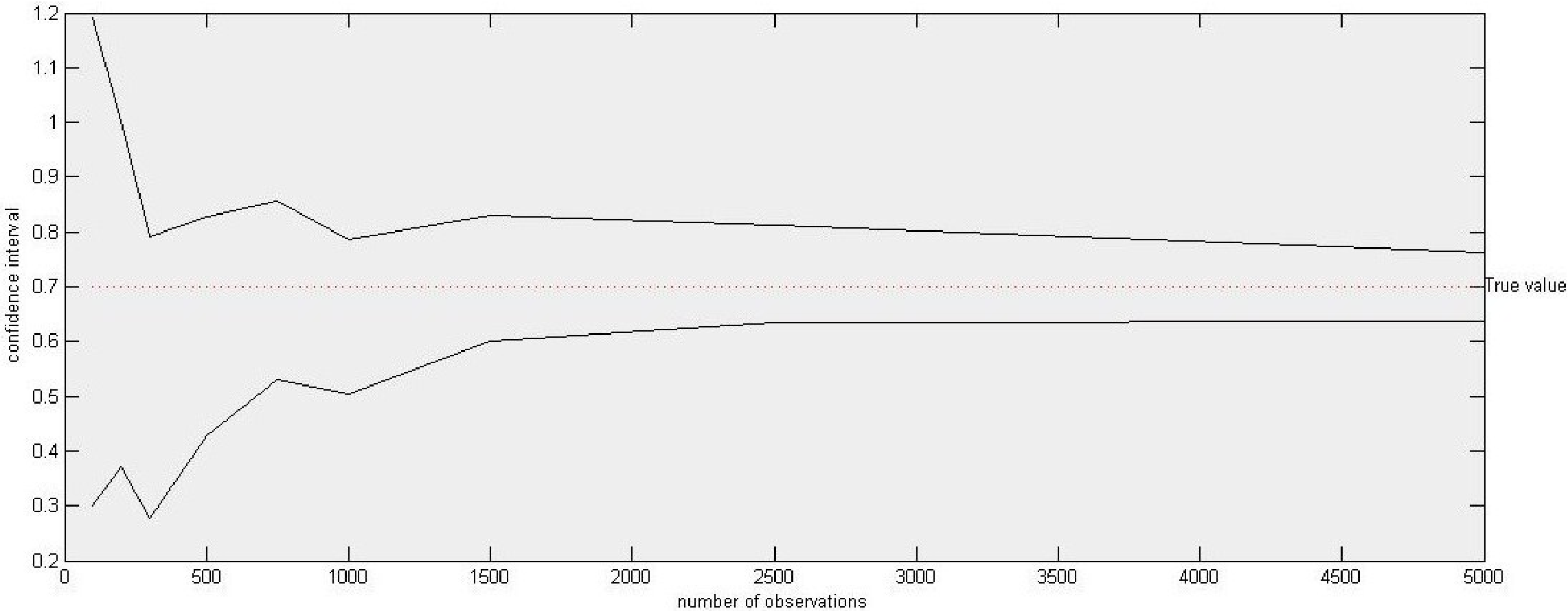}\\
\includegraphics[width=129mm, height=35mm]{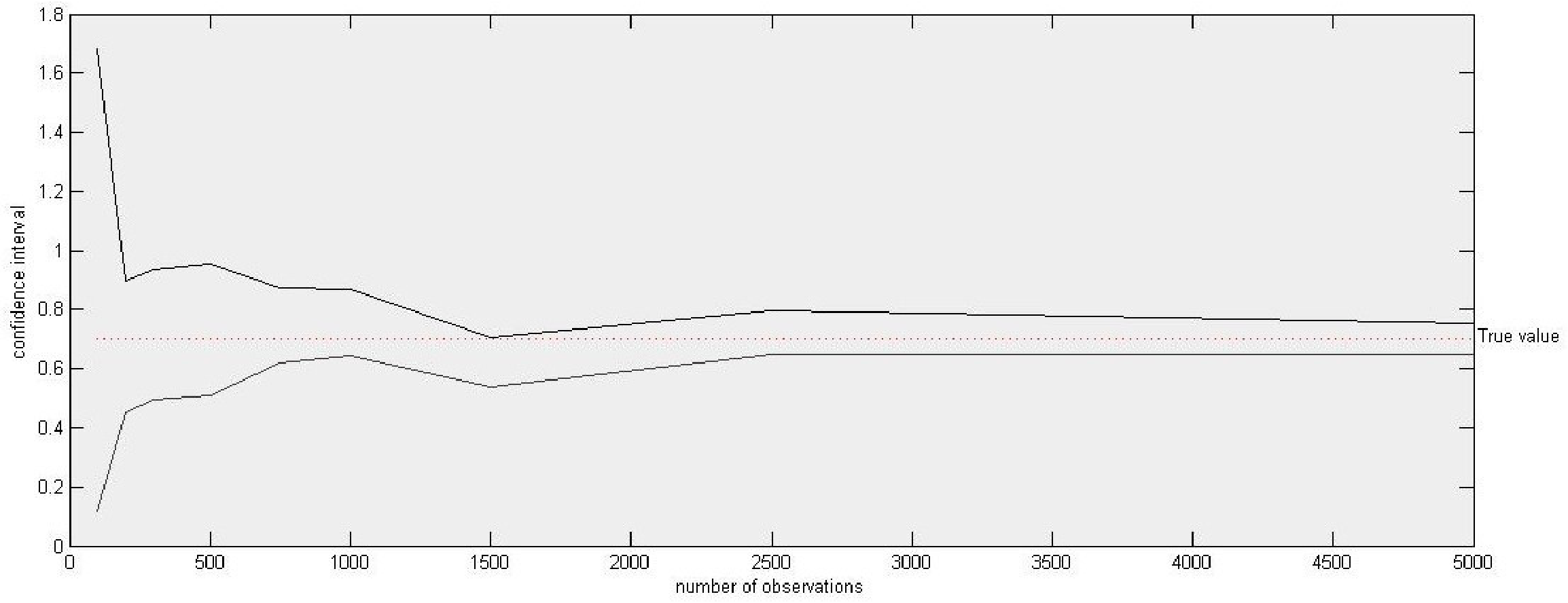}
\caption{\normalsize Confidence interval for the parameter $\phi_0$ with respect to the number of observations $n=100$ up to $5000$ for $N=1$ estimator. Top Panel: Gaussian AR(1) model. Bottom Panel: SV model.}
\label{ic1}      
\end{figure}

\begin{figure}[H]
\includegraphics[width=129mm, height=35mm]{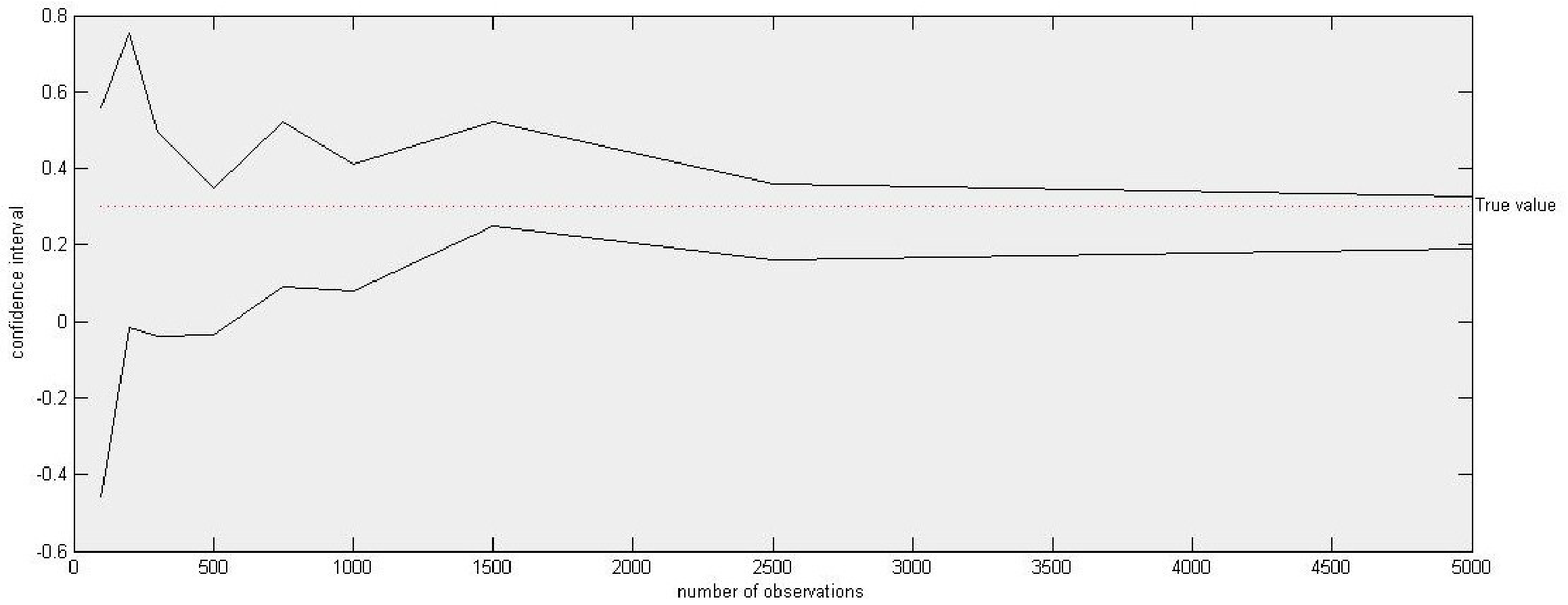}\\
\includegraphics[width=129mm, height=35mm]{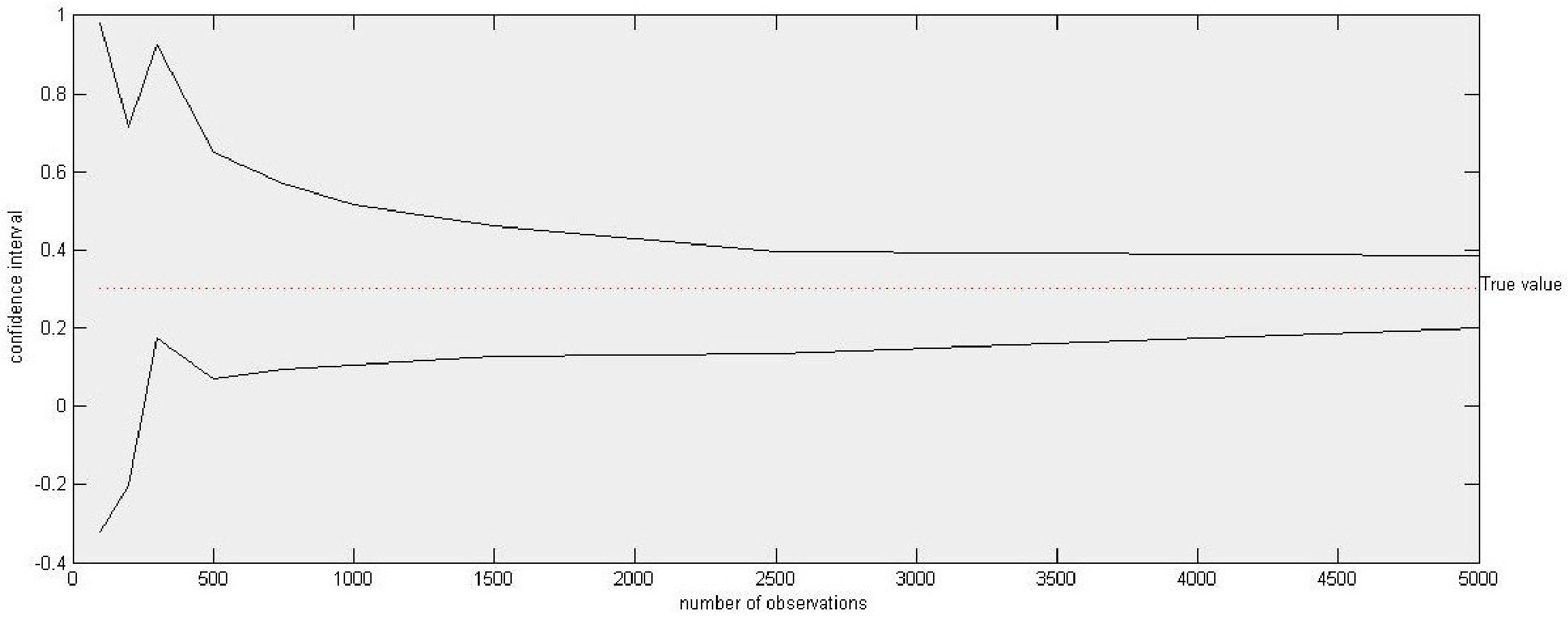}
\caption{\normalsize Confidence interval for the parameter $\sigma^2_0$ with respect to the number of observations $n=100$ up to $5000$ for $N=1$ estimator. Top Panel: Gaussian AR(1) model. Bottom Panel: SV model.}
\label{ic2}      
\end{figure}

\subsection{Application to Real Data}\label{Application_real_data}

The data consist of daily observations on FTSE stock price index and S$\&$P500 stock price index. The series taken in boursorama.com are closing prices from January, 3, 2004 to January, 2, 2007 for the FTSE and S$\&$P500 leaving a sample of $759$ observations for the two series.\\
The daily prices $S_i$ are transformed into compounded rates returns centered around their sample mean $c$ for self-normalization (see \cite{mat} and \cite{GeHaRe96}) $R_i=100\times\log\left(\frac{S_i}{S_{i-1}}\right)-c$. We want to model those data by the SV model defined in (\ref{svappl}) leading to :
\begin{eqnarray*}
Y_i&=&\log(R_i^2)-\E[\log(\xi_i^2)]\\
&=&\log(R_i^2)+1.27
\end{eqnarray*} Those data are represented on Figure [\ref{RES}]. 

\begin{figure}[H]
\includegraphics[width=66mm, height=70mm]{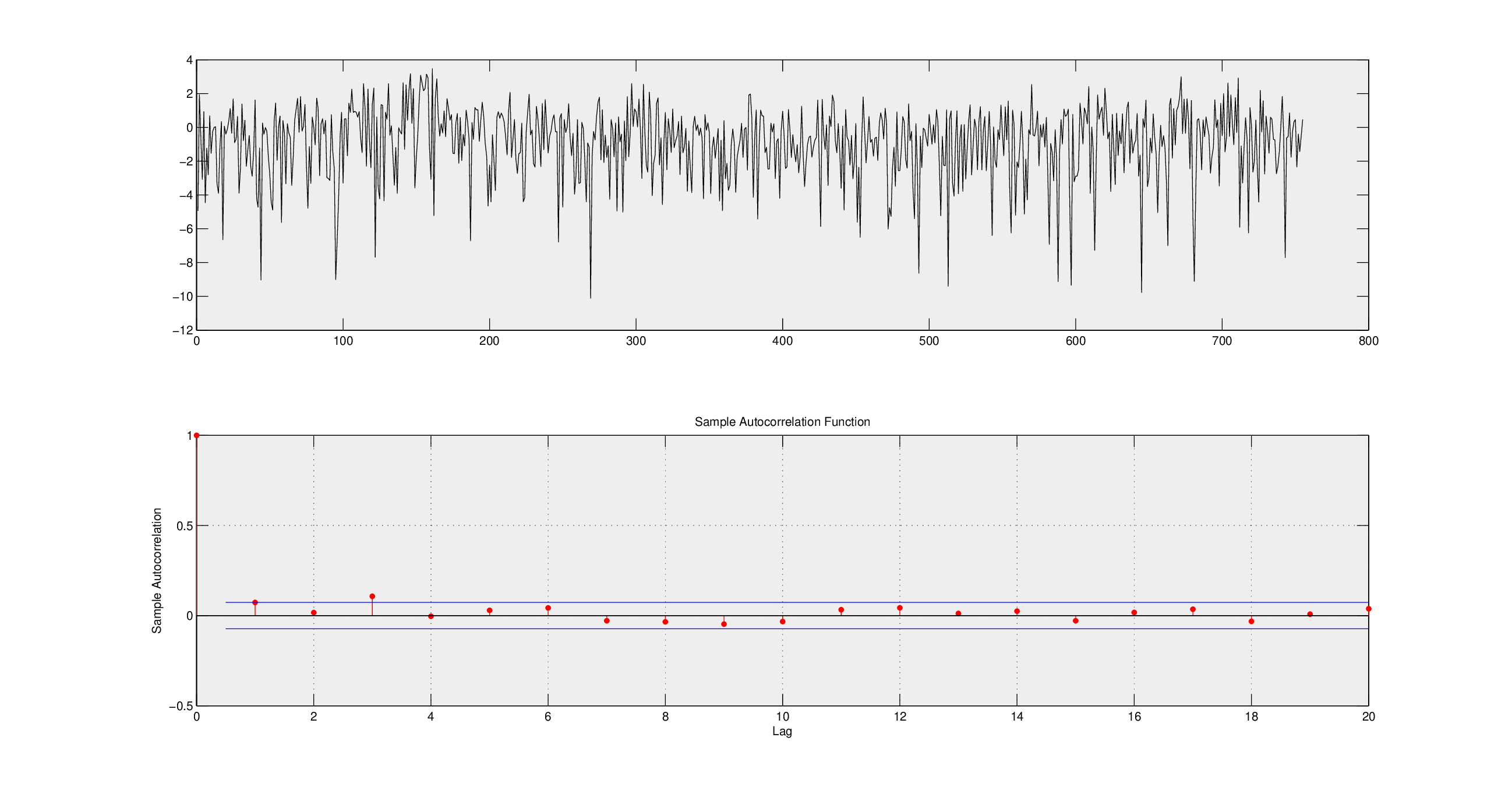}
\includegraphics[width=66mm, height=70mm]{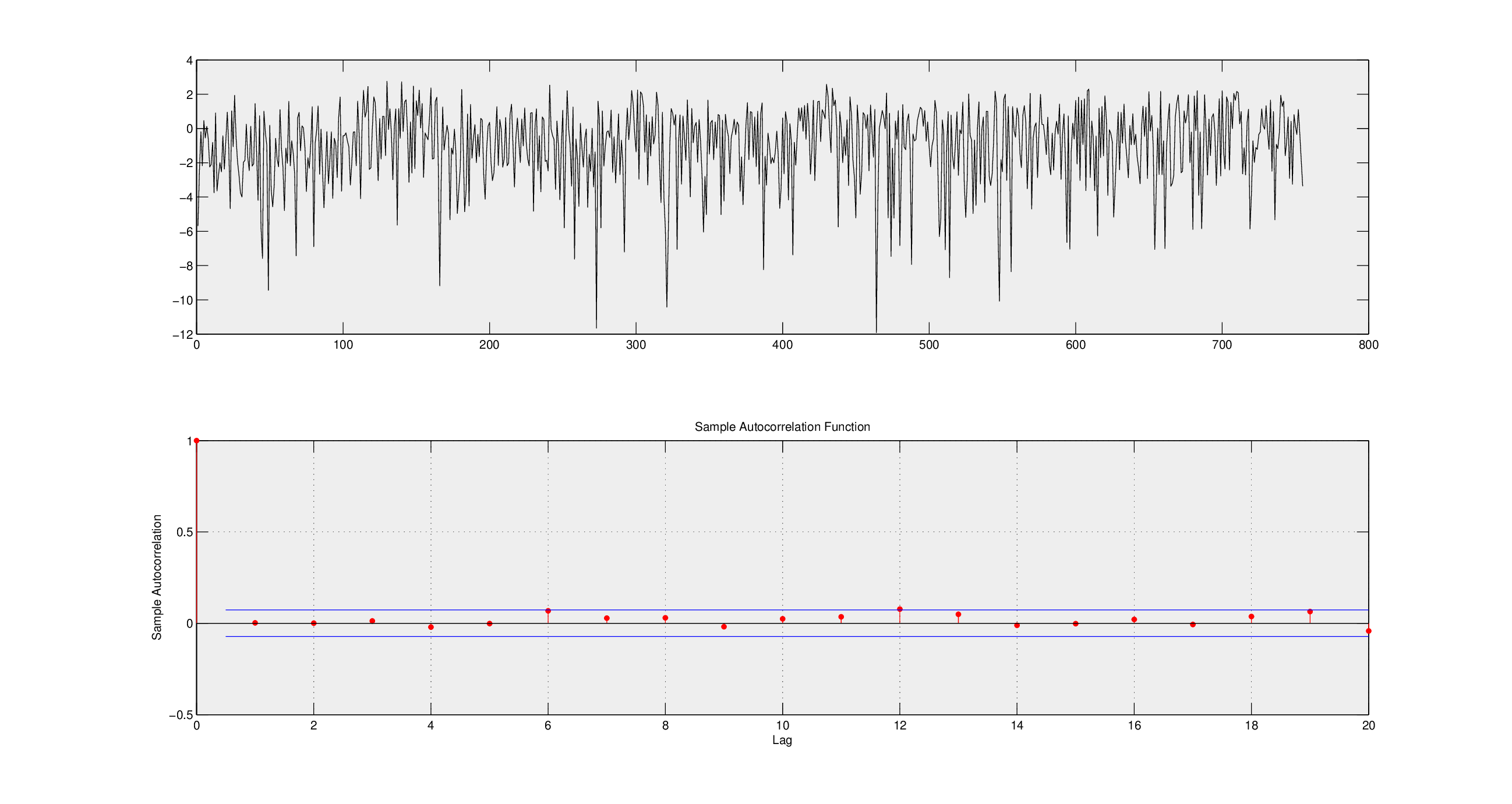}
\caption{\normalsize Top Left Panel: Graph of $Y_i$= FTSE. Top Right Panel:  Graph of $Y_i$= SP500. Bottom Left Panel: Autocorrelation of $Y_i$=FTSE.  Bottom Right Panel: Autocorrelation of $Y_i$=SP500.}
\label{RES}   
\end{figure}

\subsubsection{Parameter Estimates}

In the empirical analysis, we compare the QML, the Bootstrap filter, the APF and the KSAPF estimators. The last one is our contrast estimator. The variance of the measurement noise is $\sigma^2_{\varepsilon}=\frac{\pi^2}{2}$, that is $\beta$ is equal to $1$ (see Section \ref{implementation}). Table [\ref{resu}] summarises the parameter estimates and the computing time for the five methods. For initialization of the Bayesian procedure, we take the Uniform law for the parameters $p(\theta_1)=\mathcal{U}(0.4, 0.95)\times \mathcal{U}(0.1, 0.5)$ and the stationary law for the log-volatility process $X_1$, i.e, $f_{\theta_1}(X_1)=\mathcal{N}\left(0, \frac{\sigma^{2}_{1}}{1-\phi_1^2}\right)$.\\

The estimates of $\phi$ are in full accordance with results reported in previous studies of SV models. This parameter is in general close to 1 which implies persistent logarithmic volatility data. We compute the corresponding confidence intervals at level $5\%$ (see Table [\ref{icc}]). For the SP500 and the FTSE, note that the Bootstrap filter and the QML are not in the confidence interval for the two parameters $\phi$ and $\sigma^2$. These results are consistent with the simulations where we showed that both methods were biased for the SV model (see Section \ref{co}).  Note also that as expected the computing time for the QML is the shortest because it assumes Gaussianity which is probably not the case here. Except of QML, the contrast is the fastest method. The results are presented in Table [\ref{resu}] below.

\begin{table}[H]
\caption{\normalsize \label{resu} Parameter estimates: $n=1000$ and the number of particles $M=5000$ for the particle filters.}
\begin{center}
\begin{tabular}{lllllll}
\hline\noalign{\smallskip}
Index &  & FTSE &  &  & SP500 &   \\
\noalign{\smallskip}\hline\noalign{\smallskip}
& $\hat{\phi}_n$ & $\hat{\sigma}^{2}_n$ & CPU & $\hat{\phi}_n$ & $\hat{\sigma}^{2}_n$ & CPU \\
\noalign{\smallskip}\hline\noalign{\smallskip}
Contrast &  0.69 & 0.27 & 26 & 0.78 & 0.13 & 38  \\
\noalign{\smallskip}\hline
Bootstrap filter &  0.91 & 0.15 & 204 & 0.830 & 0.247 & 214  \\
\noalign{\smallskip}\hline
APF &  0.693 & 0.29  & 169 & 0.734 & 0.108 & 182  \\
\noalign{\smallskip}\hline
KSAPF &  0.697 & 0.29 & 152 & 0.80 & 0.12 & 175  \\
\noalign{\smallskip}\hline
QML & 0.649 & 0.08 & 0.07 &  0.895& 0.257 & 0.1 \\
\noalign{\smallskip}\hline
\end{tabular}
\end{center}
\end{table}

\begin{table}[H]
\caption{\normalsize  \label{icc} Confidence interval at level $5\%$.}
\begin{center}
\begin{tabular}{lllll}
\hline\noalign{\smallskip}
Index &   Confidence Interval   \\
\noalign{\smallskip}\hline\noalign{\smallskip}
&  $\phi$ &   $\sigma^{2}$ \\
\noalign{\smallskip}\hline\noalign{\smallskip}
FTSE & [ 0.6627  ; 0.7173]& [0.1771  ; 0.3629]  \\
\noalign{\smallskip}\hline
SP500 & [ 0.7086  ; 0.8514]& [ 0.0278 ; 0.2322]  \\
\noalign{\smallskip}\hline
\end{tabular}
\end{center}
\end{table}

\subsection{Summary and Conclusions}

In this paper we propose a new method to estimate an hidden stochastic model on the form (\ref{mod1}). This method is based on the deconvolution strategy and leads to a consistent and asymptotically normal estimator. We empirically study the performance of our estimator for the Gaussian AR(1) model and SV model and we are able to construct a confidence interval (see Figures [\ref{ic1}] and [\ref{ic2}]). As the boxplots [\ref{boxplot_compa1}] and [\ref{boxplot_compa2}] show, only the Contrast, the APF, and the KSAPF estimators are comparable. Indeed the QML and the Bootstrap Filter estimators are biased and their MSE are bad, and in particular, the QML method is the worst estimator (see Figure [\ref{compar2}]). One can see that the QML estimator proposed by Harvey et al. is not suitable for the SV model because the approximation of the log-chi-square density by the Gaussian density is not robust (see Figure [\ref{approxim}]). Furthermore, if we compare the MSE of the three Sequential Bayesian estimation, the KSAPF estimator is the best method. From a Bayesian point of view, it is known that the Bootstrap filter is less efficient than the APF and KSAPF filter since by using the density transition as the importance density, the propagation step of the particles will be made without taking care the observations (see \cite{DoFr01}).\\

Among the three estimators (Contrast, APF, and KSAPF) which give good results our estimator outperforms the others in a MSE aspect (see Figure [\ref{compar2}]). Moreover, as we already mentioned, in the combined state and parameters estimation the difficulties are the choice of $Q$, $h$ and the prior law since the results depend on these choices. In the numerical section, we have used the stationary law for the variable $X_1$  and this choice yields good results but we expect that the behavior of the Bayesian estimation will be worse for another prior.
The implementation of the contrast estimator is the easiest and it leads to confidence intervals with a larger variance than the SIEMLE but at a smaller computing cost, in particular for the AR(1) Gaussian model (see Table [\ref{time_gaussian}]). Furthermore, the contrast estimator does not require an arbitrary choice of parameter in practice. 

\newpage

\appendix

\small
\section{M-Estimator}\label{appen}

\begin{definition}{Geometrical ergodic process} \label{def_geo}\\

Denote by $Q^n(x,.)$ the transition kernel at step $n$ of a (discrete-time) stationary Markov chain $(X_n)_n$ which started at $x$ at time $0$. That is, $Q^n(x,F) = \mathbb{P}(X_n \in F | X_0 =x)$. Let $\pi$ denote the stationary law of $X_n$ and let $f$ be any measurable function. We call mixing coefficients $(\beta_n)_n$ the coefficients defined by, for each $n$:
\begin{equation*}
\beta_n = \int \left[ \sup_{||f||_{\infty}\leq 1}\left|Q^n(x,f) - \pi(f) \right| \right] \pi(dx),
\end{equation*}
where $\pi(f) = \int f(y) \pi(dy)$. We say that a process is geometrically ergodic if the decreasing of the sequence of the mixing coefficients $(\beta_n)_n$ is geometrical, that is:
$$\exists\ 0<\eta<1,\text{ such that } \beta_n \leq \eta^n.$$
\end{definition}

The following results are the main tools for the proof of Theorem \ref{MR}.\\

Consider the following quantities:

\begin{equation*}
\mathbf{P}_{n}h_{\theta}=\frac{1}{n}\sum_{i=1}^{n}h_{\theta}(Y_i);\quad \mathbf{P}_{n}S_{\theta}=\frac{1}{n}\sum_{i=1}^{n}\nabla_{\theta}h_{\theta}(Y_i) \text{ and }  \mathbf{P}_{n}H_{\theta}=\frac{1}{n}\sum_{i=1}^{n}\nabla^{2}_{\theta}h_{\theta}(Y_i)
\end{equation*}

where $h_{\theta}(y)$ is real function from $\Theta \times \mathcal{Y}$ with value in $\R$.\\

\begin{lemma}{Uniform Law of Large Numbers (\textbf{ULLN})(see \cite{newey} for the proof.)}\label{ULLN}\qquad\\

Let $(Y_i)$ be an ergodic stationary process and suppose that:
\begin{enumerate}
\item   $h_{\theta}(y)$ is continuous in $\theta$ for all $y$ and measurable in $y$ for all $\theta$ in the compact subset $\Theta$.
\item   There exists a function $s(y)$(called the dominating function) such that $\left|h_{\theta}(y)\right|\leq s(y)$ for all $\theta \in \Theta$ and $\E[s(Y_1)]<\infty$. Then:
\end{enumerate}

\begin{equation*}
\sup_{\theta \in \Theta}\left|\mathbf{P}_{n}h_{\theta}-\mathbf{P}h_{\theta}\right|\rightarrow 0 \qquad \text{ in probability as n } \rightarrow  \infty. 
\end{equation*}

Moreover, $\mathbf{P}h_{\theta}$ is a continuous function of $\theta$.\\
\end{lemma}

\begin{proposition}[Proposition 7.8 p. 472 in \cite{Fy00}. The proof is in \cite{At85} Theorem 4.1.5.]\label{fumio}

Suppose that:
\begin{enumerate}
\item $\theta_0$ is in the interior of $\Theta$.
\item $h_{\theta}(y)$ is twice continuously differentiable in $\theta$ for any $y$.
\item  The Hessian matrix of the application $\theta \mapsto \mathbf{P}h_{\theta}$ is non-singular.
\item $\sqrt{n}\mathbf{P}_{n}S_{\theta} \stackrel{}{\rightarrow} \mathcal{N}(0, \Omega(\theta_0))$ in law as n $\rightarrow \infty$, with $\Omega(\theta_0)$ a positive definite matrix.
\item Local dominance on the Hessian: for some neighbourhood $\mathcal{U}$ of $\theta_0$:
\begin{equation*}
\E\left[\sup_{\theta \in \mathcal{U} }\left\|\nabla_{\theta}^{2}h_{\theta}(Y_{1})\right\|\right]<\infty,
\end{equation*}
\end{enumerate}
so that, for any consistent estimator $\hat{\theta}$ of $\theta_0$ we have: $\mathbf{P}_{n}H_{\hat{\theta}} \rightarrow \E[\nabla^{2}_{\theta}h_{\theta}(Y_1)]$ in probability as n $\rightarrow \infty$.\\

Then, $\hat{\theta}$ is asymptotically normal with asymptotic covariance matrix given by:
\begin{equation*}
\Sigma(\theta_{0})=\E[\nabla^{2}_{\theta}h_{\theta}(Y_1)]^{-1} \Omega(\theta_{0})\E[\nabla^{2}_{\theta}h_{\theta}(Y_1)]^{-1}
\end{equation*}
where the differential $\nabla^{2}_{\theta}h_{\theta}(Y_1)$ is taken at point $\theta=\theta_0$.
\end{proposition}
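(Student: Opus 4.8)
The plan is to derive the limit distribution from the first-order condition satisfied by the M-estimator, combined with a one-term Taylor expansion of the empirical score, and to conclude via Slutsky's theorem. Since $\hat{\theta}$ is consistent and $\theta_0$ lies in the interior of $\Theta$ (assumption 1), with probability tending to one $\hat{\theta}$ is itself an interior point, so the stationarity condition $\mathbf{P}_{n}S_{\hat{\theta}}=0$ holds. Using assumption 2 (twice continuous differentiability), I would expand each coordinate of $\theta \mapsto \mathbf{P}_{n}S_{\theta}$ about $\theta_0$ by the mean value theorem, obtaining
$$0 = \mathbf{P}_{n}S_{\hat{\theta}} = \mathbf{P}_{n}S_{\theta_0} + \mathbf{P}_{n}H_{\bar{\theta}}\,(\hat{\theta}-\theta_0),$$
where $\bar{\theta}$ is an intermediate point on the segment joining $\hat{\theta}$ and $\theta_0$ (chosen coordinatewise and assembled into the matrix $\mathbf{P}_{n}H_{\bar{\theta}}$). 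Multiplying by $\sqrt{n}$ and rearranging yields the representation
$$\sqrt{n}(\hat{\theta}-\theta_0) = -\left[\mathbf{P}_{n}H_{\bar{\theta}}\right]^{-1}\sqrt{n}\,\mathbf{P}_{n}S_{\theta_0},$$
valid as soon as $\mathbf{P}_{n}H_{\bar{\theta}}$ is invertible with high probability.

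Next I would identify the limits of the two factors separately. By assumption 4, $\sqrt{n}\,\mathbf{P}_{n}S_{\theta_0}\to\mathcal{N}(0,\Omega(\theta_0))$ in law. For the Hessian factor, the key observation is that $\bar{\theta}$ is a random argument sandwiched between $\hat{\theta}$ and $\theta_0$, hence $\bar{\theta}\to\theta_0$ in probability by consistency. To transfer this convergence to $\mathbf{P}_{n}H_{\bar{\theta}}$, I would apply the Uniform Law of Large Numbers (Lemma \ref{ULLN}) to the entries of the Hessian: assumption 5 supplies the integrable dominating function on a neighbourhood $\mathcal{U}$ of $\theta_0$, giving $\sup_{\theta\in\mathcal{U}}\|\mathbf{P}_{n}H_{\theta}-\mathbf{P}H_{\theta}\|\to 0$ in probability together with continuity of $\theta\mapsto\mathbf{P}H_{\theta}$. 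Combining the uniform convergence with $\bar{\theta}\to\theta_0$ then gives $\mathbf{P}_{n}H_{\bar{\theta}}\to\mathbf{P}H_{\theta_0}=\E[\nabla^{2}_{\theta}h_{\theta}(Y_1)]$ in probability, which is exactly the auxiliary convergence asserted in the statement. Since this limiting matrix is non-singular (assumption 3), the continuous mapping theorem yields $[\mathbf{P}_{n}H_{\bar{\theta}}]^{-1}\to\E[\nabla^{2}_{\theta}h_{\theta}(Y_1)]^{-1}$ in probability, and in particular the inverse exists eventually.

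Finally I would apply Slutsky's theorem to the product representation: a matrix converging in probability to the constant $-\E[\nabla^{2}_{\theta}h_{\theta}(Y_1)]^{-1}$ multiplied by a vector converging in law to $\mathcal{N}(0,\Omega(\theta_0))$ converges in law to a centered Gaussian with covariance
$$\E[\nabla^{2}_{\theta}h_{\theta}(Y_1)]^{-1}\,\Omega(\theta_0)\,\E[\nabla^{2}_{\theta}h_{\theta}(Y_1)]^{-1},$$
where the symmetry of the Hessian identifies the two outer factors and produces the claimed sandwich form of $\Sigma(\theta_0)$. The hard part will be the middle step: controlling $\mathbf{P}_{n}H_{\bar{\theta}}$ evaluated at the random, data-dependent point $\bar{\theta}$. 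A pointwise law of large numbers does not suffice here, and one genuinely needs the uniform control furnished by the dominance hypothesis and the ULLN to pass from $\bar{\theta}\to\theta_0$ to $\mathbf{P}_{n}H_{\bar{\theta}}\to\mathbf{P}H_{\theta_0}$.
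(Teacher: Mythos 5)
Your argument is correct: the first-order condition, the coordinatewise mean-value expansion of the empirical score, the identification of the Hessian factor via the local dominance hypothesis and the ULLN applied on a compact neighbourhood of $\theta_0$, and the final application of Slutsky's theorem together constitute exactly the standard proof of this result. The paper itself does not prove Proposition \ref{fumio} --- it only cites Hayashi (Proposition 7.8) and Amemiya (Theorem 4.1.5) --- and your derivation is precisely the argument given in those references, including the correct emphasis on the fact that uniform (not merely pointwise) convergence of $\mathbf{P}_{n}H_{\theta}$ is what allows evaluation at the random intermediate point $\bar{\theta}$.
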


\begin{proposition}[The proof is in \cite{Ga04}]\label{galin}\quad\\

Let $Y_i$ be an ergodic stationary Markov chain and let g: $\mathcal{Y}$ $\rightarrow$ $\mathbb{R}$ a borelian function. Suppose that $Y_i$ is geometrically ergodic and  $\E\left[|g(Y_1)|^{2+\delta}\right]<\infty$ for some $\delta >0$. Then, when $n \rightarrow \infty$,
\begin{equation*}
\sqrt{n}(\mathbf{P}_{n}g-\mathbf{P}g)\stackrel{}{\rightarrow}\mathcal{N}(0,\sigma^{2}_{g}) \text{ in law,}
\end{equation*}
where $\sigma^{2}_{g}:=Var\left[(g(Y_{1})\right]+2\sum_{j=1}^{\infty}Cov \left(g(Y_{1}), g(Y_{j})\right)<\infty$
\end{proposition}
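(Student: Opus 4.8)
The plan is to recognise the statement as a central limit theorem for a functional of a strongly mixing stationary sequence and to reduce it to a classical mixing CLT. First I would center the summands: set $W_i = g(Y_i) - \mathbf{P}g$, so that $(W_i)_i$ is stationary with $\E[W_1]=0$ and, by hypothesis (1), $\E[|W_1|^{2+\delta}]<\infty$; the claim becomes $n^{-1/2}\sum_{i=1}^{n} W_i \to \mathcal{N}(0,\sigma_g^2)$ in law. Since $g$ is a fixed Borel function of $Y_i$, the sequence $(W_i)$ inherits the dependence structure of $(Y_i)$: the coefficients $\beta_n$ of Definition \ref{def_geo} are precisely the absolute regularity ($\beta$-mixing) coefficients of the chain, and because the strong mixing coefficients satisfy $\alpha_n \leq \beta_n$, geometric ergodicity yields $\alpha_n \leq \eta^n$ for some $0<\eta<1$.

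Next I would control the covariance structure. By Davydov's (equivalently Rio's) covariance inequality for strongly mixing sequences, there is a universal constant $C$ with
\begin{equation*}
\left| \C\left(W_1, W_{1+j}\right) \right| \leq C\, \alpha_j^{\delta/(2+\delta)} \, \E\left[|W_1|^{2+\delta}\right]^{2/(2+\delta)}.
\end{equation*}
Because $\alpha_j \leq \eta^j$ decays geometrically, the right-hand side is summable in $j$, so $\sum_{j\geq 1}|\C(W_1,W_{1+j})| <\infty$; this already shows that the series defining $\sigma_g^2$ converges absolutely and that $\sigma_g^2 = \Var(W_1) + 2\sum_{j\geq 1}\C(W_1,W_{1+j})$ is finite and well-defined. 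The same bound lets me pass to the limit in $\Var\!\left(n^{-1/2}\sum_{i=1}^{n} W_i\right)=\Var(W_1)+2\sum_{j=1}^{n-1}(1-j/n)\C(W_1,W_{1+j})$ by dominated convergence on the Cesàro average of the covariances, which identifies $\sigma_g^2$ as the limiting variance.

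I would then invoke the CLT for stationary strongly mixing sequences (in the form of Ibragimov, or Doukhan--Massart--Rio / Rio's monograph): if $(W_i)$ is stationary and centered with $\E[|W_1|^{2+\delta}]<\infty$ and mixing coefficients satisfying $\sum_n \alpha_n^{\delta/(2+\delta)}<\infty$, then $n^{-1/2}\sum_{i=1}^{n} W_i$ converges in law to $\mathcal{N}(0,\sigma_g^2)$. All hypotheses have been verified: stationarity and ergodicity are assumed, the moment condition is (1), and the mixing series is a convergent geometric series since $\alpha_n^{\delta/(2+\delta)} \leq \eta^{\,n\delta/(2+\delta)}$. Rewriting $W_i$ in terms of $g$ yields exactly $\sqrt{n}(\mathbf{P}_n g - \mathbf{P}g) \to \mathcal{N}(0,\sigma_g^2)$.

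The main obstacle is the mixing CLT itself, which is the genuinely hard analytic input (this is where \cite{Ga04} does its real work); on our side the essential verifications are the geometric summability of the covariance bound and the matching of $\sum_{j\geq 1}\C(W_1,W_{1+j})$ with the limit of $\Var(n^{-1/2}\sum_i W_i)$. If a self-contained argument is preferred, one can replace the mixing CLT by Gordin's martingale-approximation method: solve the Poisson equation $\hat g - Q\hat g = g-\mathbf{P}g$, where the Neumann series $\hat g = \sum_{k\geq 0} Q^k(g-\mathbf{P}g)$ converges geometrically precisely because the chain is geometrically ergodic, write $g(Y_i)-\mathbf{P}g$ as a stationary martingale difference plus a telescoping remainder $\hat g(Y_1)-\hat g(Y_{n+1})$, and apply the martingale CLT, the remainder being $O_{\mathbb{P}}(1)$ and hence negligible after dividing by $\sqrt{n}$.
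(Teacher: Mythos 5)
Your main argument is correct and is essentially the proof that the paper delegates to \cite{Ga04}: for a stationary chain, the coefficients $\beta_n$ of Definition \ref{def_geo} are exactly the $\beta$-mixing coefficients, geometric ergodicity thus gives $\alpha_n \le \beta_n \le \eta^n$ for the strong mixing coefficients, Davydov's inequality makes the covariance series absolutely convergent, and Ibragimov's CLT for stationary strongly mixing sequences under $\E\left[|W_1|^{2+\delta}\right]<\infty$ and $\sum_n \alpha_n^{\delta/(2+\delta)}<\infty$ yields the conclusion; your variance identification via the Ces\`aro argument is also the standard one, and your indexing $\C\left(g(Y_1),g(Y_{1+j})\right)$ quietly corrects a slip in the paper's statement, whose sum as written starts at $j=1$ and double-counts the variance term.

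One caution on your optional Gordin alternative: with geometric ergodicity defined, as here, through total-variation coefficients (a supremum over $\|f\|_{\infty}\le 1$), the Neumann series $\hat g=\sum_{k\ge 0}Q^k(g-\mathbf{P}g)$ is only guaranteed to converge for \emph{bounded} $g$; for an unbounded $g$ satisfying merely $\E\left[|g(Y_1)|^{2+\delta}\right]<\infty$ one needs either $V$-uniform ergodicity with $|g|$ dominated by a suitable power of $V$, or reversibility of the chain (to obtain an $\mathbb{L}^{2}$ spectral gap), before the Poisson equation can be solved. So the martingale-approximation route is not self-contained as sketched, and the mixing route you put first is the one that actually closes the proof.
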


\section{Proofs of Theorem \ref{MR}}\label{proof_result}

For the reader convenience we split the proof of Theorem \ref{MR} into three parts: in Subsection \ref{EoE}, we give the proof of the existence of our contrast estimator defined in (\ref{Procedure}). In Subsection \ref{CoE}, we prove the consistency, that is, the first part of Theorem \ref{MR}. Then, we prove the asymptotic normality of our estimator in  Subsection \ref{ANoE}, that is, the second part of Theorem \ref{MR}.  The Section \ref{PC} is devoted to  Corollary \ref{lele}. Finally, in  Section \ref{appendiceB} we prove that Theorem \ref{MR} applies for the AR(1) and SV models.

\subsection{ Proof of the existence and measurability of the M-Estimator}\label{EoE}

By assumption, the function $\theta \mapsto \left\|l_{\theta}\right\|_{2}^2$ is continuous. Moreover, $l^{*}_{\theta}$ and then $u^{*}_{l_{\theta}}(x)=\frac{1}{2\pi}\int e^{ixy}\frac{l^{*}_{\theta}(-y)}{f^{*}_{\varepsilon}(y)}dy$ are continuous w.r.t $\theta$. In particular, the function $m_{\theta}(\y_{i})=\left\|l_{\theta}\right\|_{2}^2-2y_{i+1}u^{*}_{l_{\theta}}(y_{i})$ is continuous w.r.t $\theta$. Hence, the function $\mathbf{P}_{n}m_{\theta}=\frac{1}{n}\sum_{i=1}^{n}m_{\theta}(\Y_i)$ is continuous w.r.t $\theta$ belonging to the compact subset $\Theta$. So, there exists $\tilde{\theta}$ belongs to $\Theta$ such that: 
 \begin{equation*}
\inf_{\theta \in \Theta}\mathbf{P}_{n}m_{\theta}=\mathbf{P}_{n}m_{\tilde{\theta}}.\qed
 \end{equation*}
 
\subsection{ Proof of the Consistency}\label{CoE}

By assumption $l_{\theta}$ is continuous w.r.t $\theta$ for any $x$ and measurable w.r.t $x$ for all $\theta$ which implies the continuity and the measurability of the function $\mathbf{P}_{n}m_{\theta}$ on the compact subset $\Theta$. Furthermore, the local dominance assumption \textbf{(C)} implies that $\E\left[\sup_{\theta \in \Theta }\left|m_{\theta}(\Y_i)\right|\right]$ is finite. Indeed,

\begin{eqnarray*}
\left|m_{\theta}(\y_{i})\right|&=&\left|\left\|l_{\theta}\right\|_{2}^2-2y_{i+1}u^{*}_{l_{\theta}}(y_{i})\right|\nonumber\\
&\leq& \left\|l_{\theta}\right\|_{2}^2+2\left|y_{i+1}u^{*}_{l_{\theta}}(y_{i})\right|.\label{dom}
\end{eqnarray*} 

As $\left\|l_{\theta}\right\|_{2}^2$ is continuous on the compact subset $\Theta$, $\sup_{\theta \in \Theta }\left\|l_{\theta}\right\|_{2}^2$ is finite. Therefore, $\E\left[\sup_{\theta \in \Theta }\left|m_{\theta}(\Y_i)\right|\right]$ is finite if $\E\left[\sup_{\theta \in \Theta }\left|Y_{i+1}u^{*}_{l_{\theta}}(Y_i)\right|\right]$ is finite. Lemma \emph{ULLN} \ref{ULLN} gives us the uniform convergence in probability of the contrast function: for any $\varepsilon>0$:

\begin{equation*}
\lim_{n\rightarrow +\infty} \mathbb{P}\left(\sup_{\theta \in \Theta}\left|\mathbf{P}_{n}m_{\theta}-\mathbf{P}m_{\theta}\right|\leq \varepsilon\right)=1.
\end{equation*}

Combining the uniform convergence with Theorem 2.1 p. 2121 chapter 36 in \cite{robert} yields the weak (convergence in probability) consistency of the estimator.\qed

\begin{remark}

In most applications, we do not know the bounds for the true parameter. So the compactness assumption is sometimes restrictive, one can replace the compactness assumption by: $\theta_0$ is an element of the interior of a convex parameter space $\Theta \subset \mathbb{R}^{r}$. Then, under our assumptions except the compactness, the estimator is also consistent. The proof is the same and the existence is proved by using convex optimization arguments. One can refer to \cite{Fy00} for this discussion.
\end{remark}

\subsection{ Proof of the asymptotic normality}\label{ANoE}

The proof is based on the following Lemma: 
 
\begin{lemma}\label{F}

Suppose that the conditions of the consistency hold. Suppose further that:

\begin{enumerate}
\item $\Y_i$ geometrically ergodic.
\item (Moment condition): for some $\delta >0$ and for each $j\in \left\{1,\cdots ,r\right\}:$
$$\E\left[\left|\frac{\p m_{\theta}(\Y_{1})}{\p \theta_j}\right|^{2+\delta}\right]<\infty$$.
\item (Hessian Local condition): For some neighbourhood $\mathcal{U}$ of $\theta_0$ and for $j,k\in \left\{1,\cdots, r\right\}$ : 
$$\E\left[\sup_{\theta \in \mathcal{U} }\left|\frac{\p^2m_{\theta}(\Y_{1})}{\p \theta_j \p \theta_k}\right|\right]< \infty.$$
\end{enumerate}

Then, $\widehat{\theta}_{n}$ defined in Eq.(\ref{min}) is asymptotically normal with asymptotic covariance matrix given by:
\begin{equation*}
\Sigma(\theta_{0})=V_{\theta_0}^{-1} \Omega(\theta_{0})V_{\theta_0}^{-1}
\end{equation*}
where $V_{\theta_0}$ is the Hessian of the application $\mathbf{P}m_{\theta}$ given in Eq.(\ref{pmtheta}).
\end{lemma}

\begin{proof}
The proof follows from Proposition \ref{fumio} and Proposition \ref{galin} and by using the fact that by assumption we have $\E[\nabla_{\theta}^{2}m_{\theta}(\Y_1)]=\nabla_{\theta}^{2}\E[m_{\theta}(\Y_1)]$. 
\end{proof}

It just remains to check that the conditions (2) and (3) of Lemma \ref{F} hold under our assumptions \textbf{(T)} .\\

\emph{Moment condition:}
As the function $l_{\theta}$ is twice continuously differentiable w.r.t $\theta$, for all $\y_{i}$ $\in$ $\R^2$, the application $m_{\theta}(\y_{i}):\ \theta \in \Theta \mapsto m_{\theta}(\y_{i})=||l_{\theta}||_{2}^2 - 2y_{i+1}u^*_{l_{\theta}}(y_{i})$ is twice continuously differentiable for all $\theta$ $\in$ $\Theta$ and its first derivatives are given by:

\begin{equation*}
\nabla_{\theta}m_{\theta}(\y_{i})= \nabla_{\theta}||l_{\theta}||_{2}^2 - 2y_{i+1}\nabla_{\theta}u^*_{l_{\theta}}(y_{i}).\\
\end{equation*}

 By assumption, for each $j\in \left\{1,\cdots,r\right\}$, $\frac{\p l_{\theta}}{\p \theta_j} \in \mathbb{L}_{1}(\R)$, therefore one can apply the Lebesgue Derivation Theorem and Fubini's Theorem to obtain :

\begin{equation}
\nabla_{\theta}m_{\theta}(\y_{i})
=\left[\nabla_{\theta}||l_{\theta}||_{2}^2 - 2y_{i+1}u^*_{\nabla_{\theta}l_{\theta}}(y_{i}) \right]\label{expres}.
\end{equation}

Then, for some $\delta >0$:

\begin{eqnarray}
\left| \nabla_{\theta}m_{\theta}(\y_{i})\right|^{2+\delta}&=&\left|\nabla_{\theta}||l_{\theta}||_{2}^2 - 2y_{i+1}u^*_{\nabla_{\theta}l_{\theta}}(y_{i}) \right|^{2+\delta}\nonumber\\
&\leq&  C_{1}\left|\nabla_{\theta}||l_{\theta}||_{2}^2\right|^{2+\delta}+C_{2}\left|y_{i+1}u^*_{\nabla_{\theta}l_{\theta}}(y_{i}) \right|^{2+\delta},\label{ca}
\end{eqnarray}
where $C_{1}$ and $C_{2}$ are two positive constants. By assumption, the function $||l_{\theta}||_{2}^2$ is twice continuously differentiable w.r.t $\theta$. Hence, $\nabla_{\theta}||l_{\theta}||_{2}^2$ is continuous on the compact subset $\Theta$ and the first term of equation (\ref{ca}) is finite. The second term is finite by the moment assumption \textbf{(T)}.\\

\emph{Hessian Local dominance:} For $j,k \in\left\{1,\cdots,r\right\}$, $\frac{\p^2 l_{\theta}}{\p\theta_j \p\theta_k} \in \mathbb{L}_{1}(\R)$, the Lebesgue Derivation Theorem gives:

\begin{equation*}
\nabla^{2}_{\theta}m_{\theta}(\y_{i})=\nabla^{2}_{\theta}||l_{\theta}||_{2}^2 - 2y_{i+1}u^*_{\nabla^{2}_{\theta}l_{\theta}}(y_{i}),
\end{equation*}
and, for some neighbourhood $\mathcal{U}$ of $\theta_0$:

\begin{equation*}
\E\left[\sup_{\theta \in \mathcal{U} }\left\|\nabla_{\theta}^{2}m_{\theta}(\Y_i)\right\|\right]\leq \sup_{\theta \in \mathcal{U}}\left\|\nabla^{2}_{\theta}||l_{\theta}||_{2}^2\right\|+2\E\left[\sup_{\theta \in \mathcal{U}}\left\|Y_{i+1}u^{*}_{\nabla_{\theta}^{2}l_{\theta}}(Y_i)\right\|\right].
\end{equation*} The first term of the above equation is finite by continuity and by compactness argument. And, the second term is finite by the Hessian local dominance assumption \textbf{(T)}.\qed

\subsection{Proof of Corollary \ref{lele}}\label{PC}
By replacing $\nabla_{\theta}m_{\theta}(\Y_{1})$ by its expression (\ref{expres}), we have:

\begin{eqnarray*}
\Omega_{0}(\theta)&=&\Var\left[\nabla_{\theta}||l_{\theta}||_{2}^2 - 2Y_{2}u^*_{\nabla_{\theta}l_{\theta}}(Y_{1}) \right]\\
&=&4 \Var\left[Y_{2}u^*_{\nabla_{\theta}l_{\theta}}(Y_{1}) \right]\\
&=&4\left[\E \left[Y_{2}^2 \left(u^*_{\nabla_{\theta}l_{\theta}}(Y_{1}) \right)\left(u^*_{\nabla_{\theta}l_{\theta}}(Y_{1}) \right)'\right]-\E \left[Y_{2} u^*_{\nabla_{\theta}l_{\theta}}(Y_{1})\right]\E \left[Y_{2} u^*_{\nabla_{\theta}l_{\theta}}(Y_{1})\right]'\right].
\end{eqnarray*}
Furthermore, by Eq.(\ref{mod1}) and by independence of the centered noise $(\varepsilon_2)$ and $(\eta_2)$, we have:

\begin{equation*}
\E\left[ Y_{2}  u^*_{\nabla_{\theta}l_{\theta}}(Y_{1})\right] = \E\left[ b_{\phi_{0}}(X_1)  u^*_{\nabla_{\theta}l_{\theta}}(Y_{1})\right].
\end{equation*}

Using Fubini's Theorem and Eq.(\ref{mod1}) we obtain:

\begin{eqnarray}\label{fub}
\E\left[ b_{\phi_{0}}(X_1)  u^*_{\nabla_{\theta}l_{\theta}}(Y_{1})\right] &=& \E\left[b_{\phi_{0}}(X_1) \int e^{iY_{1}z} u_{\nabla_{\theta}l_{\theta}}(z) dz  \right]\nonumber\\
&=&\E\left[b_{\phi_{0}}(X_1) \int \frac{1}{2\pi}\frac{1}{f_{\varepsilon}^*(z)}e^{iY_{1}z} (\nabla_{\theta}l_{\theta})^*(-z)dz \right]\nonumber\\
&=&\frac{1}{2\pi} \int \E\left[b_{\phi_{0}}(X_1)e^{i(X_1+\varepsilon_1)z} \right] \frac{1}{f_{\varepsilon}^*(z)} (\nabla_{\theta}l_{\theta})^*(-z) dz \nonumber\\
&=&\frac{1}{2\pi} \int \frac{\E\left[ e^{i\varepsilon_1z}\right]}{f_{\varepsilon}^*(z)} \E\left[b_{\phi_{0}}(X_1)e^{iX_1z}\right] (\nabla_{\theta}l_{\theta})^*(-z) dz\nonumber\\
&=&\frac{1}{2\pi} \E\left[ b_{\phi_{0}}(X_1) \int e^{iX_1z} (\nabla_{\theta}l_{\theta})^*(-z) dz \right]\nonumber\\
&=&\frac{1}{2\pi} \E\left[ b_{\phi_{0}}(X_1) \left((\nabla_{\theta}l_{\theta})^*(-X_1)\right)^*\right]\nonumber\\
&=& \E\left[ b_{\phi_{0}}(X_1) \nabla_{\theta}l_{\theta}(X_1)\right].
\end{eqnarray}

Hence,
\begin{equation*}
\Omega_{0}(\theta)=4\left(P_{2}-P_{1}\right),
\end{equation*} 
where
\begin{eqnarray*}
&&P_{1}=\E\left[ b_{\phi_{0}}(X_1) \nabla_{\theta}l_{\theta}(X_1)\right]\E\left[ b_{\phi_{0}}(X_1) \nabla_{\theta}l_{\theta}(X_1)\right]',\\
&&P_2=\E \left[Y_{2}^2 \left(u^*_{\nabla_{\theta}l_{\theta}}(Y_{1}) \right)\left(u^*_{\nabla_{\theta}l_{\theta}}(Y_{1}) \right)'\right].
\end{eqnarray*}

\emph{Calculus of the covariance matrix of Corollary (\ref{lele}):} By replacing $(\nabla_{\theta}m_{\theta}(Y_{1}))$ by its expression (\ref{expres}) we have:

\begin{eqnarray*}
\Omega_{j-1}(\theta)
&=&\C\left(\nabla_{\theta}||l_{\theta}||_{2}^2 - 2Y_{2}u^*_{\nabla_{\theta}l_{\theta}}(Y_{1}) , \nabla_{\theta}||l_{\theta}||_{2}^2 - 2Y_{j+1}u^*_{\nabla_{\theta}l_{\theta}}(Y_{j}) \right),\\
&=&4\C\left(Y_{2}u^*_{\nabla_{\theta}l_{\theta}}(Y_{1}) , Y_{j+1}u^*_{\nabla_{\theta}l_{\theta}}(Y_{j})\right),\\
&=&4\left[\E\left(Y_{2}u^*_{\nabla_{\theta}l_{\theta}}(Y_{1})Y_{j+1}u^*_{\nabla_{\theta}l_{\theta}}(Y_{j})\right)-\E\left(Y_{2}u^*_{\nabla_{\theta}l_{\theta}}(Y_{1})\right)\E\left(Y_{j+1}u^*_{\nabla_{\theta}l_{\theta}}(Y_{j})\right)'\right].
\end{eqnarray*}

By using Eq.(\ref{fub}) and the stationary property of the $Y_i$, one can replace the second term of the above equation by: 

\begin{equation*}
\E\left[b_{\phi_{0}}(X_1)\nabla_{\theta}l_{\theta}(X_1)\right]\E\left[b_{\phi_{0}}(X_1)\nabla_{\theta}l_{\theta}(X_1)\right]'.
\end{equation*}

Furthermore, by using Eq.(\ref{mod1}) we obtain:

\begin{eqnarray}
\E\left[Y_{2}Y_{j+1}u^{*}_{\nabla_{\theta}l_{\theta}}(Y_{1}) u^{*}_{\nabla_{\theta}l_{\theta}}(Y_{j})\right]
&=&\E\left[b_{\phi_{0}}(X_1)b_{\phi_{0}}(X_j)u^{*}_{\nabla_{\theta}l_{\theta}} (Y_{1})u^{*}_{\nabla_{\theta}l_{\theta}}(Y_{j})\right]\nonumber\\
&+&\E\left[b_{\phi_{0}}(X_1)\left(\eta_{j+1}+\varepsilon_{j+1}\right)u^{*}_{\nabla_{\theta}l_{\theta}}(Y_{1})u^{*}_{\nabla_{\theta}l_{\theta}}(Y_{j})\right]\label{1}\\
&+&\E\left[b_{\phi_{0}}(X_j)\left(\eta_2+\varepsilon_2\right)u^{*}_{\nabla_{\theta}l_{\theta}}(Y_{1})u^{*}_{\nabla_{\theta}l_{\theta}}(Y_{j})\right]\label{2}\\
&+&\E\left[\left(\eta_2+\varepsilon_2\right)\left(\eta_{j+1}+\varepsilon_{j+1}\right)u^{*}_{\nabla_{\theta}l_{\theta}}(Y_{1})u^{*}_{\nabla_{\theta}l_{\theta}}(Y_{j})\right]\label{3}.
\end{eqnarray}

$\newline$
By independence of the centered noise, the term (\ref{1}), (\ref{2}) and (\ref{3}) are equal to zero. Now, if we use Fubini's Theorem we have: 

\begin{equation}
\E\left[b_{\phi_{0}}(X_1)b_{\phi_{0}}(X_j)u^{*}_{\nabla_{\theta}l_{\theta}}(Y_{1})u^{*}_{\nabla_{\theta}l_{\theta}}(Y_{j})\right]=\E\left[b_{\phi_{0}}(X_1) b_{\phi_{0}}(X_j) \nabla_{\theta}l_{\theta}(X_1) \nabla_{\theta}l_{\theta}(X_j)\right].
\end{equation}

Hence, the covariance matrix is given by:
 
\begin{eqnarray*}
\Omega_{j-1}(\theta)&=&4\left(\E\left[b_{\phi_{0}}(X_1)b_{\phi_{0}}(X_j)\left(\nabla_{\theta}l_{\theta}(X_1)\right)\left(\nabla_{\theta}l_{\theta}(X_j)\right)'\right]-\E\left[b_{\phi_{0}}(X_1)\left(\nabla_{\theta}l_{\theta}(X_1)\right)\right]\E\left[b_{\phi_{0}}(X_1)\left(\nabla_{\theta}l_{\theta}(X_1)\right)\right]'\right)\\
&=&4\left(\tilde{C}_{j-1}-\E\left[b_{\phi_{0}}(X_1)\left(\nabla_{\theta}l_{\theta}(X_1)\right)\right]\E\left[b_{\phi_{0}}(X_1)\left(\nabla_{\theta}l_{\theta}(X_1)\right)\right]'\right)\\
&=&4\left(\tilde{C}_{j-1}-P_{1}\right).
\end{eqnarray*}
Finally, we obtain: $\Omega(\theta)=\Omega_{0}(\theta)+2\sum_{j >1}^{\infty}\Omega_{j-1}(\theta)$ with $\Omega_{0}(\theta)=4\left(P_2-P_{1}\right)$ and $\Omega_{j-1}(\theta)=4\left(\tilde{C}_{j-1}-P_{1}\right)$.

$\newline$
\emph{Expression of the Hessian matrix $V_{\theta}$ :} We have:

\begin{equation}
\mathbf{P}m_{\theta} = ||l_{\theta}||_{2}^2 - 2\left\langle l_{\theta}, l_{\theta_0}\right\rangle. 
\end{equation}

For all $\theta$ in $\Theta$, the application $\theta \mapsto \mathbf{P}m_{\theta}$ is twice differentiable w.r.t $\theta$ on the compact subset $\Theta$. And for $j\in \left\{1,\cdots,r\right\}$:

\begin{eqnarray*}
\frac{\p \mathbf{P}m}{\p \theta_j}(\theta)&=& 2 \left\langle \frac{\p l_{\theta}}{\p \theta_j}, l_{\theta}\right\rangle-2 \left\langle  \frac{\p l_{\theta}}{\p \theta_j}, l_{\theta_0}\right\rangle\\
&=&2 \left\langle \frac{\p l_{\theta}}{\p \theta_j}, l_{\theta}-l_{\theta_0}\right\rangle,\\
&=&0 \text{ at the point } \theta_{0},
\end{eqnarray*}
and for $j,k \in \left\{1,\cdots,r\right\}$:

\begin{eqnarray*}
\frac{\p^2 \mathbf{P}m}{\p\theta_j \p\theta_k}(\theta) &=& 2 \left(\left\langle \frac{\p^2 l_{\theta}}{\p \theta_j \theta_k}, l_{\theta}- l_{\theta_0}\right\rangle+\left\langle \frac{\p l_{\theta}}{\p  \theta_k}, \frac{\p l_{\theta}}{\p \theta_j}\right\rangle\right)_{j,k}\\
&=& 2 \left(\left\langle \frac{\p l_{\theta}}{\p  \theta_k}, \frac{\p l_{\theta}}{\p \theta_j}\right\rangle\right)_{j,k}\text{ at the point } \theta_{0}.
\end{eqnarray*}

\section{Proof of the Applications }\label{appendiceB}

\subsection{The Gaussian AR(1) model with measurement noise}\label{AppGauss}
\subsubsection{Contrast Function}
We have:
\begin{equation*}
l_{\theta}(x)=\frac{1}{\sqrt{2\pi\gamma^{2}}} \phi x \exp\left(-\frac{1}{2\gamma^2}x^2\right).
\end{equation*}
So that:

\begin{equation*}
||l_{\theta}||_{2}^{2}=\int |l_{\theta}(x)|^{2}dx =\frac{\phi^{2}\gamma}{4\sqrt{\pi}},
\end{equation*}
and the Fourier Transform of $l_{\theta}$ is given by:

\begin{eqnarray*}
l^{*}_{\theta}(y) &=&\int e^{iyx}l_{\theta}(x)dx = \int e^{iyx} \frac{1}{\sqrt{2\pi\gamma^{2}}} \phi x \exp\left(-\frac{1}{2\gamma^2}x^2\right)dx\nonumber\\
&=& -i\phi \E\left[iGe^{iyG}\right] = -i\phi \frac{\p}{\p y}\E\left[ e^{iyG}\right]\quad \text{ where } G\sim \mathcal{N}(0,\gamma^2)\nonumber\\
&=& -i\phi \frac{\p}{\p y} \left[ e^{-\frac{y^2}{2}\gamma^2}\right]\nonumber\\
&=& i\phi y \gamma^2e^{-\frac{y^2}{2}\gamma^2}.
\end{eqnarray*}

As $\varepsilon_i$ is a centered Gaussian noise with variance $\sigma_{\varepsilon}^2$, we have:

\begin{equation*}
f_{\varepsilon}(x)=\frac{1}{\sqrt{2\pi \sigma_{\varepsilon}^2}}  \exp\left(-\frac{1}{2\sigma_{\varepsilon}^2}x^2\right) \text{ and } f^*_{\varepsilon}(x)=  \exp\left(-\frac{1}{2}x^2 \sigma_{\varepsilon}^2\right). 
\end{equation*}

Define:
\begin{equation*}
\displaystyle u_{l_{\theta}}(y) =\frac{1}{2\pi} \frac{l^*_{\theta}(-y)}{f^*_{\varepsilon}(y)}.
\end{equation*}
Then:

\begin{eqnarray*}
u^{*}_{l_{\theta}}(y) &=&\frac{1}{2\pi}\int  \frac{l^*_{\theta}(-x)}{f^*_{\varepsilon}(x)}  e^{iyx}dx = \frac{-i}{2\pi} \phi \gamma^2 \int  x e^{iyx} \exp\left(\frac{x^2}{2}\sigma_{\varepsilon}^2\right) \exp\left(\frac{-x^2}{2}\gamma^2\right)dx\\
&=& \frac{-i}{2\pi}\phi \gamma^2 \frac{1}{(\gamma^2-\sigma_{\varepsilon}^2)^{1/2}} \int xe^{iyx}  (\gamma^2-\sigma_{\varepsilon}^2)^{1/2}\exp\left(-\frac{1}{2}x^2(\gamma^2-\sigma_{\varepsilon}^2)\right)dx\\
&=& -\frac{1}{\sqrt{2\pi}}\phi \gamma^2 \frac{1}{(\gamma^2-\sigma_{\varepsilon}^2)^{1/2}} \E\left[iGe^{iyG}\right] = -\frac{1}{\sqrt{2\pi}}\phi \gamma^2 \frac{1}{(\gamma^2-\sigma_{\varepsilon}^2)^{1/2}} \frac{\p}{\p y}\E\left[ e^{iyG}\right]\\
&=&  -\frac{1}{\sqrt{2\pi}}\phi \gamma^2 \frac{1}{(\gamma^2-\sigma_{\varepsilon}^2)^{1/2}} \frac{\p}{\p y} \left[ e^{-\frac{y^2}{2(\gamma^2-\sigma_{\varepsilon}^2)}}\right]\\
&=&\frac{1}{\sqrt{2\pi}} \phi \gamma^2 \frac{1}{(\gamma^2-\sigma_{\varepsilon}^2)^{3/2}}y e^{-\frac{y^2}{2(\gamma^2-\sigma_{\varepsilon}^2)}},
\end{eqnarray*}
where $G\sim \mathcal{N}\left(0,\frac{1}{(\gamma^2-\sigma_{\varepsilon}^2)} \right)$. We deduce that the function $m_{\theta}(\y_{i})$ is given by:

\begin{eqnarray*}
m_{\theta}(\y_{i})&=&||l_{\theta}||_{2}^{2}-2y_{i+1}u^*_{l_{\theta}}(y_{i})\label{m}\\
&=&\frac{\phi^{2}\gamma}{4\sqrt{\pi}} - 2y_{i} y_{i+1} \frac{1}{\sqrt{2\pi}}\phi \gamma^2 \frac{1}{(\gamma^2-\sigma_{\varepsilon}^2)^{3/2}}\exp\left(-\frac{y_{i}^2}{2(\gamma^2-\sigma_{\varepsilon}^2)}\right).\nonumber
\end{eqnarray*}

Then, the contrast estimator defined in (\ref{Procedure}) is given by: 

\begin{eqnarray*}
\widehat{\theta}_n&=&\arg \min_{\theta \in  \Theta}\mathbf{P}_{n}m_{\theta}\nonumber\\
&=& \arg \min_{\theta \in \Theta}\left\{ \frac{\phi^{2}\gamma}{4\sqrt{\pi}}-\sqrt{\frac{2}{\pi}}\frac{\phi \gamma^{2}}{n(\gamma^{2}-\sigma_{\varepsilon}^2)^{3/2}}\sum_{j=1}^{n} Y_{j+1} Y_{j}\exp\left(-\frac{1}{2}   \frac{Y^{2}_{j}}{(\gamma^{2}-\sigma^{2}_{\varepsilon})}\right)\right\}.\qed
\end{eqnarray*}

\subsubsection{Checking assumptions of Theorem \ref{MR}}\label{CTGAR}

\emph{Mixing properties.} If $|\phi|<1$, the process $\Y_i$ is geometrically ergodic. For further details, we refer to \cite{Do94}.\\

\emph{Regularity conditions:}
It remains to prove that the assumptions of Theorem \ref{MR} hold. It is easy to see that the only difficulty is to check the moment condition and the local dominance \textbf{(C)-(T)} and the uniqueness assumption \textbf{(CT)}. The others assumptions are easily to verify since the function $l_{\theta}(x)$ is regular in $\theta$ belonging to $\Theta$.\\

\textbf{(CT)}: The limit contrast function  $\displaystyle \mathbf{P}m_{\theta} : \theta \in \Theta \mapsto \mathbf{P}m_{\theta}$ given by:

\begin{eqnarray*}
\theta \mapsto \mathbf{P}m_{\theta}&=&
||l_{\theta}||_{2}^{2}-2\left\langle l_{\theta}, l_{\theta_{0}}\right\rangle \\
&=&\frac{\phi^{2}\gamma}{4\sqrt{\pi}}-\sqrt{\frac{2}{\pi}}\frac{\phi \phi_{0}\gamma^{2}\gamma_{0}^{2}}{(\gamma^{2}+\gamma_{0}^2)^{\frac{3}{2}}},
\end{eqnarray*}
is differentiable for all  $\theta$ in $\Theta$ and $\nabla_{\theta}\mathbf{P}m_{\theta}=0_{\R^2}$ if and only if $\theta$ is equal to $\theta_0$ . More precisely its first derivatives are given by:

\begin{eqnarray*}
&&\frac{\partial \mathbf{P}m_{\theta}}{\partial \phi}=\frac{1}{4\sqrt{\pi}}\frac{\phi\gamma(2-\phi^2)}{(1-\phi^2)}-\sqrt{\frac{2}{\pi}}\phi_0\gamma_0^{2}(\gamma^2+\gamma_0^2)^{-3/2}\left(\frac{\gamma^2+\gamma^2\phi^2}{(1-\phi^2)}-\frac{3\phi^2\gamma^4}{(1-\phi^2)(\gamma^2+\gamma_0^2)}\right),\\
&&\frac{\partial \mathbf{P}m_{\theta}}{\partial \sigma^2}= \frac{\phi^2}{8\sqrt{\pi}\sigma(1-\phi^2)^{1/2}}-\sqrt{\frac{2}{\pi}}\frac{\phi_0\gamma_0^{2}}{(1-\phi^2)(\gamma^2+\gamma_0^2)^{3/2}}\left(\phi-\frac{3\phi\gamma^2}{(\gamma^2+\gamma_0^2)}\right),
\end{eqnarray*}
and

\begin{eqnarray*}
\nabla_{\theta}\mathbf{P}m_{\theta}=0_{\R^2} \Leftrightarrow \theta=\theta_0
\end{eqnarray*}
The partial derivatives of $l_{\theta}$ w.r.t $\theta$ are given by:

\begin{eqnarray*}
&&\frac{\p l_{\theta}}{\p \phi}(x)= \left(\left(\frac{-\phi^2}{1-\phi^2}+1\right)x + \frac{\phi^2}{(1-\phi^2)\gamma^2}x^3\right)\frac{1}{\sqrt{2\pi \gamma^2}}e^{-\frac{x^2}{2\gamma^2}},\\
&&\frac{\p l_{\theta}}{\p \sigma^2}(x)=\left(-\frac{\phi}{2(1-\phi^2)\gamma^2}x +\frac{\phi}{2(1-\phi^2)\gamma^4}x^3 \right)\frac{1}{\sqrt{2\pi \gamma^2}}e^{-\frac{x^2}{2\gamma^2}}.
\end{eqnarray*}

For the reader convenience let us introduce the following notations:
\begin{eqnarray}\label{a1}
&&a_1= \frac{-\phi^2}{(1-\phi^2)}+1 = \frac{1-2\phi^2}{(1-\phi^2)} \text{ and } a_2= \frac{\phi^2}{(1-\phi^2)\gamma^2},\\
&&b_1= \frac{-\phi}{2(1-\phi^2)\gamma^2}\text{ and } b_2=  \frac{\phi}{2(1-\phi^2)\gamma^4}.
\end{eqnarray}

We rewrite:

\begin{eqnarray*}
\nabla_{\theta}l_{\theta}(x) &=& \left(\frac{\p l_{\theta}}{\p \phi}(x), \frac{\p l_{\theta}}{\p \sigma^2}(x)\right)^{'}\\
&=&\left((a_1x+a_2x^3)\times g_{0,\gamma^2}(x), (b_1x+b_2x^3)\times g_{0,\gamma^2}(x)\right)^{'},
\end{eqnarray*}
where the function $g_{0,\gamma^2}$  defines the normal probability density of a centered random variable with variance $\gamma^2$. Now, we can use Corollary \ref{lele} to compute the Hessian matrix $V_{\theta_0}$:   \\

\begin{equation}
V_{\theta_0}=
2 \begin{pmatrix} 
\left\|\frac{\p l_{\theta}}{\p \phi}\right\|_{2}^{2} & \left\langle \frac{\p l_{\theta}}{\p \phi}, \frac{\p l_{\theta}}{\p \sigma^{2}}\right\rangle\\
\left\langle \frac{\p l_{\theta}}{\p \sigma^{2}},  \frac{\p l_{\theta}}{\p \phi}\right\rangle & \left\|\frac{\p l_{\theta}}{\p \sigma^{2}}\right\|_{2}^{2} 
\end{pmatrix}
\end{equation}

\begin{equation*}
=\frac{1}{\gamma_0 \sqrt{\pi}}\begin{pmatrix}
a_{1}^{2}\E[X^{2}]+2a_{1}a_{2}\E[X^{4}]+a_{2}^{2}\E[X^{6}] & a_{1}b_{1}\E[X^{2}]+a_{1}b_{2}\E[X^{4}]+a_{2}b_{1}\E[X^{4}]+a_{2}b_{2}\E[X^{6}] \\
a_{1}b_{1}\E[X^{2}]+a_{1}b_{2}\E[X^{4}]+a_{2}b_{1}\E[X^{4}]+a_{2}b_{2}\E[X^{6}]& b_{1}^{2}\E[X^{2}]+2b_{1}b_{2}\E[X^{4}]+b_{2}^{2}\E[X^{6}]
\end{pmatrix},
\end{equation*}
with $X \sim \mathcal{N}\left(0,\frac{\gamma_{0}^2}{2}\right)$. By replacing the terms $a_{1}, a_{2}, b_{1}$ and $b_{2}$ at the point $\theta_0$ we obtain:
\begin{equation}\label{hes}
V_{\theta_{0}}
=\frac{1}{8\sqrt{\pi}(1-\phi_0^2)^2}
\begin{pmatrix}
\gamma_{0}(7\phi_0^4-4\phi_0^2+4) & \frac{-5\phi^{5}_0+3\phi^{3}_0+2\phi_{0}}{2\gamma_{0}(1-\phi^{2}_0)}\\
\frac{-5\phi^{5}_0+3\phi^{3}_0+2\phi_0}{2\gamma_{0}(1-\phi^{2}_0)} & \frac{7\phi_0^2}{4\gamma_{0}^3}
\end{pmatrix},
\end{equation}
which has a positive determinant equal to $0.0956$ at the true value $\theta_{0}=(0.7,0.3)$. Hence, $V_{\theta_{0}}$ is non-singular. Furthermore, the strict convexity of the function $\mathbf{P}m_{\theta}$ gives that $\theta_0$ is a minimum.\\

\textbf{(C):} (Local dominance): We have:
\begin{eqnarray*}
\E\left[\sup_{\theta \in \Theta}\left|Y_{2}u^{*}_{l_{\theta}}(Y_{1})\right|\right]&=&\frac{1}{\sqrt{2\pi}}\E\left[\sup_{\theta \in \Theta}\left|  \frac{\phi \gamma^2}{(\gamma^2-\sigma_{\varepsilon}^2)^{(3/2)}}Y_{2}Y_{1}\exp\left(-\frac{Y_{1}^2}{2(\gamma^2-\sigma_{\varepsilon}^2)}\right)\right|\right].
\end{eqnarray*}

The multivariate normal density of the pair $\Y_{1}=(Y_{1},Y_{2})$ denoted $g_{(0, \mathcal{J}_{\theta_0})}$ is given by: 

\begin{equation*}
\frac{1}{2\pi} \det{(\mathcal{J}_{\theta_0})}^{-1/2} \exp\left(-\frac{1}{2}\y_{1}^{'} \mathcal{J}_{\theta_0}^{-1}\y_{1}\right),
\end{equation*}
with: 

\begin{equation*}
\mathcal{J}_{\theta_0}=\begin{pmatrix}
\sigma_{\varepsilon}^2+\gamma^2_0 &  \phi_0 \gamma^2_0\\
 \phi_0 \gamma^2_0 & \sigma_{\varepsilon}^2+\gamma^2_0 
\end{pmatrix}
\text{ and }
\mathcal{J}_{\theta_0}^{-1}=\frac{1}{(\sigma_{\varepsilon}^2+\gamma^2_0 )^{2}-\gamma^4_0\phi^{2}_0}\begin{pmatrix}
\sigma_{\varepsilon}^2+\gamma^2_0 &  -\phi_0 \gamma^2_0\\
 -\phi_0\gamma^2_0 & \sigma_{\varepsilon}^2+\gamma^2_0
\end{pmatrix}.
\end{equation*}

By definition of the parameter space $\Theta$ and as all moments of the pair $\Y_{1}$ exist, the quantity $\E\left[\sup_{\theta \in \Theta}\left|Y_{2}u^{*}_{l_{\theta}}(Y_{1})\right|\right]$ is finite.\\

Moment condition \textbf{(T)}: We recall that:

\begin{eqnarray*}
\nabla_{\theta}l_{\theta}(x) &=& \left(\frac{\p l_{\theta}}{\p \phi}(x), \frac{\p l_{\theta}}{\p \sigma^2}(x)\right)^{'}\\
&=&\left((a_1x+a_2x^3)\times g_{0,\gamma^2}(x), (b_1x+b_2x^3)\times g_{0,\gamma^2}(x)\right)^{'}.
\end{eqnarray*}

The Fourier transforms of the first derivatives are:

\begin{eqnarray*}
\left(\frac{\partial l_{\theta}}{\partial \phi}(x)\right)^{*}&&=\int \exp\left(i xy\right)\left(a_1y+a_2y^3\right)\times g_{0,\gamma^2}(y) dy\\
&&=-ia_1\E\left[iG\exp\left(i xG\right)\right]+ia_2\E\left[-iG^{3}\exp\left(i xG\right)\right] \text{ where G} \sim \mathcal{N}(0,\gamma^{2}) \\
&&=-ia_{1}\frac{\p}{\p x}\E\left[\exp\left(i xG\right)\right]+ia_{2}\frac{\p^{3}}{\p x^{3}}\E\left[\exp\left(i xG\right)\right]\\
&&=-ia_{1}\frac{\p}{\p x}\exp\left(-\frac{x^{2}}{2}\gamma^{2}\right)+ia_{2}\frac{\p^{3}}{\p x^{3}}\exp\left(-\frac{x^{2}}{2}\gamma^{2}\right)\\
&&=(ia_{1}\gamma^{2}x+3ia_{2}\gamma^{4}x-ia_{2}\gamma^{6}x^{3})\exp\left(-\frac{x^{2}}{2}\gamma^{2}\right),
\end{eqnarray*}
and
\begin{equation*}
\left(\frac{\partial l_{\theta}}{\partial \sigma^{2}}(x)\right)^{*}=(ib_{1}\gamma^{2}x+3ib_{2}\gamma^{4}x-ib_{2}\gamma^{6}x^{3})\exp\left(-\frac{x^{2}}{2}\gamma^{2}\right).
\end{equation*}
We can compute the function $u_{\nabla_{\theta}l_{\theta}}(x)$:

\begin{eqnarray*}
u_{\frac{\partial l_{\theta}}{\partial \phi}}(x)&=&\frac{1}{2\pi}\frac{\left(\frac{\partial l_{\theta}}{\partial \phi}(-x)\right)^{*}}{f^{*}_{\varepsilon}(x)}\\
&=&\frac{1}{\sqrt{2\pi}}(\gamma^{2}-\sigma^{2}_{\varepsilon})^{1/2}\exp\left(-\frac{x^2}{2}(\gamma^{2}-\sigma^{2}_{\varepsilon})\right)\times \left\{\frac{1}{\sqrt{2\pi}}\frac{1}{(\gamma^{2}-\sigma^{2}_{\varepsilon})^{1/2}}\left((-ia_{1}\gamma^{2}-3ia_{2}\gamma^{4})x+ia_{2}\gamma^{6}x^{3}\right)\right\}\\
&=&-i\overline{C}\left(A_1x-A_2x^{3}\right)g_{0,\frac{1}{(\gamma^{2}-\sigma^{2}_{\varepsilon})}}(x),
\end{eqnarray*}
with $\overline{C}=\frac{1}{\sqrt{2\pi}}\frac{1}{(\gamma^{2}-\sigma_{\varepsilon}^2)^{1/2}}$ and $A_1=a_{1}\gamma^{2}+3a_{2}\gamma^{4}=\gamma^2\frac{(1+\phi^2)}{(1-\phi^2)}$ and $A_2=a_{2}\gamma^{6}=\gamma^4\frac{\phi^2}{(1-\phi^2)}.$ The Fourier transform of the function $u_{\frac{\partial l_{\theta}}{\partial \phi}}(x)$ is given by: 

\begin{eqnarray}
u^{*}_{\frac{\partial l_{\theta}}{\partial \phi}}(x)&=&-i\overline{C}\int \exp\left(iyx\right)\left(A_1y -A_2y^{3}\right)g_{\left(0,\frac{1}{(\gamma^{2}-\sigma_{\varepsilon}^2)}\right)}(y)dy\nonumber\\
&=&-\overline{C} A_1\frac{\p}{\p x}\E\left[\exp\left(ixG\right)\right]-\overline{C} A_2\frac{\p^{3}}{\p x^{3}}\E\left[\exp\left(ixG\right)\right] \text{ where G} \sim \mathcal{N}\left(0, \frac{1}{(\gamma^{2}-\sigma^{2}_{\varepsilon})}\right)\nonumber \\
&=&-\overline{C} A_1\frac{\p}{\p x}\left(\exp\left(-\frac{x^{2}}{2(\gamma^{2}-\sigma^{2}_{\varepsilon})}\right)\right)-\overline{C} A_2\frac{\p^{3}}{\p x^{3}}\left(\exp\left(-\frac{x^{2}}{2(\gamma^{2}-\sigma^{2}_{\varepsilon})}\right)\right)\nonumber\\
&=&\left(\Psi^{\phi_0}_{1}x+\Psi^{\phi_0}_{2}x^{3}\right)\exp\left(-\frac{x^{2}}{2(\gamma^{2}-\sigma^{2}_{\varepsilon})}\right),\label{u1}
\end{eqnarray}
with $\Psi^{\phi_0}_{1}=\overline{C}\left(\frac{A_1}{(\gamma^{2}-\sigma^{2}_{\varepsilon})}-\frac{3A_2}{(\gamma^{2}-\sigma_{\varepsilon}^2)^{2}}\right)$ and $\Psi^{\phi_0}_{2}=\overline{C}\left(\frac{A_2}{(\gamma^{2}-\sigma_{\varepsilon}^2)^{3}}\right).$ By the same arguments, we obtain:

\begin{equation}
u^{*}_{\frac{\partial l_{\theta}}{\partial \sigma^2}}(x)
=\left(\Psi^{\sigma_0^{2}}_{1}x+\Psi^{\sigma_0^{2}}_{2}x^{3}\right)\exp\left(-\frac{x^{2}}{2(\gamma^{2}-\sigma_{\varepsilon}^2)}\right),\label{u2}
\end{equation}
with $\Psi^{\sigma_0^{2}}_{1}=\overline{C}\left(\frac{B_1}{(\gamma^{2}-\sigma_{\varepsilon}^2)}-\frac{3B_2}{(\gamma^{2}-\sigma_{\varepsilon}^2)^{2}}\right), \Psi^{\sigma_0^{2}}_{2}=\overline{C}\left(\frac{B_2}{(\gamma^{2}-\sigma_{\varepsilon}^2)^{3}}\right),
B_1=b_{1}\gamma^{2}+3b_{2}\gamma^{4}=\frac{\phi}{(1-\phi^2)}$ and  $B_2=b_{2}\gamma^{6}=\gamma^2\frac{\phi}{2(1-\phi^2)}.$\\

Hence, for some $\delta >0$, $\E\left[\left|Y_{2}u^*_{\nabla_{\theta}l_{\theta}}(Y_{1}) \right|^{2+\delta}\right]$ is finite if:

\begin{eqnarray*}
&&\E\left[\left|\left(\Psi^{\phi_0}_{1}Y_{1}Y_{2}+\Psi^{\phi_0}_{2}Y_{1}^{3}Y_{2}\right)\exp\left(-\frac{Y_{1}^{2}}{2(\gamma^{2}-\sigma_{\varepsilon}^2)}\right)\right|^{2+\delta}\right]<\infty,\\
&&\E\left[\left|\left(\Psi^{\sigma_0^{2}}_{1}Y_{1}Y_{2}+\Psi^{\sigma_0^{2}}_{2}Y_{1}^{3}Y_{2}\right)\exp\left(-\frac{Y_{1}^{2}}{2(\gamma^{2}-\sigma_{\varepsilon}^2)}\right)\right|^{2+\delta}\right]<\infty,
\end{eqnarray*}
which is satisfied by the existence of all moments of the pair $\Y_{1}$.  One can check that the Hessian local assumption \textbf{(T)} is also satisfied by the same arguments.

\subsubsection{Explicit form of the Covariance matrix}\label{EFCM}

\begin{lemma}\label{hessienne_application}
The matrix $\Sigma(\theta_0)$ in the  Gaussian AR(1) model is given by:
\begin{equation*}
\Sigma(\theta_0)=V^{-1}_{\theta_0}\Omega(\theta_{0})V^{-1}_{\theta_0}
\end{equation*}
with 
\begin{eqnarray*}
V_{\theta_{0}}
=\frac{1}{8\sqrt{\pi}(1-\phi_0^2)^2}
\begin{pmatrix}
\gamma_0(7\phi_0^4-4\phi_0^2+4) & \frac{-5\phi^{5}_0+3\phi^{3}_0+2\phi_{0}}{2\gamma_0(1-\phi^{2}_0)}\\
\frac{-5\phi^{5}_0+3\phi^{3}_0+2\phi_0}{2\gamma_0(1-\phi^{2}_0)} & \frac{7\phi_0^2}{4\gamma_0^3}
\end{pmatrix},
\end{eqnarray*}
and
\begin{equation*}
\Omega(\theta_{0})=\Omega_0(\theta_{0})+2\sum_{j>1}^{\infty}\Omega_{j-1}(\theta_{0}) = 4\left[P_{2}-P_1\right]+8\sum_{j>1}^{\infty}(\tilde{C}_{j-1}-P_1)
\end{equation*}
where:
\begin{equation*}
P_{1}=\begin{pmatrix}
\frac{\phi_{0}^2\gamma_{0}^2(2-\phi_{0}^2)^{2}}{64\pi(1-\phi_{0}^2)^2} & \frac{\phi_{0}^3(2-\phi_{0}^2)}{128\pi(1-\phi_{0}^2)^2}\\
\frac{\phi_{0}^3(2-\phi_{0}^2)}{128\pi(1-\phi_{0}^2)^2} & \frac{\phi_{0}^4}{256\pi(1-\phi_{0}^2)^2\gamma_{0}^2}
\end{pmatrix},
\end{equation*}
and $P_2$ is the $2\times 2$ symmetric matrix multiplied by a factor $\frac{1}{\sqrt{\pi (\gamma_0^{2}-\sigma^{2}_{\varepsilon})}}$ and its coefficients $(P^2_{lm})_{1\leq l,m \leq 2}$ are given by:

\begin{eqnarray*}
P^2_{11}&=&\left(\Psi^{\phi_0}_{1}\right)^{2}\mathcal{F}\tilde{V}_{1}\left(\tilde{V}_{2}+3\frac{\phi_{0}^{2}\gamma_{0}^{4}}{(\gamma_0^2+\sigma_{\varepsilon}^2)^2}\tilde{V}_{1}\right)+15\left(\Psi^{\phi_0}_{2}\right)^{2}\mathcal{F}\tilde{V}_{1}^{3}\left(\tilde{V}_{2}+7\frac{\phi_{0}^{2}\gamma_{0}^{4}}{(\gamma_0^2+\sigma_{\varepsilon}^2)^2}\tilde{V}_{1}\right)+6\Psi^{\phi_0}_{1}\Psi^{\phi_0}_{2}\mathcal{F}\tilde{V}_{1}^{2}
\left(\tilde{V}_{2}+5\frac{\phi_{0}^{2}\gamma_{0}^{4}}{(\gamma_0^2+\sigma_{\varepsilon}^2)^2}\tilde{V}_{1}\right).\\
P^2_{22}&=&\left(\Psi^{\sigma_0^{2}}_{1}\right)^{2}\mathcal{F}\tilde{V}_{1}\left(\tilde{V}_{2}+3\frac{\phi_{0}^{2}\gamma_{0}^{4}}{(\gamma_0^2+\sigma_{\varepsilon}^2)^2}\tilde{V}_{1}\right)+15\left(\Psi^{\sigma_0}_{2}\right)^{2}\mathcal{F}\tilde{V}_{1}^{3}
\left(\tilde{V}_{2}+7\frac{\phi_{0}^{2}\gamma_{0}^{4}}{(\gamma_0^2+\sigma_{\varepsilon}^2)^2}\tilde{V}_{1}\right)+6\Psi^{\sigma_0^{2}}_{1}\Psi^{\sigma_0^{2}}_{2}
\mathcal{F}\tilde{V}_{1}^{2}\left(\tilde{V}_{2}+5\frac{\phi_{0}^{2}\gamma_{0}^{4}}{(\gamma_0^2+\sigma_{\varepsilon}^2)^2}\tilde{V}_{1}\right).\\
P^2_{12}&=&\Psi^{\phi_0}_{1}\Psi^{\sigma_0^{2}}_{1}\mathcal{F}\tilde{V}_{1}\left(\tilde{V}_{2}+3\frac{\phi_{0}^{2}\gamma_{0}^{4}}{(\gamma_0^2+\sigma_{\varepsilon}^2)^2}\tilde{V}_{1}\right)+15\Psi^{\phi_0}_{2}\Psi^{\sigma_0^{2}}_{2}\mathcal{F}\tilde{V}_{1}^{3}
\left(\tilde{V}_{2}+7\frac{\phi_{0}^{2}\gamma_{0}^{4}}{(\gamma_0^2+\sigma_{\varepsilon}^2)^2}\tilde{V}_{1}\right)+3\Psi^{\phi_0}_{1}\Psi^{\sigma_0^{2}}_{2}\mathcal{F}
\tilde{V}_{1}^{2}\left(\tilde{V}_{2}+5\frac{\phi_{0}^{2}\gamma_{0}^{4}}{(\gamma_0^2+\sigma_{\varepsilon}^2)^2}\tilde{V}_{1}\right)\\
&+&3\Psi^{\sigma_0^{2}}_{1}\Psi^{\phi_0}_{2}\mathcal{F}\tilde{V}_{1}^{2}\left(\tilde{V}_{2}+5\frac{\phi_{0}^{2}\gamma_{0}^{4}}{(\gamma_0^2+\sigma_{\varepsilon}^2)^2}\tilde{V}_{1}\right),
\end{eqnarray*}
with $\mathcal{F}=\frac{1}{(\sigma_{\varepsilon}^2+\gamma_{0}^{2})^{2}-\gamma_{0}^{4}\phi_{0}^{2}}\tilde{V}_{1}^{1/2}\tilde{V}_{2}^{1/2}$,  $\tilde{V}_{1}^{-1}=\frac{2}{(\gamma_{0}^2-\sigma_{\varepsilon}^2)}+\left(\frac{\gamma_{0}^2+\sigma_{\varepsilon}^2}{(\sigma_{\varepsilon}^2+\gamma_{0}^2)^{2}-\gamma_{0}^4\phi_{0}^{2}}\right)\left(1-\frac{\phi_{0}^{2}\gamma_{0}^{4}}{(\gamma_0^2+\sigma_{\varepsilon}^2)^2}\right)$, $\tilde{V}_{2}=\frac{(\gamma_{0}^2+\sigma_{\varepsilon}^2)^{2}-\phi_{0}^{2}\gamma_{0}^{4}}{(\gamma_{0}^2+\sigma_{\varepsilon}^2)}$, and:

\begin{eqnarray*}
\Psi^{\phi_0}_{1}&=&\frac{1}{\sqrt{2\pi}}\frac{1}{(\gamma_{0}^{2}-\sigma_{\varepsilon}^2)^{3/2}}\left(\frac{(1+\phi_{0}^{2})\gamma_{0}^{2}}{(1-\phi_{0}^{2})}-\frac{3\phi_{0}^{2}\gamma_{0}^{4}}{(1-\phi_{0}^{2})(\gamma_{0}^{2}-\sigma_{\varepsilon}^2)}\right).\\
\Psi^{\sigma_0^{2}}_{1}&=&\frac{1}{\sqrt{2\pi}}\frac{1}{(\gamma_{0}^{2}-\sigma_{\varepsilon}^2)^{3/2}}\left(\frac{\phi_{0}}{(1-\phi_{0}^{2})}-\frac{3\phi_{0}\gamma_{0}^{2}}{2(1-\phi_{0}^{2})(\gamma_{0}^{2}-\sigma_{\varepsilon}^2)}\right).\\
\Psi^{\phi_0}_{2}&=&\frac{1}{\sqrt{2\pi}}\frac{1}{(\gamma_{0}^{2}-\sigma_{\varepsilon}^2)^{7/2}}\frac{\gamma_{0}^{4}\phi_{0}^{2}}{(1-\phi_{0}^{2})}.\\
\Psi^{\sigma_0^{2}}_{2}&=&\frac{1}{\sqrt{2\pi}}\frac{1}{(\gamma_{0}^{2}-\sigma_{\varepsilon}^2)^{7/2}}\frac{\gamma_{0}^{2}\phi_{0}}{2(1-\phi_{0}^{2})}
\end{eqnarray*}

$\newline$

The covariance terms are given by:

\begin{equation*}
\tilde{C}_{j-1}=
\frac{\phi_{0}^{2}}{2\pi\gamma_{0}^{2}}\begin{pmatrix}
(4\phi_{0}^4-4\phi_{0}^2+1)  \tilde{c}_1(j) + \frac{2\phi_{0}^2(1-2\phi_{0}^2)}{\gamma_{0}^2}\tilde{c}_2(j)+\frac{\phi_{0}^4}{\gamma_{0}^4}\tilde{c}_3(j)& \frac{\phi_{0}(2\phi_{0}^2-1)}{2\gamma_{0}^2}\tilde{c}_1(j) + \frac{\phi_{0}(1-3\phi_{0}^2)}{2\gamma_{0}^4}\tilde{c}_2(j)+ \frac{\phi_{0}^3}{2\gamma_{0}^6} \tilde{c}_3(j)\\
\frac{\phi_{0}(2\phi_{0}^2-1)}{2\gamma_{0}^2}\tilde{c}_1(j) + \frac{\phi_{0}(1-3\phi_{0}^2)}{2\gamma_{0}^4}\tilde{c}_2(i)+ \frac{\phi_{0}^3}{2\gamma_{0}^6} \tilde{c}_3(j) & \frac{\phi_{0}^2}{4\gamma_{0}^4} \tilde{c}_1(j)-\frac{\phi_{0}^2} {2\gamma_{0}^6}\tilde{c}_2(j) + \frac{\phi_{0}^2}{4\gamma_{0}^8}  \tilde{c}_3(j)\\
\end{pmatrix},
\end{equation*}

with:

\begin{eqnarray*}
&&\tilde{c}_1(j)=\frac{1}{\gamma_{0}}(2-\phi_{0}^{2j})^{-1/2} V_j^{3/2}\left(\mathcal{V} + \frac{3\phi_{0}^{2j}V_j}{(2-\phi_{0}^{2j})^2} \right),\\
&&\tilde{c}_2(j)= \frac{3}{\gamma_{0}}(2-\phi_{0}^{2j})^{-1/2} V_j^{5/2} \left(\mathcal{V} + 5\frac{\phi_{0}^{2j}V_j}{(2-\phi_{0}^{2j})^2} \right),\\
&&\tilde{c}_3(j)=\frac{3(2-\phi_{0}^{2j})^{-1/2}}{\gamma_{0}}V_j^{5/2}\left[3\mathcal{V}^2+5V_j(4\mathcal{V}+2)\frac{\phi_{0}^{2j}}{(2-\phi_{0}^{2j})^2}+35V_j^{2}\frac{\phi_{0}^{4j}}{(2-\phi_{0}^{2j})^4}\right],
\end{eqnarray*}

where:

\begin{equation*}
V_j = \frac{\gamma_{0}^2(1-\phi_{0}^{2j})(2-\phi_{0}^{2j})}{(2-\phi_{0}^{2j})^2-\phi_{0}^{2j}} \text{ and } \mathcal{V} = \frac{\gamma_{0}^2(1-\phi_{0}^{2j})}{2-\phi_{0}^{2j}}
\end{equation*}
 
Moreover $\lim_{j\rightarrow \infty}\Omega_{j-1}(\theta_0)=0_{\mathcal{M}_{2\times2}}$.
\end{lemma}

\begin{remark}
In practice, for the computing of the covariance matrix $\Omega_{j-1}(\theta)$ that appears in Corollary \ref{lele}, we have truncated the infinite sum ($q_{trunc}=100$).
\end{remark}

\begin{proof}

\textbf{Calculus of $\nabla m$}\\

For all $x \in \R$, the function $l_{\theta}(x)$ is two times differentiable w.r.t $\theta$ on the compact subset $\Theta$. More precisely, note that since $\gamma^2 = \sigma^2/(1-\phi^2)$, it follows from the definition of the subset $\Theta$ that $(\gamma^2-\sigma_{\varepsilon}^2)>0$. So that for all $\y_{i}$ in $\R^{2}$ the function $m_{\theta}(\y_{i}): \theta \in \Theta \mapsto m_{\theta}(\y_{i})$ is differentiable and:

\begin{eqnarray*}
\nabla_{\theta}(m_{\theta}(\y_{i}))&=&\left(\frac{\p m_{\theta}(\y_{i})}{\p \phi}, \frac{\p m_{\theta}(\y_{i})}{\p \sigma^{2}}\right)'\\
&=&\left(\frac{\p \left\|l_{\theta}\right\|_{2}^2}{\p \phi}-2y_{i+1}u^{*}_{\frac{\p l_{\theta}}{\p \phi}}(y_{i}), \frac{\p \left\|l_{\theta}\right\|_{2}^2}{\p \sigma^{2}}-2y_{i+1}u^{*}_{\frac{\p l_{\theta}}{\p \sigma^{2}}}(y_{i})\right)',\\
\end{eqnarray*}
with:

\begin{eqnarray*}
&&\frac{\partial}{\partial \phi}||l_{\theta}||_{2}^{2}=\frac{\phi\gamma(2-\phi^2)}{4\sqrt{\pi}(1-\phi^2)},\\
&&\frac{\partial}{\partial \sigma^2}||l_{\theta}||_{2}^{2}=\frac{\phi^2}{8\sqrt{\pi}(1-\phi^2)}.
\end{eqnarray*}

And, the function $u^{*}_{\frac{\p l_{\theta}}{\p \phi}}(x)$ and $u^{*}_{\frac{\p l_{\theta}}{\p \sigma^{2}}}(x)$ are given in Eq.(\ref{u1})-(\ref{u2}). Therefore,

\begin{equation}\label{dmt}
\nabla_{\theta}m_{\theta}(\y_{i})=\begin{pmatrix}\left(\frac{\phi_{0}\gamma_{0}(2-\phi_{0}^2)}{4\sqrt{\pi}(1-\phi_{0}^2)}-2y_{i+1}\left(\Psi^{\phi_{0}}_{1}y_{i}+\Psi^{\phi_{0}}_{2}y_{i}^{3}\right)\exp\left(-\frac{y_{i}^{2}}{2(\gamma_{0}^{2}-\sigma_{\varepsilon}^2)}\right)\right)\\ \left(\frac{\phi_{0}^2}{8\sqrt{\pi}(1-\phi_{0}^2)}-2y_{i+1}\left(\Psi^{\sigma_{0}^{2}}_{1}y_{i}+\Psi^{\sigma_{0}^{2}}_{2}y_{i}^{3}\right)\exp\left(-\frac{y_{i}^{2}}{2(\gamma_{0}^{2}-\sigma_{\varepsilon}^2)}\right)\right)\end{pmatrix}\text{ at the point } \theta_{0}.
\end{equation}

\newpage
\textbf{Calculus of $P_{1}$:} Recall that we have:

\begin{eqnarray*}
&&P_1=\E\left[b_{\theta_{0}}(X_1) \left(\nabla_{\theta}l_{\theta}(X_1) \right)\right]\E\left[b_{\phi_{0}}(X_1) \left(\nabla_{\theta}l_{\theta}(X_1) \right)\right]'\\
&&P_{2}=\E\left[Y^{2}_{2}\left(u^{*}_{\nabla_{\theta} l_{\theta}}(Y_1)\right)^{2}\right].
\end{eqnarray*}
And the moments $(\mu_{2k})_{k \in \mathbb{N}}$ of a centered Gaussian random variable with variance $\sigma^2$ are given by: 

$$\mu_{2k} = \left(\frac{(2k)!}{2^kk!}\right)\sigma^{2k}.$$

We define by $P(x)$ a  polynomial function of ordinary degree. We are interested in the calculus of $\displaystyle \E\left[ P(X) g_{0,\gamma^{2}}(X)\right],$
where $X \sim \mathcal{N}(0,\gamma^2)$. We have:

\begin{eqnarray*}
\E\left[ P(X) g_{0,\gamma^2}(X)\right] &=& \int P(x) \frac{1}{\sqrt{2\pi}\gamma} e^{-\frac{x^2}{2\gamma^2}} \frac{1}{\sqrt{2\pi}\gamma} e^{-\frac{x^2}{2\gamma^2}} dx\\
&=& \frac{1}{2\pi\gamma^{2}} \int P(x)  e^{-\frac{x^2}{\gamma^2}} dx\\
&=& \frac{1}{2\sqrt{\pi}\gamma} \E\left[ P(\bar{X}) \right], 
\end{eqnarray*}
where $\displaystyle \bar{X} \sim \mathcal{N}\left(0,\frac{\gamma^2}{2}\right)$.\\

Denote by $B_1$ the constant $\frac{1}{2\sqrt{\pi}\gamma_0}$. We obtain:
\begin{eqnarray*}\label{C2}
P_1&=& \begin{pmatrix}
\E\left[b_{\phi_{0}}(X_1)\frac{\p l_{\theta}}{\p \phi}(\theta,X_1) \right]^2 &\E\left[b_{\phi_{0}}(X_1)\frac{\p l_{\theta}}{\p \phi}(\theta,X_1) \right]\E\left[b_{\phi_{0}}(X_1)\frac{\p l_{\theta}}{\p \sigma^{2}}(\theta,X_1) \right]\\
\E\left[b_{\phi_{0}}(X_1)\frac{\p l_{\theta}}{\p \phi}(\theta,X_1) \right]\E\left[b_{\phi_{0}}(X_1)\frac{\p l_{\theta}}{\p \sigma^{2}}(\theta,X_1) \right] &\E\left[b_{\phi_{0}}(X_1)\frac{\p l_{\theta}}{\p \sigma^{2}}(\theta,X_1) \right]^2\\ 
\end{pmatrix}\\
&=&B_1^2\phi_{0}^2\begin{pmatrix}
\E\left[ H_{11}(\bar{X})\right]^{2} & \E\left[ H_{12}(\bar{X}) \right] \E\left[ H_{21}(\bar{X}) \right]\\
\E\left[ H_{21}(\bar{X}) \right] \E\left[ H_{12}(\bar{X}) \right] & \E\left[ H_{22}(\bar{X}) \right]^2\\ 
\end{pmatrix},\\
\end{eqnarray*}
where $\displaystyle \bar{X} \sim \mathcal{N}\left(0,\frac{\gamma_0^2}{2} \right)$. The polynomials $\left(H_{ij}(x)\right)_{1 \leq i,j \leq2}$ are given by:

\begin{eqnarray*}
&&H_{11}(x)=  \left(a_1x^2 + a_2x^4 \right), \\
&&H_{12}(x)= \left(b_1x^2 + b_2x^4 \right), \\
&&H_{21}(x)= \left(a_1x^2 + a_2x^4 \right),\\
&&H_{22}(x)=  \left( b_1x^2 + b_2x^4 \right).
\end{eqnarray*}

Lastly, by replacing the terms $B_1$, $a_1$, and $a_2$ by their expressions given in Eq.(\ref{a1}) at the point $\theta_{0}$, we obtain:

\begin{equation*}
P_1=\E\left[b_{\phi_{0}}(X_1) \left(\nabla_{\theta}l_{\theta}(X_1) \right)\right]\E\left[b_{\phi_{0}}(X_1) \left(\nabla_{\theta}l_{\theta}(X_1) \right)\right]'
=\begin{pmatrix}
\frac{\phi_{0}^2\gamma_{0}^2(2-\phi_0^2)^2}{64\pi(1-\phi_{0}^2)^2} & \frac{\phi_{0}^3(2-\phi_{0}^2)}{128\pi(1-\phi_{0}^2)^2}\\
\frac{\phi_{0}^3(2-\phi_{0}^2)}{128\pi(1-\phi_{0}^2)^2} & \frac{\phi_{0}^4}{256\pi(1-\phi_{0}^2)^2\gamma_{0}^2}
\end{pmatrix}.
\end{equation*}

\newpage
\textbf{Calculus of $P_2$:}\\

\begin{eqnarray*}
\E\left[\left(Y_{2}u^{*}_{\nabla_{\theta} l_{\theta}}(Y_{1})\right)\left(Y_{2}u^{*}_{\nabla_{\theta} l_{\theta}}(Y_{1})\right)'\right]&=&
\begin{pmatrix}
\E\left[Y^{2}_{2}\left(u^{*}_{\frac{\p l_{\theta}}{\p \phi} }(Y_{1})\right)^{2}\right] & \E\left[Y^{2}_{2}\left(u^{*}_{\frac{\p l_{\theta}}{\p \phi}}(Y_{1})\right)\left(u^{*}_{\frac{\p l_{\theta}}{\p \sigma^{2}} }(Y_{1})\right)\right]\\
\E\left[Y^{2}_{2}\left(u^{*}_{\frac{\p l_{\theta}}{\p \sigma^{2}}}(Y_{1})\right)\left(u^{*}_{\frac{\p l_{\theta}}{\p \phi}}(Y_{1})\right)\right] & \E\left[Y^{2}_{2}\left(u^{*}_{\frac{\p l_{\theta}}{\p \sigma^{2}}}(Y_{1})\right)^{2}\right]
\end{pmatrix}.
\end{eqnarray*}

We have:

\begin{eqnarray}\label{zz}
\left(2\pi \frac{(\gamma_0^2-\sigma_{\varepsilon}^2)}{2} \right)^{-1/2}\E\left[Y^{2}_{2}\left(u^{*}_{\frac{\p l_{\theta}}{\p \phi} }(Y_{1})\right)^{2}\right]&=&\E\left[Y^{2}_{2}\left(\Psi^{\phi_{0}}_{1}Y_{1}+\Psi^{\phi_{0}}_{2}Y^{3}_{1}\right)^{2}\times g_{\left(0, \frac{(\gamma_{0}^{2}-\sigma_{\varepsilon}^2)}{2}\right)}\right]\nonumber\\
&=&\left(\Psi^{\phi_{0}}_{1}\right)^{2}\E\left[Y^{2}_{2}Y^{2}_{1}\times g_{\left(0, \frac{(\gamma^{2}-\sigma_{\varepsilon}^2)}{2}\right)}\right]+\left(\Psi^{\phi_{0}}_{2}\right)^{2}\E\left[Y^{2}_{2}Y^{6}_{1}\times g_{\left(0, \frac{(\gamma_{0}^{2}-\sigma_{\varepsilon}^2)}{2}\right)}\right]\nonumber\\
&& \quad +2\Psi^{\phi_{0}}_{1}\Psi^{\phi_{0}}_{2}\E\left[Y^{2}_{2}Y^{4}_{1}\times g_{\left(0, \frac{(\gamma_{0}^{2}-\sigma_{\varepsilon}^2)}{2}\right)}\right].
\end{eqnarray}

$\newline$
The density of $\Y_{1}$ is $g_{(0, \mathcal{J}_{\theta_0})}$. Then, $g_{(0,\mathcal{J}_{\theta_0})}\times \exp\left(-\frac{y^{2}_{1}}{(\gamma_{0}^2-\sigma_{\varepsilon}^2)}\right)$ is equal to:

\begin{eqnarray*}
&&\frac{1}{2\pi}\frac{1}{\left( (\sigma_{\varepsilon}^2+\gamma_{0}^2 )^{2}-\gamma_{0}^4\phi_{0}^{2} \right)^{1/2}}\exp\left( -\frac{1}{2} \frac{1}{\left( (\sigma_{\varepsilon}^2+\gamma_{0}^2 )^{2}-\gamma_{0}^4\phi_{0}^{2} \right)} \left((\sigma_{\varepsilon}^2+\gamma_{0}^2)(y^{2}_{1}+y^{2}_{2})-2\phi_{0}\gamma_{0}^{2}y_{1}y_{2}\right)\right)\\
&& \quad \times \exp\left( -\frac{1}{2}\frac{2}{(\gamma_{0}^{2}-\sigma_{\varepsilon}^2)} y^{2}_{1}\right)\\
&=&\frac{1}{2\pi}\frac{1}{\left( (\sigma_{\varepsilon}^2+\gamma_{0}^2 )^{2}-\gamma_{0}^4\phi_{0}^{2} \right)^{1/2}}
\times \exp\left(-\frac{1}{2}y^{2}_{1}\left(\frac{2}{(\gamma_{0}^{2}-\sigma_{\varepsilon}^2)}-\frac{(\gamma_{0}^{2}+\sigma_{\varepsilon}^2)}{\left( (\sigma_{\varepsilon}^2+\gamma_{0}^2 )^{2}-\gamma_{0}^4\phi_{0}^{2} \right)}\right)\right)\\
&& \quad \times \exp\left(-\frac{1}{2}y^{2}_{2}\left(\frac{(\gamma_{0}^{2}+\sigma_{\varepsilon}^2)}{\left( (\sigma_{\varepsilon}^2+\gamma_{0}^2 )^{2}-\gamma_{0}^4\phi_{0}^{2} \right)}\right)\right) \exp\left(-\frac{1}{2}y_{1}y_{2}\left(\frac{2\phi_{0}\gamma_{0}^{2}}{\left( (\sigma_{\varepsilon}^2+\gamma_{0}^2 )^{2}-\gamma_{0}^4\phi_{0}^{2} \right)}\right)\right)\\
&=&\frac{1}{2\pi}\frac{1}{\left( (\sigma_{\varepsilon}^2+\gamma_{0}^2 )^{2}-\gamma_{0}^4\phi_{0}^2 \right)^{1/2}}\exp\left( -\frac{1}{2}\left(\tilde{V}_{2}^{-1}\left(y_{2}-\frac{\phi_{0} \gamma_{0}^{2}}{\gamma_0^2 + \sigma_{\varepsilon}^2}y_{1}\right)^{2}\right)\right)\times \exp\left(-\frac{1}{2}y_{1}^{2}\tilde{V}_{1}^{-1}\right),
\end{eqnarray*}
with $\tilde{V}_{1}^{-1}=\frac{2}{(\gamma_{0}^2-\sigma_{\varepsilon}^2)}+\left(\frac{\gamma_{0}^2+\sigma_{\varepsilon}^2}{(\sigma_{\varepsilon}^2+\gamma_{0}^2)^{2}-\gamma_{0}^4\phi_{0}^{2}}\right)\left(1-\frac{\phi_{0}^{2}\gamma_{0}^{4}}{(\gamma_0^2 + \sigma_{\varepsilon}^2)^2}\right)$ and $\tilde{V}_{2}=\frac{(\gamma_{0}^2+\sigma_{\varepsilon}^2)^{2}-\phi_{0}^{2}\gamma_{0}^{4}}{(\gamma_{0}^2+\sigma_{\varepsilon}^2)}$.\\

Then, we obtain:

\begin{eqnarray*}
g_{(0,\mathcal{J}_{\theta_0})}\times \exp\left(-\frac{y^{2}_{1}}{(\gamma_{0}^2-\sigma_{\varepsilon}^2)}\right) &=&\frac{1}{((\sigma_{\varepsilon}^2+\gamma_{0}^2 )^{2}-\gamma_{0}^4\phi_{0}^{2})^{1/2}}\tilde{V}_{1}^{1/2}\tilde{V}_{2}^{1/2}g_{(\phi_{0} \gamma_{0}^{2}y_{1}/(\gamma_0^2 + \sigma_{\varepsilon}^2),\tilde{V}_{2})}(y_{2})g_{(0,\tilde{V}_{1})}(y_{1}).
\end{eqnarray*}

In the following, we set $\mathcal{F}=\frac{1}{(\sigma_{\varepsilon}^2+\gamma_{0}^2 )^{2}-\gamma_{0}^4\phi_{0}^{2}}\tilde{V}_{1}^{1/2}\tilde{V}_{2}^{1/2}$. Now, we can compute the moments:

\begin{eqnarray*}
\left(\Psi^{\phi_{0}}_{1}\right)^{2}\E\left[Y^{2}_{2}Y^{2}_{1}\exp\left(-\frac{Y^{2}_{1}}{(\gamma_{0}^2-\sigma_{\varepsilon}^2)}\right)\right]&=&\left(\Psi^{\phi_{0}}_{1}\right)^{2}\mathcal{F} \int y_{1}^{2}g_{(0,\tilde{V}_{1})}(y_{1}) dy_{1} \int y_{2}^{2}g_{(\phi_{0} \gamma_{0}^{2}y_{1}/(\gamma_0^2 + \sigma_{\varepsilon}^2),\tilde{V}_{2})}(y_{2})dy_{2}\\
&=&\left(\Psi^{\phi_{0}}_{1}\right)^{2}\mathcal{F}\int y_{1}^{2}g_{(0,V_{1})}(y_{1}) dy_{1} \E\left[G^{2}\right] \text{ where G} \sim \mathcal{N}(\phi_{0} \gamma_{0}^{2}y_{1}/(\gamma_0^2 + \sigma_{\varepsilon}^2), \tilde{V}_{2})\\
&=&\left(\Psi^{\phi_{0}}_{1}\right)^{2}\mathcal{F}\int \left(\tilde{V}_{2}y_{1}^{2}+\frac{\phi_{0}^{2}\gamma_{0}^{4}}{(\gamma_0^2 + \sigma_{\varepsilon}^2)^2}y_{1}^{4}\right)g_{(0,\tilde{V}_{1})}(y_{1})dy_{1}\\
&=&\left(\Psi^{\phi_{0}}_{1}\right)^{2}\mathcal{F}\tilde{V}_{2}\tilde{V}_{1}+3\left(\Psi^{\phi_{0}}_{1}\right)^{2}\mathcal{F}\frac{\phi_{0}^{2}\gamma_{0}^{4}}{(\gamma_0^2 + \sigma_{\varepsilon}^2)^2}\tilde{V}_{1}^{2}\\
&=&\left(\Psi^{\phi_{0}}_{1}\right)^{2}\mathcal{F}\tilde{V}_{1}\left(\tilde{V}_{2}+3\frac{\phi_{0}^{2}\gamma_{0}^{4}}{(\gamma_0^2 + \sigma_{\varepsilon}^2)^2}\tilde{V}_{1}\right).
\end{eqnarray*}

In a similar manner, we have:

\begin{eqnarray*}
\left(\Psi^{\phi_{0}}_{2}\right)^{2}\E\left[Y^{2}_{2}Y^{6}_{1}\exp\left(-\frac{Y^{2}_{1}}{(\gamma_{0}^2-\sigma_{\varepsilon}^2)}\right)\right]&=&\left(\Psi^{\phi_{0}}_{2}\right)^{2}\mathcal{F}\int y_{1}^{6}g_{(0,V_{1})}(y_{1}) dy_{1} \E\left[G^{2}\right] \text{ where }G \sim \mathcal{N}(\phi_{0} \gamma_{0}^{2}y_{1}/(\gamma_0^2 + \sigma_{\varepsilon}^2), \tilde{V}_{2})\\
&=&\left(\Psi^{\phi_{0}}_{2}\right)^{2}\mathcal{F}\int \left(\tilde{V}_{2}y_{1}^{6}+\frac{\phi_{0}^{2}\gamma_{0}^{4}}{(\gamma_0^2 + \sigma_{\varepsilon}^2)^2}y_{1}^{8}\right)g_{(0,\tilde{V}_{1})}(y_{1})dy_{1}\\
&=&15\left(\Psi^{\phi_{0}}_{2}\right)^{2}\mathcal{F}\tilde{V}_{2}\tilde{V}_{1}^{3}+105\left(\Psi^{\phi_{0}}_{2}\right)^{2}\mathcal{F}\frac{\phi_{0}^{2}\gamma_{0}^{4}}{(\gamma_0^2 + \sigma_{\varepsilon}^2)^2}\tilde{V}_{1}^{4}\\
&=&15\left(\Psi^{\phi_{0}}_{2}\right)^{2}\mathcal{F}\tilde{V}_{1}^{3}\left(\tilde{V}_{2}+7\frac{\phi_{0}^{2}\gamma_{0}^{4}}{(\gamma_0^2 + \sigma_{\varepsilon}^2)^2}\tilde{V}_{1}\right),
\end{eqnarray*}
and

\begin{eqnarray*}
2\Psi^{\phi_{0}}_{1}\Psi^{\phi_{0}}_{2}\E\left[Y^{2}_{2}Y^{4}_{1}\exp\left(-\frac{Y^{2}_{1}}{(\gamma_{0}^2-\sigma_{\varepsilon}^2)}\right)\right]&=&2\Psi^{\phi_{0}}_{1}\Psi^{\phi_{0}}_{2}\mathcal{F}\int y_{1}^{4}g_{(0,V_{1})}(y_{1}) dy_{1} \E\left[G^{2}\right] \text{ where }G \sim \mathcal{N}(\phi_{0} \gamma_{0}^{2}y_{1}/(\gamma_0^2 + \sigma_{\varepsilon}^2), \tilde{V}_{2})\\
&=&2\Psi^{\phi_{0}}_{1}\Psi^{\phi_{0}}_{2}\mathcal{F}\int \left(\tilde{V}_{2}y_{1}^{4}+\frac{\phi_{0}^{2}\gamma_{0}^{4}}{(\gamma_0^2 + \sigma_{\varepsilon}^2)^2}y_{1}^{6}\right)g_{(0,\tilde{V}_{1})}(y_{1})dy_{1}\\
&=&6\Psi^{\phi_{0}}_{1}\Psi^{\phi_{0}}_{2}\mathcal{F}\tilde{V}_{2}\tilde{V}_{1}^{2}+30\Psi^{\phi_{0}}_{1}\Psi^{\phi_{0}}_{2}\mathcal{F}
\frac{\phi_{0}^{2}\gamma_{0}^{4}}{(\gamma_0^2 + \sigma_{\varepsilon}^2)^2}\tilde{V}_{1}^{3}\\
&=&6\Psi^{\phi_{0}}_{1}\Psi^{\phi_{0}}_{2}\mathcal{F}\tilde{V}_{1}^{2}\left(\tilde{V}_{2}+5\frac{\phi_{0}^{2}\gamma_{0}^{4}}{(\gamma_0^2 + \sigma_{\varepsilon}^2)^2}\tilde{V}_{1}\right).
\end{eqnarray*}

By replacing all the terms of Eq.(\ref{zz}) we obtain:

\begin{eqnarray}\label{a}
\E\left[Y^{2}_{2}\left(u^{*}_{\frac{\p l_{\theta}}{\p \phi} }(Y_1)\right)^{2}\right]&=&\left(\Psi^{\phi_{0}}_{1}\right)^{2}\mathcal{F}\tilde{V}_{1}\left(\tilde{V}_{2}+3\frac{\phi_{0}^{2}\gamma_{0}^{4}}{(\gamma_0^2+\sigma_{\varepsilon}^2)^2}\tilde{V}_{1}\right)+15\left(\Psi^{\phi_{0}}_{2}\right)^{2}\mathcal{F}\tilde{V}_{1}^{3}
\left(\tilde{V}_{2}+7\frac{\phi_{0}^{2}\gamma_{0}^{4}}{(\gamma_0^2 + \sigma_{\varepsilon}^2)^2}\tilde{V}_{1}\right)\nonumber\\
&+&6\Psi^{\phi_{0}}_{1}\Psi^{\phi_{0}}_{2}\mathcal{F}\tilde{V}_{1}^{2}\left(\tilde{V}_{2}+5\frac{\phi_{0}^{2}\gamma_{0}^{4}}{(\gamma_0^2 + \sigma_{\varepsilon}^2)^2}\tilde{V}_{1}\right),
\end{eqnarray}
and

\begin{eqnarray}\label{b}
\E\left[Y^{2}_{2}\left(u^{*}_{\frac{\p l_{\theta}}{\p \sigma^{2}} }(Y_{1})\right)^{2}\right]&=&\left(\Psi^{\sigma_0^{2}}_{1}\right)^{2}\mathcal{F}\tilde{V}_{1}\left(\tilde{V}_{2}+3
\frac{\phi_{0}^{2}\gamma_{0}^{4}}{(\gamma_0^2 + \sigma_{\varepsilon}^2)^2}\tilde{V}_{1}\right)
+15\left(\Psi^{\sigma_0^{2}}_{2}\right)^{2}\mathcal{F}\tilde{V}_{1}^{3}\left(\tilde{V}_{2}+7\frac{\phi_{0}^{2}\gamma_{0}^{4}}{(\gamma_0^2 + \sigma_{\varepsilon}^2)^2}\tilde{V}_{1}\right)\nonumber\\
&+&6\Psi^{\sigma_0^{2}}_{1}\Psi^{\sigma_0^{2}}_{2}\mathcal{F}\tilde{V}_{1}^{2}\left(\tilde{V}_{2}+5\frac{\phi_{0}^{2}\gamma_{0}^{4}}{(\gamma_0^2 + \sigma_{\varepsilon}^2)^2}\tilde{V}_{1}\right),
\end{eqnarray}
and

\begin{eqnarray}
&&\E\left[Y^{2}_{2}\left(u^{*}_{\frac{\p l_{\theta}}{\p \phi} }(Y_{1})\right)\left(u^{*}_{\frac{\p}{\p \sigma^{2}} l_{\theta}}(Y_{1})\right)\right]\nonumber\\
&&=\Psi^{\phi_{0}}_{1}\Psi^{\sigma_0^{2}}_{1}\E\left[Y^{2}_{2}Y^{2}_{1}\times g_{\left(0, \frac{(\gamma_{0}^{2}-\sigma_{\varepsilon}^2)}{2}\right)}\right]+\Psi^{\phi_{0}}_{2}\Psi^{\sigma_0^{2}}_{2}\E\left[Y^{2}_{2}Y^{6}_{1}\times g_{\left(0, \frac{(\gamma_{0}^{2}-\sigma_{\varepsilon}^2)}{2}\right)}\right]\nonumber\\
&&\quad+\Psi^{\phi_{0}}_{1}\Psi^{\sigma_0^{2}}_{2}\E\left[Y^{2}_{2}Y^{4}_{1}\times g_{\left(0, \frac{(\gamma_{0}^{2}-\sigma_{\varepsilon}^2)}{2}\right)}\right]+\Psi^{\phi_{0}}_{2}\Psi^{\sigma_0^{2}}_{1}\E\left[Y^{2}_{2}Y^{4}_{1}\times g_{\left(0, \frac{(\gamma_{0}^{2}-\sigma_{\varepsilon}^2)}{2}\right)}\right]\nonumber\\
&&=\Psi^{\phi_{0}}_{1}\Psi^{\sigma_0^{2}}_{1}\mathcal{F}\tilde{V}_{1}\left(\tilde{V}_{2}+3\frac{\phi_{0}^{2}\gamma_{0}^{4}}{(\gamma_0^2 +\sigma_{\varepsilon}^2)^2}\tilde{V}_{1}\right)+15\Psi^{\phi_{0}}_{2}\Psi^{\sigma_0^{2}}_{2}\mathcal{F}\tilde{V}_{1}^{3}\left(\tilde{V}_{2}
+7\frac{\phi_{0}^{2}\gamma_{0}^{4}}{(\gamma_0^2 + \sigma_{\varepsilon}^2)^2}\tilde{V}_{1}\right)\nonumber\\
&&\quad+3\Psi^{\phi_{0}}_{1}\Psi^{\sigma_0^{2}}_{2}
\mathcal{F}\tilde{V}_{1}^{2}\left(\tilde{V}_{2}+5\frac{\phi_{0}^{2}\gamma_{0}^{4}}{(\gamma_0^2 + \sigma_{\varepsilon}^2)^2}\tilde{V}_{1}\right) +3\Psi^{\phi_{0}}_{2}\Psi^{\sigma_0^{2}}_{1}
\mathcal{F}\tilde{V}_{1}^{2}\left(\tilde{V}_{2}+5\frac{\phi_{0}^{2}\gamma_{0}^{4}}{(\gamma_0^2 + \sigma_{\varepsilon}^2)^2}\tilde{V}_{1}\right) .\label{c}
\end{eqnarray}

$\newline$
\textbf{Calculus of $\C \left( \nabla_{\theta}m_{\theta}(Y_{1}),\nabla_{\theta}m_{\theta}(Y_{j}) \right)$:} We want to compute:

$$\C\left(\nabla_{\theta}m_{\theta}(Y_{1}), \nabla_{\theta}m_{\theta}(Y_{j}) \right)=4\left[\tilde{C}_{j-1}-P_1\right].$$

Since we have already computed the terms of the matrix $P_1$, it remains to compute the terms of the covariance matrix $\tilde{C}_{j-1}$ given by:

\begin{equation*}
\tilde{C}_{j-1}=\E\left[b_{\phi_{0}}(X_1)b_{\phi_{0}}(X_j)\left(\nabla_{\theta}l_{\theta}(X_1)\right)\left(\nabla_{\theta}l_{\theta}(X_j)\right)'\right].
\end{equation*}

For all $j>1$, the pair $(X_1,X_j)$ has a multivariate normal density $g_{(0, \mathcal{W})}$ where $\mathcal{W}$ is given by:

\begin{equation*}
\mathcal{W}=\gamma_{0}^2\begin{pmatrix}
1 &  \phi_{0}^j\\
 \phi_{0}^j & 1\\ 
\end{pmatrix}
\text{ and } 
\mathcal{W}^{-1}=\frac{1}{\gamma_{0}^2(1-\phi_{0}^{2j})}\begin{pmatrix}
1 & - \phi_{0}^j\\
- \phi_{0}^j & 1\\ 
\end{pmatrix}.
\end{equation*}

The density of the couple $(X_1,X_j)$ is:
 
\begin{equation*}
g_{(0, \mathcal{W})}(x_1,x_j) =\frac{1}{2\pi} \det{(\mathcal{W})}^{-1/2} \exp\left(-\frac{1}{2} (x_1,x_j)^{'}\mathcal{W}^{-1}(x_1,x_j)\right).
\end{equation*}

We start by computing:

\begin{equation*}
g_{(0, \mathcal{W})}(x_1,x_j) \times \exp\left(-\frac{1}{2\gamma_{0}^2}\left(x_1^2+x_j^2\right) \right).
\end{equation*}

We have:

\begin{eqnarray*}
&&g_{(0, \mathcal{W})}(x_1,x_j) \times \exp\left(-\frac{1}{2\gamma_{0}^2}(x_1^2+x_j^2) \right)\\
&& = \frac{1}{2\pi}\det{(\mathcal{W})}^{-1/2} \exp\left(-\frac{1}{2(1-\phi_{0}^{2j})\gamma_{0}^2}\left(x_1^2(1-\phi_{0}^{2j})+x_j^2(1-\phi_{0}^{2j})+x_1^2-2\phi_{0}^jx_1x_j+x_j^2\right) \right),\\
&&=\frac{1}{2\pi}\det{(\mathcal{W})}^{-1/2}
\exp\left(-\frac{1}{2(1-\phi_{0}^{2j})\gamma_{0}^2}\left[\left(x_1^2(1-\phi_{0}^{2j})+x_1^2-2\phi_{0}^jx_1x_j\right)+\left(x_j^2(1-\phi_{0}^{2j})+x_j^2\right)\right] \right),\\
&&=\frac{1}{2\pi}\det{(\mathcal{W})}^{-1/2}
\exp\left(-\frac{1}{2(1-\phi_{0}^{2j})\gamma_{0}^2}\left[(2-\phi_{0}^{2j})\left(x_1^2-2\frac{\phi_{0}^j}{(2-\phi_{0}^{2j})}x_1x_j\right)+\left(x_j^2(1-\phi_{0}^{2j})+x_j^2\right)\right] \right),\\
&&=\frac{1}{2\pi}\det{(\mathcal{W})}^{-1/2}
\exp\left(-\frac{(2-\phi_{0}^{2j})}{2(1-\phi_{0}^{2j})\gamma_{0}^2}\left(x_1 - \frac{\phi_{0}^j x_j}{(2-\phi_{0}^{2j})}\right)^2 \right)\\
&&\quad \times \exp\left(-\frac{(2-\phi_{0}^{2j})}{2(1-\phi_{0}^{2j})\gamma_{0}^2}x_j^2\left(1-\frac{\phi_{0}^{2j}}{(2-\phi_{0}^{2j})^2}\right)\right).
\end{eqnarray*}

For all $j>1$, we define:

\begin{equation*}
V_j=\frac{\gamma_{0}^2(1-\phi_{0}^{2j})(2-\phi_{0}^{2j})}{(2-\phi_{0}^{2j})^2-\phi_{0}^{2j}} \text{ and } \mathcal{V} = \frac{\gamma_{0}^2(1-\phi_{0}^{2j})}{(2-\phi_{0}^{2j})}.
\end{equation*}

We can rewrite:

\begin{eqnarray*}
&&g_{(0, \mathcal{W})}(x_1,x_j) \times \exp\left(-\frac{1}{2\gamma_{0}^2}(x_1^2+x_j^2) \right)\\
&&=\frac{V_j^{1/2}}{\gamma_{0}} \frac{1}{\sqrt{2\pi}V_j^{1/2}} \exp\left(-\frac{1}{2V_j} x_j^2 \right) \times \frac{1}{(2-\phi_{0}^{2j})^{1/2}\sqrt{2\pi}\mathcal{V}^{1/2}}\exp\left( -\frac{1}{2 \mathcal{V}}\left(x_1 - \frac{\phi_{0}^ix_j}{(2-\phi_{0}^{2j})}\right)^2\right).
\end{eqnarray*}

So, by Fubini's Theorem, we obtain:

\begin{eqnarray*}
&&\E\left[X_1^2 X_j^2 \exp\left(-\frac{1}{2\gamma_{0}^2}\left(X_1^2+X_j^2\right) \right)\right]\\
&&=\frac{1}{\gamma_{0}}V_j^{1/2}\int x_j^2\frac{1}{\sqrt{2\pi}V_j^{1/2}} \exp\left(-\frac{1}{2V_j} x_j^2 \right) \int x_1^2 \frac{(2-\phi_{0}^{2j})^{-1/2}}{\sqrt{2\pi}\mathcal{V}^{1/2}}\exp\left( -\frac{1}{2\mathcal{V}}\left(x_1 - \frac{\phi_{0}^jx_j}{(2-\phi_{0}^{2j})}\right)^2\right)dx_1dx_j,\\
&&=\frac{1}{\gamma_{0}}V_j^{1/2}\int x_j^2\frac{1}{\sqrt{2\pi}V_j^{1/2}} \exp\left(-\frac{1}{2V_j} x_j^2 \right)(2-\phi_{0}^{2j})^{-1/2} \E[G^2]dx_j,
\end{eqnarray*}
where $\displaystyle G \sim \mathcal{N}\left(\frac{\phi_{0}^jx_j}{(2-\phi_{0}^{2j})},\mathcal{V} \right)$. Thus,
$\E[G^2] = \mathcal{V} + \left(\frac{\phi_{0}^jx_j}{(2-\phi_{0}^{2j})}\right)^2$. We obtain:

\begin{eqnarray}\label{c1i}
&&\E\left[X_1^2 X_j^2 \exp\left(-\frac{1}{2\gamma_{0}^2}\left(X_1^2+X_j^2\right) \right)\right]\nonumber\\
&&=(2-\phi_0^{2j})^{-1/2}\frac{V_j^{1/2}}{\gamma_{0}}(2-\phi_{0}^{2j})^{-1/2}\int x_j^2\left( \mathcal{V} + \left(\frac{\phi_{0}^jx_j}{(2-\phi_{0}^{2j})}\right)^2\right) \frac{1}{\sqrt{2\pi}V_j^{1/2}} \exp\left(-\frac{1}{2V_j} x_j^2 \right) dx_j\nonumber\\
&&=\frac{V_j^{1/2}}{\gamma_{0}} (2-\phi_{0}^{2j})^{-1/2}\left(\mathcal{V} \E[G_j^2] + \frac{\phi_{0}^{2j}}{(2-\phi_{0}^{2j})^2 \gamma_0}\E[G_j^4]\right)\nonumber\\
&&=\frac{V_j^{3/2}}{\gamma_{0}}(2-\phi_{0}^{2j})^{-1/2} \left(\mathcal{V} + \frac{3\phi_{0}^{2j}V_j}{(2-\phi_{0}^{2j})^2} \right)\nonumber\\
&&=\tilde{c}_1(j),
\end{eqnarray}
where $G_j \sim \mathcal{N}\left(0,V_j \right)$. Additionally, we have:

\begin{eqnarray} \label{c2i}
&&\E\left[X_1^2 X_j^4 \exp\left(-\frac{1}{2\gamma_{0}^2}\left(X_1^2+X_j^2\right) \right)\right]\nonumber\\
&&=\frac{ V_j^{1/2}}{\gamma_{0}}(2-\phi_{0}^{2j})^{-1/2}\mathcal{V} \E[G_j^4] + V_j^{1/2} \frac{(2-\phi_{0}^{2j})^{-1/2}\phi_{0}^{2j}}{(2-\phi_{0}^{2j})^2}\E[G_j^6],\nonumber\\
&&=\frac{3V_j^{5/2}}{\gamma_{0}}(2-\phi_{0}^{2j})^{-1/2}  \left(\mathcal{V} + 5\frac{\phi_{0}^{2j}V_j}{(2-\phi_{0}^{2j})^2} \right),\nonumber\\
&&=\tilde{c}_2(j).
\end{eqnarray}

Now, we are interested in $\E\left[X_1^4 X_j^4 \exp\left(-\frac{1}{2\gamma_{0}^2}(X_1^2+X_j^2) \right)\right]$. In a similar manner, we obtain:

\begin{eqnarray}\label{mom4}
&&\E\left[X_1^4 X_j^4 \exp\left(-\frac{1}{2\gamma_{0}^2}\left(X_1^2+X_j^2 \right)\right)\right]\nonumber\\
&&=\frac{V_j^{1/2}}{\gamma_{0}}\int x_j^4\frac{1}{\sqrt{2\pi}V_j^{1/2}} \exp\left(-\frac{1}{2V_j} x_j^2 \right)(2-\phi_{0}^{2j})^{-1/2} \E[G^4]dx_j,
\end{eqnarray}
where $\displaystyle G \sim \mathcal{N}\left(\frac{\phi_{0}^jx_j}{(2-\phi_{0}^{2j})},\mathcal{V} \right)$. We use the fact that the moments of a random variable $X\sim \mathcal{N}(\mu,v)$ are:

\begin{eqnarray*}
&&\E\left[X^n\right]=(n-1)v\E\left[X^{n-2}\right]+\mu\E\left[X^{n-1}\right]\\
&&\E[G^4] =3 \mathcal{V} \E[G^2]+ \left(\frac{\phi_{0}^jx_j}{(2-\phi_{0}^{2j})}\E[G^3]\right)\\
&& \qquad \ \ =3 \mathcal{V}^{2}+(4\mathcal{V}+2)\frac{\phi_{0}^{2j}x_j^2}{(2-\phi_{0}^{2j})^2}+\frac{\phi_{0}^{4j}x_j^4}{(2-\phi_{0}^{2j})^4}.
\end{eqnarray*}

By replacing $\E[G^4]$ in equation (\ref{mom4}), we have:

\begin{eqnarray}
&&\E\left[X_1^4 X_j^4 \exp\left(-\frac{1}{2\gamma_{0}^2}\left(X_1^2+X_j^2 \right)\right)\right]\nonumber\\
&&=\frac{3(2-\phi_{0}^{2j})^{-1/2}}{\gamma_{0}}V_j^{5/2}\left[3\mathcal{V}^2+5V_j(4\mathcal{V}+2)\frac{\phi_{0}^{2j}}{(2-\phi_{0}^{2j})^2}+35V_j^{2}\frac{\phi_{0}^{4j}}{(2-\phi_{0}^{2j})^4}\right],\nonumber\\
&&=\tilde{c}_3(j). \label{c3i}
\end{eqnarray}

For all $j>1$, the matrix $\tilde{C}_{j-1}$ is given by:

\begin{equation*}
\tilde{C}_{j-1}=\frac{\phi_{0}^2}{2\pi\gamma_{0}^2}
\begin{pmatrix}
a_1^2  \tilde{c}_1(j)+2a_1a_2\tilde{c}_2(j)+a_2^2\tilde{c}_3(j)& a_1b_1\tilde{c}_1(j) + a_1b_2+a_2b_1\tilde{c}_2(j)+ a_2b_2 \tilde{c}_3(j)\\
a_1b_1 \tilde{c}_1(j) + a_1b_2+a_2b_1\tilde{c}_2(j)+ a_2b_2 \tilde{c}_3(j) &  b_1^2\tilde{c}_1(j)+2b_1b_2 \tilde{c}_2(j) + b_2^2\tilde{c}_3(j)\\
\end{pmatrix},
\end{equation*}
where the coefficients $\tilde{c}_1(j)$, $\tilde{c}_2(j)$, and $\tilde{c}_3(j)$ are given by (\ref{c1i}), (\ref{c2i}) and (\ref{c3i}).\\

Finally, by replacing the terms $a_1$, $a_2$, $b_1$ and $b_2$, the matrix $\tilde{C}_{j-1}$ is equal to:

\begin{equation*}
\tilde{C}_{j-1}=A
\begin{pmatrix}
(4\phi_{0}^4-4\phi_{0}^2+1)  \tilde{c}_1(j) + \frac{2\phi_{0}^2(1-2\phi_{0}^2)}{\gamma_{0}^2}\tilde{c}_2(j)+\frac{\phi_{0}^4}{\gamma_{0}^4}\tilde{c}_3(j)& \frac{\phi_{0}(2\phi_{0}^2-1)}{2\gamma_{0}^2}\tilde{c}_1(j) + \frac{\phi_{0}(1-3\phi_{0}^2)}{2\gamma_{0}^4}\tilde{c}_2(j)+ \frac{\phi_{0}^3}{2\gamma_{0}^6} \tilde{c}_3(j)\\
\frac{\phi_{0}(2\phi_{0}^2-1)}{2\gamma_{0}^2}\tilde{c}_1(j) + \frac{\phi_{0}(1-3\phi_{0}^2)}{2\gamma_{0}^4}\tilde{c}_2(j)+ \frac{\phi_{0}^3}{2\gamma_{0}^6} \tilde{c}_3(j) & \frac{\phi_{0}^2}{4\gamma_{0}^4} \tilde{c}_1(j)+\frac{-\phi_{0}^2} {2\gamma_{0}^6}\tilde{c}_2(j) + \frac{\phi_{0}^2}{4\gamma_{0}^8} \tilde{c}_3(j)\\
\end{pmatrix},
\end{equation*}
where $A=\frac{\phi_{0}^2}{2\pi\gamma_{0}^2(1-\phi_0^2)^2}$.\\
$\newline$
\textbf{Asymptotic behaviour of the covariance matrix $\Omega_{j-1}(\theta_0)$:} By  the stationary assumption $|\phi_{0}|<1$, the limits of the following terms are:
\begin{equation*}
\lim_{j\rightarrow \infty}V_{j}=\frac{\gamma_{0}^2}{2} \text{ and } \lim_{j \rightarrow  \infty}\mathcal{V}=\frac{\gamma_{0}^2}{2},
\end{equation*}
and

\begin{equation*}
\lim_{j\rightarrow \infty}\tilde{c}_{1}(j)=\frac{\gamma_{0}^4}{8}, 
\lim_{j\rightarrow \infty}\tilde{c}_{2}(j)=\frac{3\gamma_{0}^6}{16},
\lim_{j\rightarrow \infty}\tilde{c}_{3}(j)=\frac{9\gamma_{0}^8}{32}.
\end{equation*}

Therefore,

\begin{eqnarray*}
&&\lim_{j\rightarrow \infty}\tilde{C}_{j-1}=\begin{pmatrix}
\frac{\phi_{0}^2\gamma_{0}^2(2-\phi_{0}^2)^2}{64\pi(1-\phi_{0}^2)^2} & \frac{\phi_{0}^3(2-\phi_{0}^2)}{128\pi(1-\phi_{0}^2)^2}\\
$\newline$
\frac{\phi_{0}^3(2-\phi_{0}^2)}{128\pi(1-\phi_{0}^2)^2} & \frac{\phi_{0}^4}{256\pi(1-\phi_{0}^2)^2\gamma_{0}^2}
\end{pmatrix}=P_1.
\end{eqnarray*}

We obtain:

\begin{eqnarray*}
\lim_{j\rightarrow \infty}{\rm Cov}\left(\nabla_{\theta}m_{\theta_0}(Y_{1}),\nabla_{\theta}m_{\theta_0}(Y_{j})\right)&=&4 \lim_{j\rightarrow \infty}(\tilde{C}_{j-1}-P_1)\\
&=&0_{\mathcal{M}_{2\times2}}.
\end{eqnarray*}

We conclude that the covariance between the two vectors $\nabla_{\theta}m_{\theta_0}(Y_{1}), \nabla_{\theta}m_{\theta_0}(Y_{j})$ vanishes when the lag between the two observations $Y_{1}$ and $Y_{j}$ goes to the infinity. \\

\textbf{Calculus of $V_{\theta_{0}}$:} The Hessian matrix $V_{\theta_0}$ is given in Eq. (\ref{hes}).\\
\end{proof}

\subsection{The SV model}\label{AppSV}

\subsubsection{Contrast function}

The $\mathbb{L}_2$-norm and the Fourier transform of the function $l_{\theta}$ are the same as the Gaussian AR(1) model. The only difference is the law of the measurement noise which is a log-chi-square for the log-transform SV model.\\ 

Consider the random variable $\varepsilon=\beta\log(X^2)-\tilde{\mathcal{E}}$ where $\tilde{\mathcal{E}}=\beta\E[\log(X^{2})]$ such that $\varepsilon$ is centered. The random variable $X$ is a standard Gaussian random. The Fourier transform of $\varepsilon$ is given by:

\begin{eqnarray*}
\E\left[\exp\left(i\varepsilon y\right)\right]&=&\exp\left(-i\tilde{\mathcal{E}}y\right)\E\left[X^{2i\beta y}\right]\\
&=&\exp\left(-i\tilde{\mathcal{E}}y\right)\frac{1}{\sqrt{2\pi}}\int_{-\infty}^{+\infty} x^{2i\beta y}\exp\left(-\frac{x^2}{2}\right)dx
\end{eqnarray*}
By a change of  variable $z=\frac{x^2}{2}$, one has:

\begin{eqnarray*}
\E\left[\exp\left(i \varepsilon y\right)\right]&=&\exp\left(-i\tilde{\mathcal{E}}y\right)\frac{2^{i\beta y}}{\sqrt{\pi}}\underbrace{\int_{0}^{+\infty} z^{i\beta y-\frac{1}{2}}e^{-z}dz}_{\Gamma\left(\frac{1}{2}+i\beta y\right)}\\
&=&\exp\left(-i\tilde{\mathcal{E}}y\right)\frac{2^{i\beta y}}{\sqrt{\pi}}\Gamma\left(\frac{1}{2}+i\beta y\right),
\end{eqnarray*}

and the expression (\ref{contraste_application2}) of the contrast function follows with $u_{l_{\theta}}(y)=\frac{1}{2\sqrt{\pi}}\left(\frac{-i\phi y\gamma^2\exp\left(\frac{-y^2}{2}\gamma^2\right)}{\exp\left(-i\tilde{\mathcal{E}}y\right)2^{i\beta y}\Gamma\left(\frac{1}{2}+i\beta y\right)}\right)$.\\

\subsubsection{Checking assumption of Theorem \ref{MR}}

\emph{Regularity conditions:} The proof is essentially the same as for the Gaussian case since the functions $l_{\theta}(x)$ and $\mathbf{P}m_{\theta}$ are the same. We need only to check the assumptions \textbf{(C)} and \textbf{(T)}. These assumptions are satisfied since Fan (see \cite{fan2}) showed that the noises $\varepsilon_i$ have a Fourier transform $f^{*}_{\varepsilon}$ which satisfies :

\begin{equation*}
 |f^{*}_{\varepsilon}(x)|=\sqrt{2}\exp\left(-\frac{\pi}{2} |x|\right)\left(1+O\left(\frac{1}{|x|}\right)\right), \quad |x|\rightarrow \infty,             
\end{equation*}
which means that $f_{\varepsilon}$ is super-smooth in its terminology. Furthermore, by the compactness of the parameter space $\Theta$ and as the functions $l^{*}_{\theta}$, and for $j,k \in \left\{1, 2\right\}$, the functions $(\frac{\p l_{\theta}}{\p \theta_j })^*$  $(\frac{\p^{2} l_{\theta}}{\p\theta_j \p\theta_k})^{*}$, have the following form $C_1(\theta)P(x) \exp\left(-C_2(\theta)x^{2}\right)$ where $C_1(\theta)$ and $C_2(\theta)$ are two constants well defined in the parameter space $\Theta$ with $C_2(\theta)>0$, we obtain: 

\begin{equation*}
\left\lbrace\begin{array}{ll}
\E\left(\left|Y_{2}u^*_{\nabla_{\theta}l_{\theta}}(Y_{1}) \right|^{2+\delta}\right)<\infty \qquad\qquad \text{ for some } \delta >0,\\
\E\left(\sup_{\theta \in \mathcal{U}}\left\|Y_{2}u^{*}_{\nabla_{\theta}^{2}l_{\theta}}(Y_{1})\right\|\right)<\infty \qquad \text{ for some neighbourhood } \mathcal{U} \text{ of } \theta_0 .
\end{array}
\right.
\end{equation*}

\subsubsection{Expression of the Covariance matrix:}

 As, the functions $l_{\theta}(x)$ and $\mathbf{P}m_{\theta}$ are the same for the two models, the expressions of the matrix $V_{\theta_0}$ and $\Omega_{j}(\theta_0)$ are given in Lemma \ref{hessienne_application}. We need only to use an estimator of $P_{2}=\E[Y^{2}_{2}(u^{*}_{\nabla l_{\theta}}(Y_{1}))^2]$ since we can just approximate $u^{*}_{\nabla l_{\theta}}(y)$. A natural and consistent estimator of $P_2$ is given by:  

\begin{equation}\label{est_P2}
 \widehat{P}_{2}=\frac{1}{n}\sum_{i=1}^{n-1}\left(Y^{2}_{i+1}(u^{*}_{\nabla l_{\theta}}(Y_i))^2\right),
\end{equation}

\begin{remark}
 In some models, the covariance matrix $\Omega_{j}(\hat{\theta}_n)$ cannot be explicitly computable. We refer the reader to \cite{Fy00} chapter 6 Section 6.6 p.408 for this case.\\
 \end{remark}

\section{EM algorithm}

We first refer to \cite{dempster} for general details on the EM algorithm. The EM algorithm is an iterative procedure for maximizing the log-likelihood $l(\theta)=\log(\P_{\theta}(Y_{1:n}))$. Suppose that after the $k^{th}$ iteration, the estimate for $\theta$ is given by $\theta_k$. Since the objective is to maximize $l(\theta)$, we want to compute an updated $\theta$ such that:
\begin{equation*}
l(\theta)>l(\theta_k)
\end{equation*}

Hidden variables can be introduced for making the ML estimation tractable. Denote the hidden random variables $U_{1:n}$ and a given realization by $u_{1:n}$. The total probability $\P_{\theta}(Y_{1:n})$ can be written as:

\begin{equation*}
\P_{\theta}(Y_{1:n})=\sum_{u_{1:n}}^{}\P_{\theta}(Y_{1:n}\vert u_{1:n})\P_{\theta}(u_{1:n})
\end{equation*}Hence,

\begin{eqnarray}
l(\theta)-l(\theta_k)&=&\log(\P_{\theta}(Y_{1:n}))-\log(\P_{\theta_k}(Y_{1:n}))\nonumber\\
&=&\log\left( \sum_{u_{1:n}}^{}\P_{\theta}(Y_{1:n}\vert u_{1:n})\P_{\theta}(u_{1:n})\right)-\log(\P_{\theta_k}(Y_{1:n}))\nonumber\\
&=&\log\left( \sum_{u_{1:n}}^{}\P_{\theta}(Y_{1:n}\vert u_{1:n})\P_{\theta}(u_{1:n})\frac{\P_{\theta_k}(u_{1:n}\vert Y_{1:n})}{\P_{\theta_k}(u_{1:n}\vert Y_{1:n})}\right)-\log(\P_{\theta_k}(Y_{1:n}))\nonumber\\
&=&\log\left( \sum_{u_{1:n}}^{} \P_{\theta_k}(u_{1:n}\vert Y_{1:n})\frac{\P_{\theta}(Y_{1:n}\vert u_{1:n})\P_{\theta}(u_{1:n}) }{\P_{\theta_k}(u_{1:n}\vert Y_{1:n})}\right)-\log(\P_{\theta_k}(Y_{1:n}))\label{densit1}\\
&\geq& \sum_{u_{1:n}}^{} \P_{\theta_k}(u_{1:n}\vert Y_{1:n})\log\left( \frac{\P_{\theta}(Y_{1:n}\vert u_{1:n})\P_{\theta}(u_{1:n}) }{\P_{\theta_k}(u_{1:n}\vert Y_{1:n})}\right)-\log(\P_{\theta_k}(Y_{1:n}))\label{densit2}\\
&=&\sum_{u_{1:n}}^{} \P_{\theta_k}(u_{1:n}\vert Y_{1:n})\log\left( \frac{\P_{\theta}(Y_{1:n}\vert u_{1:n})\P_{\theta}(u_{1:n}) }{\P_{\theta_k}(u_{1:n}\vert Y_{1:n})}\right)-\log(\P_{\theta_k}(Y_{1:n}))\sum_{u_{1:n}}^{} \P_{\theta_k}(u_{1:n}\vert Y_{1:n}) \label{densit3}\\
&=&\sum_{u_{1:n}}^{} \P_{\theta_k}(u_{1:n}\vert Y_{1:n})\log\left( \frac{\P_{\theta}(Y_{1:n}\vert u_{1:n})\P_{\theta}(u_{1:n}) }{\P_{\theta_k}(u_{1:n}\vert Y_{1:n})\P_{\theta_k}(Y_{1:n})}\right)\nonumber\\
&=&\Delta(\theta, \theta_k).\nonumber
\end{eqnarray} In going from Eq.(\ref{densit1}) to Eq.(\ref{densit2}) we use the Jensen inequality: $\log  \sum_{i=1}^{n} \lambda_i x_i \geq \sum_{i=1}^{n} \lambda_i \log(x_i )$ for constants $\lambda_i \geq 0$ with $\sum_{i=1}^{n} \lambda_i=1$. And in going from Eq.(\ref{densit2}) to Eq.(\ref{densit3})  we use the fact that $\sum_{u_{1:n}}^{} \P_{\theta_k}(u_{1:n}\vert Y_{1:n})=1$. Hence, 

\begin{equation*}
l(\theta)\geq l(\theta_k)+\Delta(\theta, \theta_k)=\mathcal{L}(\theta, \theta_k) \text{ and } \Delta(\theta, \theta_k) =0 \text{ for } \theta=\theta_k
\end{equation*} The function $\mathcal{L}(\theta, \theta_k)$ is bounded by the log-likelihood function $l(\theta)$ and they are equal when $\theta=\theta_k$. Consequently, any $\theta$ which increases $\mathcal{L}(\theta, \theta_k)$ will increases $l(\theta)$. The EM algorithm selects $\theta$ such that $\mathcal{L}(\theta, \theta_k)$ is maximized. We denote this updated value $\theta_{k+1}$. Thus,

\begin{eqnarray}
\theta_{k+1}&=&\arg\max_{\theta}\left\{l(\theta_k)+ \sum_{u_{1:n}}^{} \P_{\theta_k}(u_{1:n}\vert Y_{1:n})\log\left( \frac{\P_{\theta}(Y_{1:n}\vert u_{1:n})\P_{\theta}(u_{1:n}) }{\P_{\theta_k}(u_{1:n}\vert Y_{1:n})\P_{\theta_k}(Y_{1:n})}\right) \right\}\nonumber\\
&=&\arg\max_{\theta}\left\{\sum_{u_{1:n}}^{} \P_{\theta_k}(u_{1:n}\vert Y_{1:n})\log\P_{\theta}(Y_{1:n}\vert u_{1:n})\P_{\theta}(u_{1:n})  \right\} \text{ if we drop the terms which don't depend on } \theta.\nonumber\\
&=&\arg\max_{\theta} \left\{\E[\log\P_{\theta}(Y_{1:n}\vert u_{1:n})\P_{\theta}(u_{1:n})]\right\}  \text{ where the expectation is according to } \P_{\theta_k}(u_{1:n}\vert Y_{1:n}).\label{maxim}
\end{eqnarray} 

\subsection{Simulated Expectation Maximization Estimator}\label{siem}
Here, we describe the SIEMLE proposed by Kim, Shepard and Chib \cite{shep} for the SV model, these authors retain the linear log-transform model given in (\ref{svappl}). However, instead of approximating the log-chi-square distribution of $\varepsilon_i$ with a Gaussian distribution, they approximate $\varepsilon_i$ by a mixture of seven Gaussian. The distribution  of the noise is given by:

\begin{eqnarray*}
f_{\varepsilon_i}(x)&\approx& \sum_{j=1}^{7}q_j \times g_{(m_j,v_j^2)}(x),\\
&\approx& \sum_{j=1}^{7}q_jf_{\varepsilon_i|s_i=j}(x)
\end{eqnarray*}
where $g_{(m,v)}(x)$ denotes the Gaussian distribution of $\varepsilon_i$ with mean $m$ and variance $v$, and $f_{\varepsilon_i|s_i=j}(x)$ is a Gaussian distribution conditional to an indicator variable $s_i$  at time $i$ and the variables $q_j, j=1\cdots, 7$ are the given weights attached to each component and such that $\sum_{j=1}^{7}q_j=1$. Note that, most importantly, given the indicator variable $s_i$ at each time $i$, the log-transform model is Gaussian. That is:
\begin{equation*}
f_{\theta}(Y_i|s_i=j,X_i)\sim g_{(X_i+m_j, v_j^2)}.
\end{equation*}

Then, conditionally to the indicator variable $s_i$, the SV model becomes a Gaussian state-space model and the Kalman filter can be used in the SIEMLE in order to compute the log-likelihood function given by:

\begin{equation*}
\log f_{\theta}(Y_{1:n}|s_{1:n})=
-\frac{n}{2}\log(2\pi)-\frac{1}{2}\sum_{i=1}^{n}\log F_i-\frac{1}{2}\sum_{i=1}^{n}\frac{\nu_{i}^{2}}{F_i},
\end{equation*} with $\nu_i=(Y_i-\hat{Y}_i^{-}-m_{s_i})$ and $F_i=\V_{\theta}[\nu_i]=P_i^{-}+v^{2}_{s_i}$. The quantities $\hat{Y}_i^{-}=\E_{\theta}[Y_i| Y_{1:i-1}]$ and $P_i^{-}=\V_{\theta}[(X_i-\hat{X}_{i}^{-})^{2}]$ are computed by the Kalman filter.\\

Hence, if we consider that the missing data $u_{1:n}$ for the EM correspond to the indicator variables $s_{1:n}$, then according to Eq.(\ref{maxim}) and since $f(s_{1:n})$ do not depend on $\theta$, the Maximization step is:

\begin{equation*}
\theta_{k+1}=\arg\max_{\theta} \left\{\E[\log\P_{\theta}(Y_{1:n}|s_{1:n})]\right\}=\arg\max_{\theta}Q(\theta,\theta_k)
\end{equation*} where the expectation is according to $\P_{\theta_k}(s_{1:n}\vert Y_{1:n})$. Nevertheless, for the SV model, the problem with the EM algorithm is that the density $f_{\theta}(s_{1:n}|Y_{1:n})$ is unknown. The main idea consists in introducing a Gibbs algorithm to obtain $\tilde{M}$ draws $s^{(1)}_{1:n},\cdots, s^{(\tilde{M})}_{1:n}$ from the law $f_{\theta}(s_{1:n}|Y_{1:n})$. Hence, the objective function $Q(\theta,\theta_{k})$ is approximated by:
\begin{equation*}
\tilde{Q}(\theta,\theta_{k})=\frac{1}{\tilde{M}}\sum_{l=1}^{\tilde{M}}\log f_{\theta}(Y_{1:n}|s_{1:n}^{(l)})
\end{equation*} 

Then, the simulated EM algorithm for the SV model is as follows: Let $C>0$ be a threshold to stop the algorithm and $\theta_{k}$ a given arbitrary value of the parameter. While $|\theta_{k}-\theta_{k-1}|>C,$ \\

\begin{enumerate}
\item Apply the Gibbs sampler as follows:\\

\textbf{The Gibbs Sampler:} Choose arbitrary starting values $X_{1:n}^{(0)}$, and let $l=0$.
\begin{enumerate}
\item Sample  $s_{1:n}^{(l+1)}\sim f_{\theta_{k}}(s_{1:n}|Y_{1:n},X_{1:n}^{(l)})$.
\item Sample $X_{1:n}^{(l+1)}\sim f_{\theta_{k}}(X_{1:n} |Y_{1:n} ,s_{1:n}^{(l+1)})$.
\item Set $l=l+1$ and goto (a).
\end{enumerate}

\item $\theta_{k+1}=\arg\max_{\theta} \tilde{Q}(\theta,\theta_{k})$.
\end{enumerate}

Step (a): to sample the vector $s_{1:n}$ from its full conditional density, we sample each $s_i$ independently. We have:

\begin{equation*}
f_{\theta_{k}}(s_{1:n}|Y_{1:n},X_{1:n})=\prod_{r=1}^{n}f_{\theta_{k}}(s_{r}|Y_{r},X_{r})\propto \prod_{r=1}^{n}f_{\theta_{k}}(Y_{r}|s_{r},X_{r})f(s_r),
\end{equation*}
and $f_{\theta_{k}}(Y_{r}|s_{r}=j,X_{r})\propto g_{(X_{r}+m_{j}, v_{j}^2)}$ for $j=1\cdots,7.$ And the step (b) of the Gibbs sampler is conducted by the Kalman filter since the model is Gaussian.\\

\begin{acknowledgements}
I thank my co-director Patricia Reynaud-Bouret for her idea about this paper and for her help and generosity. I thank also my director Fr{\'e}d{\'e}ric Patras for his supervisory throughout this paper. I thank him for his careful reading of the paper and his large comments. I would like to thank F. Comte, A. Samson, N. Chopin, M. Miniconi and F. Pelgrin for their suggestions and their interest about this framework.\\
I thank also the referees for careful reading and constructive suggestions which were helpful in improving substantially this paper.
\end{acknowledgements}

\bibliographystyle{alpha}      
\bibliography{HMM_ESTI}   

\end{document}